\newtheorem{theorem}{Theorem}
\newtheorem{definition}{Definition}
\newtheorem{lemma}{Lemma}
\newtheorem{corollary}{Corollary}
\definecolor{ForestGreen}{rgb}{0.1333,0.5451,0.1333}
\numberwithin{equation}{section}
\title{
Computing Lewis Weights to High Precision 
} 
\date{}
\author{
			Maryam Fazel\thanks{\texttt{mfazel@uw.edu}. University of Washington.} 
			\and
			Yin Tat Lee\thanks{\texttt{yintat@uw.edu}. University of Washington.}
			\and
			Swati Padmanabhan\thanks{\texttt{pswati@uw.edu}. University of Washington.}
			\and
		    Aaron Sidford\thanks{\texttt{sidford@stanford.edu}. Stanford University}
}
\begin{document}

\begin{titlepage}
\maketitle
\begin{abstract}%
We present an algorithm for computing approximate $\ell_p$ Lewis weights to high precision. Given a full-rank $\mathbf{A} \in \mathbb{R}^{m \times n}$ with $m \geq n$ and a scalar $p>2$, our algorithm computes $\epsilon$-approximate $\ell_p$ Lewis weights of $\mathbf{A}$ in  $\widetilde{O}_p(\log(1/\epsilon))$ iterations; the cost of each iteration is linear in the input size plus the cost of computing the leverage scores of  $\mathbf{D}\mA$ for diagonal $\mathbf{D} \in \mathbb{R}^{m \times m}$. Prior to our work, such a computational complexity was known only for $p \in (0, 4)$ \cite{CohenPeng2015}, and combined with this result, our work yields the first polylogarithmic-depth polynomial-work algorithm for the problem of computing $\ell_p$ Lewis weights to high precision for \emph{all} constant $p > 0$. An important consequence of this result is also the first polylogarithmic-depth polynomial-work algorithm for computing a nearly optimal self-concordant barrier for a polytope.

\end{abstract}
\thispagestyle{empty}
\end{titlepage}

\newpage

\section{Introduction to Lewis Weights}\label{sec:intro}

In this paper, we study the problem of computing the $\ell_p$ Lewis weights\footnote{From hereon, we refer to these simply as ``Lewis weights'' for brevity.} of a matrix. 
\begin{definition}\cite{lewis1978finite, CohenPeng2015}\label{def-lewisweights}
Given a full-rank matrix $\mA\in \R^{m\times n}$ with $m \geq n$ and a scalar  $p \in (0, \infty)$, the  Lewis weights of $\mA$ are the entries of the unique\footnote{Existence and uniqueness was first proven by D.R.Lewis \cite{lewis1978finite}, after whom the weights are named.} vector $\lw\in \R^m$ satisfying the equation 
\[\lw_i^{2/p} = a_i^\top (\mA^\top \overline{\mW}^{1-2/p} \mA)^{-1} a_i \text{  for all $i\in [m]$}, \numberthis\label{eq_def_lewis_weights}\] where $a_i$ is the $i$'th row of matrix $\mA$ and $\overline{\mW}$ is the diagonal matrix with vector $\lw$ on the diagonal.
\end{definition}

\textbf{Motivation.} We contextualize our problem with a simpler, geometric notion. Given a set of $m$ points $\{a_i\}_{i=1}^m \in \R^n$ (the rows of the preceding matrix $\mA\in \R^{m \times n}$), their \emph{John ellipsoid} \cite{john1948extremum}  is the minimum\footnote{The John ellipsoid may also refer to the maximal volume ellipsoid enclosed by the set $\{x: |x^\top a_i| \leq 1\}$, but in this paper, we use the former definition.} volume ellipsoid enclosing them. This ellipsoid finds use  across experiment design and computational geometry \cite{Todd2016} and is central to certain cutting-plane methods \cite{vaidya1989new, lee2015faster}, an algorithm fundamental to mathematical optimization (Section~\ref{sec:appsRelatedWork}). It turns out that the John ellipsoid of a set of points $\{a_i\}_{i=1}^m \in \R^n$ is expressible \cite{BVbook} as the solution to the following convex program, with the objective being a stand-in for the volume of the ellipsoid and the constraints encoding the requirement that each given point $a_i$ lie \emph{within} the ellipsoid: 
\[ \mbox{minimize}_{\mM \succeq 0} \det(\mM)^{-1}, \textup{ subject to } a_i^\top \mM a_i \leq 1, \textup{ for all } i \in [m]. \numberthis\label{opt-johnell}
\] The problem \eqref{opt-johnell} may be generalized by the following convex program  \cite{wojtaszczyk1996banach, CohenPeng2015}, the generalization immediate from substituting $p = \infty$ in \eqref{opt-lewisell}:
\[ \mbox{minimize}_{\mM \succeq 0} \det(\mM)^{-1}, \textup{ subject to } \sum_{i = 1}^m (a_i^\top \mM a_i)^{p/2} \leq 1. \numberthis\label{opt-lewisell}
\] Geometrically, \eqref{opt-lewisell} seeks the minimum volume ellipsoid with a bound on the $p/2$-norm of the distance of the points to the ellipsoid, and its solution $\mM$  is  the ``Lewis ellipsoid'' \cite{CohenPeng2015} of $\{a_i\}_{i = 1}^m$. 

The optimality condition of \eqref{opt-lewisell}, written using $\lw\in \R^m$ defined as $\lw_i \defeq (a_i^\top \mM a_i)^{p/2}$, is equivalent to \eqref{eq_def_lewis_weights},  and this demonstrates that solving \eqref{opt-lewisell} is one approach to obtaining the Lewis weights of $\mA$ (see \cite{CohenPeng2015}). This equivalence also underscores the fact that the problem of computing Lewis weights is a natural $\ell_p$ generalization of the problem of computing the John ellipsoid.

More broadly,   Lewis weights are ubiquitous across statistics, machine learning, and mathematical optimization in diverse applications, of which we presently highlight two (see Section~\ref{sec:appsRelatedWork} for details). First, their interpretation as ``importance scores'' of rows of matrices makes them key to shrinking the row dimension of input data \cite{drineas2006sampling}.  Second, through their role in constructing self-concordant barriers of polytopes \cite{LeeSidford2014}, variants of Lewis weights have found prominence in recent advances in the computational complexity of linear programming. 

From a purely optimization perspective, Lewis weights may be viewed as the optimal solution to the following convex optimization problem (which is in fact essentially dual to \eqref{opt-lewisell}): 
\[
\lw = \arg\min_{w\in \mathbb{R}^m_{> 0}} \fobj(w)\defeq -\log \det (\mA^\top \mW \mA) + \frac{1}{1+\alpha}\ones^\top w^{1+\alpha}, \textrm{ for $\alpha = \tfrac{2}{p-2}$}.\numberthis\label{def_objective}
\] As elaborated in  \cite{CohenPeng2015, DBLP:journals/corr/abs-1910-08033}, the reason this problem yields the Lewis weights is that
 an  appropriate scaling of  its solution $\lw$ transforms its optimality condition from $\lw_i^{\alpha} = a_i^\top (\mA^\top \overline{\mW} \mA)^{-1} a_i$ to \eqref{eq_def_lewis_weights}. The problem \eqref{def_objective} is  a simple and natural one and, in the case of $\alpha = 1$ (corresponding to the John ellipsoid), has been the subject of study for designing new optimization methods \cite{Todd2016}. 

In summary, Lewis weights naturally arise as generalizations of extensively studied problems in convex geometry and optimization. This, coupled with their role in machine learning, makes understanding the complexity of computing Lewis weights, i.e., solving \eqref{def_objective}, a fundamental problem. 

\paragraph{Our Goal.}We aim to design \emph{high-precision} algorithms for computing \emph{$\eps$-approximate}  Lewis weights, i.e., a vector $w\in \R^m$ satisfying 
\[ w_i \approx_{\eps} \lw_i, \text{  for all $i\in [m]$}, \text{ where } \lw \text{ is defined in } \eqref{eq_def_lewis_weights} \text{ and } \eqref{def_objective}. \numberthis\label{def-lewiswts-obj-approxLW}\] where $a \approx_\eps b$ is used to denote $(1-\eps) a \leq b \leq (1+\eps) a$. To this end, we design algorithms to solve the convex program \eqref{def_objective} to $\teps$-additive accuracy for an appropriate $\teps = \textrm{poly}(\eps, n)$, which we prove suffices in Lemma~\ref{lem-fn-to-LW}. 

By a ``high-precision'' algorithm, we mean one with a runtime polylogarithmic in $\eps$. We emphasize that for several applications such as randomized sampling \cite{CohenPeng2015},  \textit{approximate} Lewis weights suffice; however, we believe that  high-precision methods such as ours enrich our understanding of the structure of the optimization problem \eqref{def_objective}. Further, as stated in Theorem ~\ref{selfconcordancethm}, such  methods yield new runtimes for directly computing a near-optimal self-concordant barrier for polytopes.

We use \emph{number of leverage score computations}  as the complexity measure of our algorithms. Our choice is a result of the fact that leverage scores of appropriately scaled matrices appear in both $\nabla\fobj(w)$ (see Lemma~\ref{lem_gradAndHess}) and in the verification of correctness of Lewis weights. This measure of complexity stresses the \emph{number of iterations} rather than the details of iteration costs (which depend on exact techniques used for leverage core computation, e.g.,  fast matrix multiplication) and is consistent with many prior algorithms (see Table~\ref{tableresults}).

\paragraph{Prior Results.} The first polynomial-time algorithm for computing Lewis weights  was presented by \cite{CohenPeng2015} and performed only $\widetilde{O}_p(\log(1/\eps))\footnote{We use $O_p$ to hide a polynomial in $p$ and $\widetilde{O}$ and $\widetilde{\Omega}$ to hide factors polylogarithmic in $p, n$, and $m$.}$ leverage score computations.  However, \emph{their result holds only for $p\in (0, 4)$}. We explain the source of this limited range in Section~\ref{sec:overviewofapproach}.

In comparison, for $p\geq 4$, existing algorithms are slower: the algorithms by \cite{CohenPeng2015}, \cite{YinTatThesis}, and \cite{DBLP:journals/corr/abs-1910-08033} perform $\widetilde{\Omega}(n)$, $\widetilde{O}(1/\eps)$, and $\widetilde{O}(\sqrt{n})$ leverage score computations, respectively. \cite{CohenPeng2015} also gave an algorithm with total runtime $\Ord(\tfrac{1}{\eps}\nnz{\mA} + c_p n^{O(p)})$. Of note is the fact that the algorithms with  runtimes polynomial in $1/\eps$ (\cite{YinTatThesis,CohenPeng2015}) satisfy the weaker approximation condition 
$\lw_i^{2/p}  \approx_{\eps} a_i^\top (\mA^\top \overline{\mW}^{1-2/p} \mA)^{-1} a_i$,  which is in fact implied by our condition \eqref{def-lewiswts-obj-approxLW}. 

We display these runtimes in Table~\ref{tableresults}, assuming that the cost of a leverage score computation is $O(mn^2)$ (which, we reiterate, may be reduced through the use of fast matrix multiplication).  In terms of the number of leverage score computations, Table~\ref{tableresults} highlights the contrast between the \emph{polylogarithmic} dependence on input size and accuracy for $p \in (0, 4)$ and \emph{polynomial} dependence on these factors for $p \geq 4$.  \emph{The motivation behind our paper is to close this gap.} 
\subsection{Our Contribution} 

We design \emph{an algorithm that computes Lewis weights to high precision for all $p>2$ using only $\widetilde{O}_p(\log(1/\eps))$ leverage score computations}. Together with \cite{CohenPeng2015}'s result for $p\in (0, 4)$, our result therefore completes the picture on a near-optimal reduction from leverage scores to Lewis weights for all $p>0$. 

\begin{restatable}[Main Theorem (Parallel)]{theorem}{THMmain}\label{thm_main_thm}
Given a full-rank matrix $\mA \in \R^{m\times n}$ and $p \geq 4$, we can compute (Algorithms~\ref{alg_ourAlg} and ~\ref{alg_roundparallel}) its $\eps$-approximate Lewis weights \eqref{def-lewiswts-obj-approxLW} in $O(p^3 \log(m p/\eps))$ iterations\footnote{Our algorithms work for all $p>2$, as can be seen in our proof in Section~\ref{sec:pf-main-thm-par}. However, for $p \in (2, 4)$, the algorithm of \cite{CohenPeng2015} is faster, and therefore, in our main theorems, we state runtimes only for $p\geq 4$.}. Each iteration computes the leverage scores of a matrix $\mD \mA$ for a diagonal matrix $\mD$. The total runtime is  $O(p^3 m n^2\log(m p/\eps))$, with  $O(p^3 \log(mp/\eps) \log^2 (m))$ depth. 
\end{restatable}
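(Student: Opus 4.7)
The plan is to reduce approximate Lewis-weight computation to approximate minimization of the convex potential $\fobj$ in \eqref{def_objective}, and then design a Newton-type iteration whose per-step progress depends only on $\mathrm{poly}(p)$. By (the promised) Lemma~\ref{lem-fn-to-LW}, driving the suboptimality $\fobj(w) - \fobj(\lw)$ below some $\teps = \mathrm{poly}(\eps/(mn))$ already yields an entrywise $\eps$-approximation of $\lw$ in the sense of \eqref{def-lewiswts-obj-approxLW}. So it suffices to minimize $\fobj$ to additive accuracy $\teps$ in $O(p^3 \log(mp/\eps))$ leverage-score computations, followed by the rounding of Algorithm~\ref{alg_roundparallel}.

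The first step is to set up the iteration geometry. I would work in multiplicative coordinates $w_i = \exp(x_i)$; by Lemma~\ref{lem_gradAndHess} the gradient in these coordinates satisfies $w_i \cdot [\nabla \fobj(w)]_i = -\sigma_i(w) + w_i^{1+\alpha}$, where $\sigma_i(w)$ is the $i$-th leverage score of $\mW^{1/2}\mA$, so a stationary point recovers \eqref{eq_def_lewis_weights}. The Hessian decomposes into a ``log-det part'' built from the leverage-score projection $\mathbf{P}(w) \defeq \mW^{1/2}\mA(\mA^\top \mW \mA)^{-1}\mA^\top \mW^{1/2}$ and its Hadamard square, plus a diagonal penalty part proportional to $\mathrm{diag}(w_i^{1+\alpha})$.

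The algorithmic core is then a damped, preconditioned (quasi-)Newton iteration $x \leftarrow x - \eta\,\hat{\mathbf{H}}(w)^{-1}\nabla \fobj(w)$, in which $\hat{\mathbf{H}}(w)$ is a cheap diagonal surrogate for the true Hessian (spectrally equivalent to it up to $\mathrm{poly}(p)$) that can be inverted in $O(m)$ time, and $\eta = \Theta(1/p)$ is a carefully chosen damping. Each iteration thus requires exactly one leverage-score computation on a diagonally rescaled copy of $\mA$. I would then analyze convergence by showing that the Newton decrement, or equivalently the multiplicative deviation $\max_i |\log(w_i/\lw_i)|$, contracts by a factor $1 - \Omega(1/p^3)$ per step; iterating $O(p^3 \log(mp/\eps))$ times therefore drives the error below $\teps$, and the work/depth bounds follow from the fact that a single leverage-score computation costs $O(mn^2)$ work and has depth $O(\log^2 m)$ on the standard parallel model.

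The main obstacle I expect is establishing this $1 - \Omega(1/p^3)$ contraction for the full range $p \geq 4$, which corresponds to $\alpha \in (0,1]$. In this regime the penalty $\tfrac{1}{1+\alpha}\ones^\top w^{1+\alpha}$ is only weakly convex, and the natural fixed-point iteration underlying \cite{CohenPeng2015}'s analysis for $p<4$ ceases to be a contraction. The crux is to bound the off-diagonal Hadamard-square piece of the Hessian against the weak diagonal curvature from the penalty with a factor depending only on $\mathrm{poly}(p)$ rather than $\mathrm{poly}(n)$ or $\mathrm{poly}(1/\eps)$. I anticipate needing a leverage-score stability bound under multiplicative row rescalings together with a carefully chosen local norm in which the damped quasi-Newton step is provably contractive. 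Once this is in place, applying Algorithm~\ref{alg_roundparallel} at the end converts the approximate minimizer of $\fobj$ into an $\eps$-approximate Lewis weight vector, yielding the iteration count, total work, and depth claimed.
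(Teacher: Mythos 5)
Your overall frame---minimize the convex potential $\fobj$ of \eqref{def_objective} to additive accuracy $\teps$, take quasi-Newton-type steps each costing one leverage-score computation, and convert via Lemma~\ref{lem-fn-to-LW}---matches the paper. But the proposal leaves unresolved exactly the step that is the paper's main contribution, and the route you sketch for it is the one the paper argues does \emph{not} work. You propose to show that the Newton decrement, ``or equivalently'' $\max_i|\log(w_i/\lw_i)|$, contracts by $1-\Omega(1/p^3)$ per step. These are not equivalent, and contraction of the multiplicative distance to $\lw$ is essentially the Cohen--Peng fixed-point argument, whose contraction factor is $|p/2-1|$ and which genuinely fails for $p\geq 4$ (Section~\ref{sec:overviewofapproach}); no choice of damping $\eta=\Theta(1/p)$ rescues it, because the obstruction is not the step size but that the map is expansive. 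Likewise, a purely spectral comparison of the Hadamard-square part of the Hessian against the diagonal penalty cannot by itself give a $\mathrm{poly}(p)$ condition number: near coordinates where $w_i$ is tiny but $\sigma_i(w)$ is not, the ratio $\rho_i(w)=\sigma_i(w)/w_i^{1+\alpha}$ is unbounded and the relevant local quantities degrade with it, not with $p$.

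What the paper actually does is condition the convergence argument on the pointwise inequality $\rho_{\max}(w)\leq 1+\alpha$ (the ``rounding condition'' \eqref{eq_grad_cond_for_prog}). Under that condition, Lemma~\ref{lem_subopt} bounds the sub-optimality $\fobj(w)-\fobj(\wopt)$ by $O(\max\{1,\alpha^{-1}\})\sum_i w_i^{1+\alpha}(\rho_i(w)-1)^2/(\rho_i(w)+1)$ using only first-order convexity of the log-det term plus a per-coordinate minimization; this is precisely the quantity by which the step $\progcode$ decreases $\fobj$ (Lemma~\ref{lem:parallel_progress}), giving the multiplicative decrease in \emph{function error}. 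The proof of Lemma~\ref{lem_subopt} uses $\rho_i(w)\leq 1+\alpha$ in an essential way (to Taylor-expand $\rho_i(w)^{1/\alpha}$), so the condition cannot be dispensed with. Crucially, the rounding procedure of Algorithm~\ref{alg_roundparallel} is therefore invoked \emph{before every progress step}, not once at the end as in your plan, and one must separately show that rounding never increases $\fobj$ and that its total iteration count across the whole run is $O(\totaliters\cdot\alpha^{-2}\log(m/(n(1+\alpha))))$ (Lemma~\ref{lem-total-slowiters}); that amortization, combined with the $1-\Omega(\alpha^2)$ per-progress-step decrease, is where the $p^3$ comes from. To complete your proof you would need to supply a substitute for this conditional sub-optimality bound and for the interleaved corrective procedure that maintains its hypothesis; as written, the argument has a gap at its center.
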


Theorem~\ref{thm_main_thm} is attained by a parallel algorithm for  computing  Lewis weights that consists of polylogarithmic rounds of leverage score computations and therefore has polylogarithmic-depth, a result that was not known prior to this work. 

\begin{restatable}[Main Theorem (Sequential)]{theorem}{THMmainseq}\label{thm_main_thm_seq}
Given a full-rank matrix $\mA \in \R^{m\times n}$ and $p\geq 4$, we can compute (Algorithms~\ref{alg_ourAlg} and ~\ref{alg_roundsequential}) its $\eps$-approximate  Lewis weights \eqref{def-lewiswts-obj-approxLW}  in 
$O(p m \log(m p/\eps))$ iterations. Each iteration computes the leverage score of one row of $\mD \mA$ for a diagonal matrix $\mD$. The total runtime is $O(p m n^2 \log(m p/\eps))$.
\end{restatable}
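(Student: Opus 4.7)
The plan is to analyze a coordinate-descent variant of the algorithm used in the parallel Theorem~\ref{thm_main_thm}. The key structural observation, which follows from Lemma~\ref{lem_gradAndHess}, is that the gradient of $\fobj$ decomposes coordinate-wise as $\nabla_i \fobj(w) = -\sigma_i(w)/w_i + w_i^{\alpha}$, where $\sigma_i(w)$ is the $i$-th leverage score of $\mW^{1/2}\mA$ and $\alpha = 2/(p-2)$. Consequently, (approximately) optimizing a single coordinate requires computing only one leverage score of one row of a diagonally reweighted matrix, which is precisely the per-iteration cost stated in the theorem.

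First, I would derive an explicit one-dimensional update rule from the stationarity condition, namely $w_i^{+} = \sigma_i(w)^{1/(1+\alpha)}$, possibly damped to respect the coordinate-wise smoothness of $\fobj$. Using the Sherman--Morrison identity, one can quantify how $\sigma_i$ varies as only $w_i$ moves, reducing each per-coordinate subproblem to a scalar fixed-point iteration that contracts to the conditional minimizer after one leverage-score query. Second, I would set up a potential-function convergence argument analogous to the one driving Theorem~\ref{thm_main_thm} but accounting for single-coordinate updates. A natural choice is the additive optimality gap $\fobj(w) - \fobj(\lw)$ (or the same local norm used in the parallel analysis). The goal is to show that each coordinate update decreases this potential by a factor of at least $1 - \Omega(1/(pm))$, so that $O(pm\log(mp/\eps))$ iterations suffice to achieve the $\teps$-additive accuracy on $\fobj$ that Lemma~\ref{lem-fn-to-LW} converts into the desired $\eps$-approximate Lewis weights \eqref{def-lewiswts-obj-approxLW}.

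The main obstacle will be handling the coupling between coordinates: since $\sigma_i(w)$ depends on all of $w$ through the matrix $\mA^\top \mW \mA$, updating $w_i$ perturbs every other $\sigma_j$ as well. I expect to control this by combining two ingredients supplied by Lemma~\ref{lem_gradAndHess}: the diagonal Hessian entries, which give sharp local bounds on the per-step decrease, and the global identity $\sum_{i=1}^{m}\sigma_i(w)=n$, which caps the aggregate drift of the leverage scores over a full sweep. The cleanest route is likely to show that a sweep of $m$ coordinate updates emulates one step of a damped Newton-type iteration on $\fobj$ up to bounded error; this would translate the per-step progress of the parallel algorithm into per-sweep progress for the sequential one, giving the stated $O(pm\log(mp/\eps))$ iteration bound (which, at a single leverage-score per iteration, matches the $O(pmn^2\log(mp/\eps))$ total runtime).
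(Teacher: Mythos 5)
There is a genuine gap in the convergence argument, and it is precisely at the point where the paper's proof does something different from what you propose. You want each single-coordinate update to contract the additive gap $\fobj(w)-\fobj(\wopt)$ by a factor $1-\Omega(1/(pm))$. To get any multiplicative contraction you must upper-bound the current gap by a quantity comparable to the per-step decrease, and the only such bound available (Lemma~\ref{lem_subopt}) is valid \emph{only when the rounding condition $\rho_{\max}(w)\leq 1+\alpha$ holds}; this condition is the paper's central technical device for $p\geq 4$, and your proposal never says how it is established or maintained during a coordinate sweep. Moreover, even granting it, a single coordinate step only removes one term of the sum $\sum_i w_i^{1+\alpha}(\rho_i(w)-1)^2/(\rho_i(w)+1)$, so a $1/m$-fraction-per-step claim needs either a greedy coordinate choice or a per-sweep accounting; your suggested control via $\sum_i\sigma_i(w)=n$ is too coarse to deliver this. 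The paper avoids the issue entirely: the coordinate steps of Algorithm~\ref{alg_roundsequential} are \emph{not} credited with any multiplicative progress. They serve only as the rounding subroutine $\roundcode({}\cdot{})$ — each exactly minimizes $\fobj$ over one coordinate (Lemmas~\ref{lem_coordinate_progress_1} and~\ref{lem_coordinate_progress_2}), never increases $\fobj$, and sets $\rho_i(w^+)=1$ — and all geometric progress comes from the subsequent full-vector $\progcode({}\cdot{})$ step exactly as in the parallel analysis (Lemma~\ref{lem_fast_steps_count}). The iteration count is then just (at most $m$ coordinate steps per rounding call) $\times$ ($\totaliters = O(\max(\alpha,\alpha^{-1})\log(m/\teps))$ calls), i.e.\ Lemma~\ref{lem_coordinate_step_count}, with each step costing $O(n^2)$ by Sherman--Morrison.

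The other missing ingredient is the monotonicity fact that makes one pass suffice for rounding: by Sherman--Morrison, increasing $w_i$ can only decrease $a_j^\top(\mA^\top\mW\mA)^{-1}a_j$ for $j\neq i$, so after sweeping once over $\coordset=\{i:\rho_i(w)\geq 1\}$ every coordinate satisfies $\rho_i(w)\leq 1\leq 1+\alpha$. This replaces your ``aggregate drift'' argument and is what guarantees the precondition of Lemma~\ref{lem_subopt} before each $\progcode({}\cdot{})$ step. Your per-iteration cost accounting (one leverage score of one row via a rank-one update, $O(n^2)$ per step) and the reduction to a scalar equation for the coordinate minimizer are correct and match the paper; it is the progress/potential argument that needs to be restructured along the lines above.
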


\begin{remark}\label{remarkJohnEll}
The solution to \eqref{opt-lewisell} characterizes a ``Lewis ellipsoid,'' and the $\ell_{\infty}$ Lewis ellipsoid of  $\mA$ is precisely its John ellipsoid. After symmetrization \cite{Todd2016}, computing the John ellipsoid is equivalent to solving a linear program (LP). Therefore, computing Lewis weights in  $O(\log(mp/\eps))$ iterations would imply a polylogarithmic-depth algorithm for solving LPs, which, given the current  $O(\sqrt{n})$ depth \cite{DBLP:journals/corr/abs-1910-08033}, would be a significant breakthrough in the field of optimization. We therefore believe that it would be difficult  to remove the polynomial dependence on $p$ in our runtime.  
\end{remark}

\begin{table}[ht] 
\centering
\begin{tabular}{lccc}
\toprule 
Authors & Range of $p$ & \thead{Number of \\ Leverage Score\\ Computations/Depth} & Total Runtime \\
\midrule
\cite{CohenPeng2015} & $p \in (0, 4)$ &  $O\left(\tfrac{1}{1-|1-p/2|} \cdot \log(\tfrac{\log (m)}{\eps})\right)$ & $O\left(\tfrac{1}{1-|1-p/2|}\cdot mn^2 \cdot \log(\tfrac{\log (m)}{\eps})\right)$\\
\cite{CohenPeng2015} & $p \geq 4$ & $\Omega(n)$ & $\Omega(mn^3  \cdot \log (\tfrac{m}{\eps}))$\\
\cite{CohenPeng2015}\mbox{*}  & $p \geq 4$ & not applicable & $O\left(\tfrac{\text{nnz}(\mA)}{\eps} + c_p n^{O(p)}\right)$ \\ 
\cite{YinTatThesis}\mbox{*} & $p \geq 4$ & $O\left(\tfrac{1}{\eps} \cdot \log(m/n)\right)$ & $O\left( \left(\tfrac{\text{nnz}(\mA)}{\eps} + \frac{n^3}{\eps^3}\right) \cdot\log(m/n)\right)$\\ 
\cite{DBLP:journals/corr/abs-1910-08033} & $p \geq 4$ & $O(p^2 \cdot n^{1/2} \cdot \log(\tfrac{1}{\eps}))$ & $O(p^2 \cdot mn^{2.5}  \cdot \text{poly}\log(\tfrac{m}{\eps}))$ \\ 
Theorem~\ref{thm_main_thm} & $p \geq 4$ & $O(p^3 \cdot \log (\tfrac{mp}{\eps}))$ & $O(p^3  \cdot mn^2  \cdot \log (\tfrac{mp}{\eps}))$\\ 
\bottomrule
\end{tabular}
\caption{Runtime comparison for computing Lewis weights. 
Results with asterisks use a weaker notion of approximation than our paper \eqref{eq_def_lewis_weights}. All dependencies on $n$ in the running times of these methods can be improved using fast matrix multiplication.
}

\label{tableresults}
\end{table}

\subsection{Overview of Approach}\label{sec:overviewofapproach} 
Before presenting our algorithm, we describe obstacles to directly extending previous work on the problem for $p \in (0,4)$ to the case $p\geq 4$. For $p\in (0,4)$, \cite{CohenPeng2015, DBLP:journals/corr/abs-1910-08033} design algorithms that, with a single computation of leverage scores, make constant (dependent on $p$) multiplicative progress on error (such as function error or distance to optimal point), thus attaining runtimes polylogarithmic in $\eps$. However, these methods crucially rely on \emph{contractive properties} that, in contrast to our work, do \emph{not} necessarily hold for $p\geq 4$. 

For example, one of the algorithms in \cite{CohenPeng2015} starts with a vector $v\approx_{c} \lw$, where $\lw$ is the vector of true Lewis weights and $c$ some constant.  Consequently, we have  $(a_i^\top (\mA^\top \mathbf{V}^{1-2/p} \mA)^{-1} a_i)^{p/2} \approx_{c^{|p/2 -1|}} (a_i^\top (\mA^\top \overline{\mW}^{1-2/p} \mA)^{-1} a_i)^{p/2}$. Due to this map being a contraction  for $|p/2-1|<1$, or equivalently, for  $p\in (0, 4)$,  $O(\log (\log n))$ recursive calls to it give  Lewis weights for $p <4$, but the contraction - and, by extension, this method - does not immediately extend to the setting $p \geq 4$.  

Prior algorithms for $p\geq 4$ therefore resort to alternate optimization techniques. \cite{CohenPeng2015} frames Lewis weights computation as determinant maximization \eqref{opt-lewisell} (see Section~\ref{cohenpeng-determinant}) and applies cutting plane methods \cite{grotschel1981ellipsoid,lee2015faster}.  \cite{YinTatThesis} uses mirror descent,  and \cite{DBLP:journals/corr/abs-1910-08033} uses homotopy methods. These approaches yield runtimes with $\text{poly}(n)$ or $\text{poly}(\tfrac{1}{\eps})$ leverage score computations, and therefore, in order to attain runtimes of $\text{polylog}(1/\eps)$ leverage score computations, we need to rethink the algorithm. 

\paragraph{Our Approach.}  As stated in Section~\ref{sec:intro}, to obtain $\eps$-approximate Lewis weights for $p \geq 4$, we compute a $w$ that satisfies $\fobj(\wopt) \leq \fobj(w)\leq \fobj(\wopt) + \teps$, where $\fobj$ and $\wopt$ are as defined in \eqref{def_objective} and $\teps = O(\mathrm{poly}(n, \eps))$. In light of the preceding bottlenecks in prior work,  we circumvent techniques that directly target constant multiplicative progress (on some potential) in each iteration. 

Our main technical insight is that \textit{when the leverage scores for the current weight $w \in \R^n_{> 0}$ satisfy a certain technical condition (inequality \eqref{ineqgradcondforprog-early})}, it is indeed possible to update $w$ to get multiplicative decrease in function error  ($\fobj(w)-\fobj(\wopt)$), thus resulting in our target runtime. To turn this insight into an algorithm, we design a corrective procedure that ensures that  \eqref{ineqgradcondforprog-early} is always satisfied: in other words, whenever  \eqref{ineqgradcondforprog-early} is \emph{violated}, this procedure updates $w$ so that the new  $w$ \emph{does} satisfy  \eqref{ineqgradcondforprog-early}, setting the stage for the aforementioned multiplicative progress. An important additional property of this procedure is that it does \emph{not} increase the objective function and is therefore in keeping with our goal of minimizing \eqref{def_objective}.

Specifically, the technical condition that our geometric decrease in function error hinges on is 
\[\max_{i\in [m]} \frac{a_i^\top (\mA^\top \mW \mA)^{-1} a_i }{w_i^{\alpha}}\leq 1 + \alpha\,. 
\numberthis
\label{ineqgradcondforprog-early}\] 
This ratio follows naturally from the gradient and Hessian of the function objective (see Lemma~\ref{lem_gradAndHess}). Our algorithm's update rule to solve \eqref{def_objective} is obtained from minimizing a second-order approximation to the objective at the current point, and the condition specified in \eqref{ineqgradcondforprog-early} allows us to relate the progress of a type of quasi-Newton step to lower bounds on the progress there is to make, which is critical to turning a runtime of \polyeps  into  $\text{polylog}(1/\eps)$ (Lemma~\ref{lem_subopt}).

The process of updating $w$ so that \eqref{ineqgradcondforprog-early} goes from being violated to being satisfied corresponds, geometrically, to sufficiently rounding the ellipsoid $\mathcal{E}(w) = \{x: x^\top \mA^\top \mW \mA x\leq 1\}$; specifically, the updated ellipsoid satisfies $\mathcal{E}(w) \subseteq \{\|\mW^{\frac{1}{2-p}} \mA x\|_{\infty} \leq \sqrt{1 + \alpha}\}$ (see Section~\ref{explain-rounding-name}), and this is the reason we use the term ``rounding'' to describe our corrective procedure to get $w$ to satisfy \eqref{ineqgradcondforprog-early} and the term ``rounding condition'' to refer to \eqref{ineqgradcondforprog-early}. 

We develop two versions of rounding: a parallel method and a  sequential one that has an improved dependence on $p$. Each version is based on the principles that (1) one can increase those entries of $w$ at which the rounding condition  \eqref{ineqgradcondforprog-early} does not hold \emph{while decreasing} the objective value, and (2)  the vector $w$ obtained after this update  is closer to satisfying \eqref{ineqgradcondforprog-early}. 

We believe that such a principle of identifying a technical condition needed for fast convergence and the accompanying rounding procedures could be useful in other optimization problems. Additionally, we develop Algorithm~\ref{alg_oneloop}, which, by varying the step sizes in the update rule, maintains \eqref{ineqgradcondforprog-early}  as invariant, thereby eliminating the need for a separate rounding and progress steps.

\subsection{Applications and Related Work}\label{sec:appsRelatedWork} We elaborate here on the applications of Lewis weights we briefly alluded to in Section~\ref{sec:intro}.  While for many applications  (such as  pre-processing in optimization  \cite{CohenPeng2015}) approximate weights suffice, solving regularized $D$-optimal and computing $\tilde{O}(n)$ self-concordant barriers to high precision do use high precision Lewis weights. 

\paragraph{Pre-processing in optimization.} Lewis weights are used as scores to sample rows of an input tall data matrix so the $\ell_p$ norms of the product of the matrix with vectors are preserved. They have been used in row sampling algorithms for data pre-processing \cite{drineas2006sampling, drineas2012fast, li2013iterative, DBLP:conf/innovations/CohenLMMPS15}, for computing dimension-free strong coresets for $k$-median and subspace approximation \cite{sohler2018strong}, and for fast tensor factorization in the streaming model \cite{chhaya2020streaming}.  Lewis weights are also used for  $\ell_1$  regression, a popular model in machine learning used to capture robustness to outliers, in:  \cite{durfee2018ell_1} for stochastic gradient descent pre-conditioning, \cite{pmlr-v119-li20o} for quantile regression, and \cite{braverman2020near} to provide algorithms for linear algebraic problems in the sliding window model.

\paragraph{John ellipsoid and D-optimal design.} 
As noted in Remark~\ref{remarkJohnEll}, a fast algorithm for  Lewis weights could yield faster algorithms for computing John ellipsoid, a problem with a long history of work \cite{khachiyan1996rounding, sun2004computation, kumar2005minimum, damla2008linear, CohenCousinsLeeYang2019, zhao2020analysis}. It is known \cite{Todd2016} that the John ellipsoid problem is dual to the (relaxed) D-optimal experiment design problem \cite{10.5555/1137748}. D-optimal design seeks to select a set of linear experiments with the largest confidence ellipsoid for its least-square estimator \cite{pmlr-v70-allen-zhu17e, pmlr-v99-madan19a, singh2020approximation}.

Our problem \eqref{def_objective} is equivalent to  $\frac{p}{p-2}$-regularized D-optimal design, which can be interpreted as enforcing a polynomial experiment cost: viewing $w_i$ as the fraction of  resources allocated to experiment $i$, each  $w_i$ is penalized by $w_i^{\frac{p}{p-2}}$. This regularization also appears in fair packing and fair covering problems  \cite{marasevic2016fast, DBLP:journals/siamjo/DiakonikolasFO20} from operations research.

\paragraph{Self-concordance.} 
Self-concordant barriers are fundamental in convex optimization \cite{nesterov1994interior}, combinatorial optimization \cite{LeeSidford2014}, sampling \cite{10.1145/1536414.1536491, 10.1145/3357713.3384272},
and online learning \cite{abernethy2008competing}. Although there are (nearly) optimal self-concordant barriers for any convex set \cite{nesterov1994interior, bubeck2015entropic, lee2018universal}, computing them involves sampling from log-concave distributions, itself an expensive process with a \polyeps runtime. \cite{LeeSidford2014} shows how to construct nearly optimal barriers for polytopes using Lewis weights. Unfortunately, doing so still requires polynomial-many steps to compute these weights; \cite{LeeSidford2014} bypass this issue by showing it suffices to work with Lewis weights for $p \approx 1$. In this paper, we  show how to compute Lewis weights by computing leverage scores of polylogarithmic-many matrices. This gives the first nearly optimal self-concordant barrier for polytopes that can be evaluated to high accuracy with  depth polylogarithmic in the dimension.

\begin{theorem}[Applying Theorem~\ref{thm_main_thm} to {\cite[Section 5]{DBLP:journals/corr/abs-1910-08033}}]\label{selfconcordancethm}
Given a non-empty polytope $P=\{x\in\mathbb{R}^{n}~|~\mA x>b\}$ for 
full rank $\mA\in\mathbb{R}^{m\times n}$, there is a $O(n\log^{5}m)$-self
concordant barrier $\psi$ for $P$ such that for any $\epsilon>0$
and $x\in P$, in $O(mn^{\omega-1}\log^{3}m\log(m/\epsilon))$-work
and $O(\log^{3}m\log(m/\epsilon))$-depth, we can compute $g\in\mathbb{R}^{n}$
and $\mathbf{H}\in\mathbb{R}^{n\times n}$ with $\|g-\nabla\psi(x)\|_{\nabla^{2}\psi(x)^{-1}}\leq\epsilon$
and $\nabla^{2}\psi(x)\preceq\mathbf{H}\preceq O(\log m)\nabla^{2}\psi(x)$.
With an additional $O(m^{\omega+o(1)})$ work, $\mathbf{H}\in\mathbb{R}^{n\times n}$ with $(1-\epsilon)\nabla^{2}\psi(x)\preceq\mathbf{H}\preceq O(1+\epsilon)\nabla^{2}\psi(x)$ can be computed as well. 
\end{theorem}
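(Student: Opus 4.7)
The plan is to combine Theorem~\ref{thm_main_thm} with the barrier construction of Section~5 of~\cite{DBLP:journals/corr/abs-1910-08033}. Given the polytope $P = \{x : \mA x > b\}$, Lee and Sidford define a self-concordant barrier $\psi(x)$ that is essentially a weighted logarithmic barrier whose weights are the $\ell_p$ Lewis weights of the slack-scaled matrix $\mathbf{S}_x^{-1}\mA$, where $\mathbf{S}_x = \mathrm{diag}(\mA x - b)$, for $p = \Theta(\log m)$. This choice of $p$ produces the claimed $O(n \log^5 m)$ self-concordance parameter, and both $\nabla\psi(x)$ and $\nabla^2\psi(x)$ are expressible in closed form through the Lewis weights together with the leverage scores of a further reweighting of $\mathbf{S}_x^{-1}\mA$.

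The core step is to invoke Theorem~\ref{thm_main_thm} on $\mathbf{S}_x^{-1}\mA$ with $p = \Theta(\log m)$ and a target accuracy $\epsilon' = \mathrm{poly}(\epsilon/m)$, producing $w$ with $w_i \approx_{\epsilon'} \overline{w}_i$ for all $i$. By Theorem~\ref{thm_main_thm}, this requires $O(p^3 \log(m/\epsilon')) = O(\log^3 m \cdot \log(m/\epsilon))$ leverage-score computations on diagonal rescalings of $\mA$, each executable in $O(mn^{\omega-1})$ work with polylogarithmic depth by parallel matrix multiplication and inversion. Substituting $w$ into the Lee--Sidford formulas yields $g$ and $\mathbf{H}$; a standard perturbation analysis in Section~5 of~\cite{DBLP:journals/corr/abs-1910-08033} shows that a $\mathrm{poly}(\epsilon/m)$-multiplicative error in the Lewis weights implies $\|g - \nabla\psi(x)\|_{\nabla^2\psi(x)^{-1}} \leq \epsilon$, while the intrinsic looseness of the Hessian formula produces the sandwich $\nabla^2\psi(x) \preceq \mathbf{H} \preceq O(\log m)\,\nabla^2\psi(x)$. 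Combining with the cost of Theorem~\ref{thm_main_thm} then gives the stated $O(mn^{\omega-1}\log^3 m\log(m/\epsilon))$-work and $O(\log^3 m\log(m/\epsilon))$-depth bounds.

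For the sharper $(1\pm\epsilon)$-spectral approximation to $\nabla^2\psi(x)$, I would use the coarse $\mathbf{H}$ above as a preconditioner and then apply the subspace-embedding primitive invoked in~\cite{DBLP:journals/corr/abs-1910-08033}: sketching an $\mathbf{H}^{-1/2}$-rescaled form of $\mathbf{S}_x^{-1}\mA$ down to $\widetilde{O}(n)$ rows and processing the sketch by dense linear algebra costs $O(m^{\omega+o(1)})$ additional work and yields the required spectral approximation. The main obstacle, in my view, is matching the approximation notion used in Theorem~\ref{thm_main_thm} (entrywise multiplicative on $\overline{w}$, attained via the functional bound on~\eqref{def_objective}) with the local-norm error guarantee demanded on $g$ and $\mathbf{H}$; this amounts to choosing $\epsilon'$ polynomially small in $\epsilon/m$ and verifying that the condition-number blow-up induced by $\mathbf{S}_x^{-1}$ is only polylogarithmic under the polytope normalization assumed in the cited construction, so that the perturbation analysis of~\cite{DBLP:journals/corr/abs-1910-08033} goes through unchanged.
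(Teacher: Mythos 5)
The paper gives no explicit proof of this theorem — it is stated purely as an application, with the intended argument captured in the theorem's title: take the barrier construction of Section~5 of~\cite{DBLP:journals/corr/abs-1910-08033} (which builds a nearly optimal barrier from $\ell_p$ Lewis weights of the slack-scaled constraint matrix with $p = \Theta(\log m)$) and replace its Lewis-weight subroutine with Theorem~\ref{thm_main_thm}. Your proposal reconstructs exactly this approach, including the correct choice of $p$, the resulting $O(\log^3 m\log(m/\epsilon))$ iteration count, the $O(mn^{\omega-1})$ cost per leverage-score computation, the perturbation step to carry entrywise Lewis-weight error into the local-norm guarantee on $g$ and the spectral sandwich on $\mathbf{H}$, and the sketching step for the sharper $(1\pm\epsilon)$ Hessian approximation — so it matches the paper's intent. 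One small point worth noting: Theorem~\ref{thm_main_thm} states a depth of $O(p^3\log(mp/\epsilon)\log^2 m)$, which with $p = \Theta(\log m)$ yields $O(\log^5 m\log(m/\epsilon))$ rather than the $O(\log^3 m\log(m/\epsilon))$ quoted in the theorem statement; your write-up inherits that discrepancy from the paper rather than introducing it, but it would be worth flagging or resolving the extra $\log^2 m$ factor if you were to make this argument fully rigorous.
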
 

\subsection{Notation and Preliminaries}\label{sec:notation}
We use $\mA$ to denote our full-rank $m \times n$ ($m \geq n$) real-valued input matrix and $\lw\in \R^m$ to denote the vector of Lewis weights of $\mA$, as defined in \eqref{eq_def_lewis_weights} and \eqref{def_objective}. All matrices appear in boldface uppercase and vectors in lowercase. For any vector (say, $\sigma$), we use its uppercase boldfaced form ($\mathbf{\Sigma}$) to denote the diagonal matrix $\mathbf{\Sigma}_{ii} = \sigma_i$. For a matrix $\mathbf{M}$, the matrix $\mathbf{M}^{(2)}$ is the Schur product (entry-wise product) of $\mathbf{M}$ with itself. For matrices $\mathbf{A}$ and $\mathbf{B}$, we use $\mathbf{A}\succeq \mathbf{B}$  to mean $\mathbf{A} - \mathbf{B}$ is positive-semidefinite. For vectors $a$ and $b$, the inequality $a\leq b$ applies entry-wise.  We use $e_i$ to denote the $i$'th standard basis vector. We define $[n] \defeq \{1, 2, \dotsc, n\}$. As in \eqref{def_objective}, since we defined $\alpha \defeq \frac{2}{p - 2}$, the ranges of $p \in (2, 4)$ and $p \geq 4$ translate to $\alpha > 1$ and $\alpha \in (0, 1]$, respectively. From hereon, we work with $\alpha$. We also define $\alphamax = \max\{1, \alpha\}$. For a matrix $\mA\in \R^{m\times n}$ and  $w\in \R^m_{> 0}$,  we define the projection matrix $\mP(w) \defeq \mW^{1/2} \mA (\mA^\top \mW \mA)^{-1} \mA^{\top} \mW^{1/2} \in \R^{m \times m}$. The quantity $\mP(w)_{ii}$ is precisely the leverage score of the $i$'th row of $\mW^{1/2} \mA$: \[\sigma_i(w) \defeq w_i \cdot a_i^\top (\mA^\top  \mW \mA)^{-1} a_i. \numberthis\label{def-levscoreW}\]

\begin{fact}[\cite{LeeSidford2014}]\label{fact_projmatrices} For all $w\in \R_{> 0}^m$ we have that $0\leq \sigma_i(w)\leq 1$ for all $i\in [m]$, $\sum_{i\in [m]} \sigma_i (w) \leq n$, and $\mathbf{0}\preceq \mP(w)^{(2)} \preceq \mathbf{\Sigma}(w)$. 
\end{fact}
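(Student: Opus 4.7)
The plan is to reduce everything to the fact that $\mP(w)$ is an orthogonal projection onto the column space of $\mW^{1/2}\mA$. The first step is to verify this directly: since $\mW$ is diagonal and positive, $\mP(w) = \mP(w)^\top$ is immediate, and a straightforward computation using $(\mA^\top \mW \mA)^{-1}(\mA^\top \mW \mA) = \mathbf{I}$ shows $\mP(w)^2 = \mP(w)$. Moreover, because $\mA$ has full column rank $n$ and $w > 0$, the column space of $\mW^{1/2}\mA$ is $n$-dimensional, so $\mP(w)$ has rank exactly $n$.

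With this established, the first two claims follow quickly. Since $\mP(w)$ is an orthogonal projection, $\mathbf{0} \preceq \mP(w) \preceq \mathbf{I}$; evaluating the quadratic form at the standard basis vectors $e_i$ yields $0 \leq \mP(w)_{ii} = \sigma_i(w) \leq 1$. For the sum, $\sum_{i} \sigma_i(w) = \mathrm{tr}(\mP(w)) = \mathrm{rank}(\mP(w)) = n$, which in particular gives the stated inequality $\sum_i \sigma_i(w) \leq n$.

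For the matrix inequality $\mathbf{0} \preceq \mP(w)^{(2)} \preceq \mathbf{\Sigma}(w)$, the lower bound is an instance of the Schur product theorem: $\mP(w) \succeq 0$ implies the entry-wise square $\mP(w)^{(2)} = \mP(w) \circ \mP(w) \succeq 0$. For the upper bound, the key identity is that $\mP(w)^2 = \mP(w)$ together with symmetry gives $\mP(w)_{ii} = \sum_j \mP(w)_{ij}^2$, i.e., the $i$-th diagonal of $\mP(w)$ equals the $i$-th row sum of $\mP(w)^{(2)}$. Using this, for any $x \in \R^m$,
\[
x^\top (\mathbf{\Sigma}(w) - \mP(w)^{(2)}) x = \sum_i \mP(w)_{ii}\, x_i^2 - \sum_{i,j} \mP(w)_{ij}^2 \, x_i x_j = \sum_{i,j} \mP(w)_{ij}^2\, x_i^2 - \sum_{i,j} \mP(w)_{ij}^2\, x_i x_j,
\]
and symmetrizing via $\mP(w)_{ij} = \mP(w)_{ji}$ rewrites this as $\tfrac{1}{2}\sum_{i,j}\mP(w)_{ij}^2 (x_i - x_j)^2 \geq 0$.

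The main (minor) obstacle is the final symmetrization step, since one must carefully use both $\mP(w)=\mP(w)^\top$ and $\mP(w)^2=\mP(w)$ to rewrite the diagonal entries as row sums of squares and then produce the sum-of-squares form $(x_i-x_j)^2$. Everything else is a direct consequence of $\mP(w)$ being an orthogonal projection of rank $n$, so the proof is essentially bookkeeping once that structural observation is in hand.
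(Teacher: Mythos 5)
Your proof is correct and complete. The paper states this as a fact cited from \cite{LeeSidford2014} without reproducing a proof, and your argument---reducing everything to $\mP(w)$ being a rank-$n$ orthogonal projection, invoking the Schur product theorem for the lower bound, and the identity $\mP(w)_{ii}=\sum_j \mP(w)_{ij}^2$ with the $\tfrac{1}{2}\sum_{i,j}\mP(w)_{ij}^2(x_i-x_j)^2$ symmetrization for the upper bound---is essentially the standard derivation used in that reference.
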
 
\section{Our Algorithm} We present  Algorithm~\ref{alg_ourAlg} to compute an $\teps$-additive solution to \eqref{def_objective}. We first provide the following definitions that we frequently refer to in our algorithm and analysis.  Given $\alpha>0$ and $w\in \R^m_{> 0}$,  the $i$'th coordinate of $\rho(w) \in \R^m$ is \[ \rho_i(w) \defeq \frac{\sigma_i(w)}{w_i^{1+\alpha}}. \numberthis\label{def-rho-notation}\]  Based on this quantity, we define the following procedure, derived from approximating a quasi-Newton update on the objective $\fobj$ from \eqref{def_objective}: \[
\left[
\progcode(w, \coordset, \eta)
\right]_i
\defeq w_i\left[1  + \eta_i \cdot  \frac{\rho_i(w)-1}{\rho_i(w) + 1}\right] \text{ for all } i \in \coordset \subseteq \{1, 2, \dotsc, m\}.
\numberthis\label{eq-wratiostep}
\]  Using these definitions, we can now describe our algorithm. Depending on whether the following condition (``rounding condition'') holds, we run either $\progcode({}\cdot{})$  or  $\roundcode({}\cdot{})$ in each iteration:
\[  \rho_{\max}(w) \defeq  \max_{i \in [m]} \rho_i(w) \leq 1 + \alpha. \numberthis\label{eq_grad_cond_for_prog}\] 
Specifically, if \eqref{eq_grad_cond_for_prog} is \emph{not} satisfied,  we run $\roundcode({}\cdot{})$, which returns a vector that \emph{does} satisfy it without increasing the objective value. We design two versions of $\roundcode({}\cdot{})$, one parallel  (Algorithm~\ref{alg_roundparallel}) and one sequential (Algorithm~\ref{alg_roundsequential}), with the sequential algorithm having an improved dependence on $\alpha$, to update the coordinates violating 
 \eqref{eq_grad_cond_for_prog}. We apply one extra step of rounding to the vector returned after $\totaliters$ iterations of Algorithm~\ref{alg_ourAlg} and transform it appropriately to obtain our final output. In the following lemma (proved in Section~\ref{sec:opti-to-lw}), we justify that this  output is indeed the solution to \eqref{def-lewiswts-obj-approxLW}. 
\begin{restatable}[Lewis Weights from Optimization Solution]{lemmma}{LEMroundingToLW}\label{lem-fn-to-LW}
Let $w\in \R_{>0}^m$ be a vector at which the objective \eqref{def_objective} is $\teps$-suboptimal in the additive sense for $\teps = \tepsval$, i.e., $\fobj(\wopt)\leq \fobj(w) \leq \fobj(\wopt) + \teps$. Further assume that $w$ satisfies the rounding condition: $\rho_{\max}(w) \leq 1+ \alpha$. Then, the vector $\hw$ defined as $\hw_i = (a_i^\top (\mA^\top \mW \mA)^{-1} a_i)^{1 / \alpha}$ satisfies $\hw_i \approx_{\eps} \lw_i$ for all $i\in [m]$, thus achieving the goal spelt out in \eqref{def-lewiswts-obj-approxLW}. 
\end{restatable}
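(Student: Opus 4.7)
The plan is to decompose the claim $\hw_i \approx_{\eps} \lw_i$ into (i) a matrix-perturbation reduction from approximating $\hw$ to approximating $w$ entrywise, and (ii) a strong-convexity argument for $\fobj$ in logarithmic coordinates that yields this entrywise approximation from the additive function-error hypothesis.

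For (i), the first-order optimality condition from \eqref{def_objective} gives $\lw_i^{\alpha} = a_i^\top (\mA^\top \overline{\mW}\mA)^{-1} a_i$, so by definition of $\hw$,
\[
\left(\frac{\hw_i}{\lw_i}\right)^{\alpha}
\;=\;
\frac{a_i^\top (\mA^\top \mW \mA)^{-1} a_i}{a_i^\top (\mA^\top \overline{\mW}\mA)^{-1} a_i}.
\]
If $w \approx_{\gamma} \lw$ entrywise, then $\mA^\top \mW \mA \approx_{\gamma} \mA^\top \overline{\mW}\mA$ in the Loewner order, so the ratio on the right lies in $[1-O(\gamma), 1+O(\gamma)]$, and taking $1/\alpha$-th roots gives $\hw_i \approx_{O(\gamma/\alpha)} \lw_i$. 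Choosing $\gamma = \Theta(\alpha\eps)$ then reduces the lemma to establishing $w \approx_{\Theta(\alpha\eps)} \lw$ entrywise, which is what explains the factor of $\alpha$ that must appear in $\tepsval$.

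For (ii), I would pass to log coordinates $u = \log w$, $u^\star = \log \lw$. Using Lemma~\ref{lem_gradAndHess} and the standard identity $\nabla_u^2 (-\log\det \mA^\top \mW \mA) = \mP(w)^{(2)} - \mathbf{\Sigma}(w)$, the Hessian in log coordinates is
\[
\nabla_u^2 \fobj(u) \;=\; \mP(w)^{(2)} - \mathbf{\Sigma}(w) + (1+\alpha)\mW^{1+\alpha}.
\]
At the optimum, $\mathbf{\Sigma}(\lw) = \overline{\mW}^{\,1+\alpha}$ by the first-order condition, so $\nabla_u^2 \fobj(u^\star) \succeq \alpha\,\overline{\mW}^{\,1+\alpha}$. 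At the current iterate $w$, the rounding hypothesis $\mathbf{\Sigma}(w) \preceq (1+\alpha)\mW^{1+\alpha}$ combined with $\mP(w)^{(2)} \succeq 0$ (Fact~\ref{fact_projmatrices}) keeps $\fobj$ convex, and extending this bound along the segment $u(s) = u^\star + s(u-u^\star)$ (using monotonicity of $e^{(1+\alpha) u_i(s)}$ in $s$ together with $\nabla \fobj(u^\star) = 0$) yields a strong-convexity inequality of the form
\[
\fobj(w) - \fobj(\lw) \;\gtrsim\; \alpha \sum_i (\log w_i - \log \lw_i)^2 \min(w_i, \lw_i)^{1+\alpha}.
\]
Combining this with the hypothesis $\fobj(w) - \fobj(\lw) \le \teps$ and polynomial lower bounds on $\min(w_i, \lw_i)$ in terms of $m, n$ and row norms of $\mA$ (from, e.g., $\lw_i^{\alpha} \ge \|a_i\|_2^2 / \mathrm{tr}(\mA^\top \overline{\mW}\mA)$, and from $w_i^{1+\alpha} \ge \sigma_i(w)/(1+\alpha)$ implied by the rounding condition) yields $|\log w_i - \log \lw_i| \le O(\alpha\eps)$ once $\teps = \tepsval = \mathrm{poly}(\eps\alpha/m)$ is chosen small enough.

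The main technical obstacle is ensuring that this strong-convexity lower bound propagates uniformly along the entire segment from $u^\star$ to $u$, rather than holding only at the two endpoints: the rounding condition is a hypothesis only at $w$, and optimality gives a stronger statement only at $\lw$, yet at intermediate points $u(s)$ the Hessian inequality could a priori degenerate. Overcoming this will likely require either a sublevel-set argument showing that $\{\fobj \le \fobj(\lw)+\teps\}$ is contained in a small enough neighborhood of $u^\star$ that the quadratic approximation of $\fobj$ remains accurate, or a direct interpolation argument leveraging convexity of $\fobj$. Once this invariant is in place, the lemma follows mechanically from steps (i) and (ii), and the final polynomial form of $\tepsval$ is dictated by the worst-case polynomial lower bounds on $\min(w_i, \lw_i)$ and the $1/\alpha$-amplification in the matrix-perturbation step.
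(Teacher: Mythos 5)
Your step (i) reduces the lemma to the entrywise claim $w\approx_{\Theta(\alpha\eps)}\lw$, but this intermediate claim is false under the stated hypotheses, and this is precisely why the lemma outputs $\hw$ rather than $w$. Concretely, the Lewis weights admit no lower bound in terms of $m,n$ alone (take a row $a_i$ of very small norm, so that $\lw_i$ is tiny): one can then multiply $w_i$ by a factor of $100$ while changing $\fobj$ by only $O(\lw_i^{1+\alpha})\ll\teps$, keeping $\rho_i(w)\approx 100^{-\alpha}\leq 1+\alpha$ and keeping $|\sigma_i(w)-w_i^{1+\alpha}|$ negligible; all hypotheses of the lemma hold, yet $w_i\not\approx\lw_i$. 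The quantity that \emph{is} stable is $\hw_i=(a_i^\top(\mA^\top\mW\mA)^{-1}a_i)^{1/\alpha}$, because it depends on $w$ only through $\mA^\top\mW\mA$, which is insensitive to multiplicative errors on coordinates with small $w_i$. The same obstruction sinks step (ii) as written: your strong-convexity bound is weighted by $\min(w_i,\lw_i)^{1+\alpha}$, and converting it into a bound on $|\log w_i-\log\lw_i|$ requires lower bounds on $\min(w_i,\lw_i)$ that depend on the row norms/conditioning of $\mA$ --- but $\teps=\tepsval$ is a fixed polynomial in $\eps,\alpha,m,n$ only, so no such matrix-dependent threshold is available. (The segment-interpolation issue you flag for the Hessian bound is also a real gap, but it is secondary to this one.)

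The paper's proof avoids entrywise control of $w$ entirely. It first converts the additive error into $\|\sigma(w)-w^{1+\alpha}\|_\infty\leq 2\sqrt{\teps}$; then (Lemma~\ref{lem_rounding}) it splits coordinates by magnitude, showing that the small-$w_i$ block contributes negligibly to $\mA^\top\mWW\mA$ versus $\mA^\top\mW\mA$ while on large coordinates $\hw_i\approx w_i$, and concludes that $\hw$ is \emph{approximately optimal}, i.e., $\sigma(\hw)\approx_\delta\hw^{1+\alpha}$; finally (Lemma~\ref{lem:lastlemma}) it converts approximate optimality into $\hw\approx\lw$ by a homotopy argument that tracks the solution path of a perturbed objective and bounds $\|\ln(\hw/\lw)\|_\infty$ via Claim~\ref{lem:proj_fact}, with no lower bounds on any $w_i$ or $\lw_i$ needed. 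To salvage your plan you would have to replace ``$w\approx\lw$ entrywise'' by a statement about the matrix $\mA^\top\mW\mA$ (or about approximate optimality of $\hw$) and treat the small coordinates separately, which essentially reconstructs the paper's argument.
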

\begin{algorithm}[!ht]
\DontPrintSemicolon
\caption{\textbf{Lewis Weight Computation Meta-Algorithm}}\label{alg_ourAlg}
\KwIn{Matrix $\mA \in \reals^{m \times n}$, parameter $p > 2$, accuracy $\eps$}
\KwOut{Vector $\hw\in \reals^m_{>0}$ that satisfies \eqref{def-lewiswts-obj-approxLW}}

For all $i \in [m]$, initialize $w_i^{(0)} = \frac{n}{m}$.\; 

Set $\alpha = \frac{2}{p-2}$, $\alphamax = \max(\alpha, 1)$, $\teps = \tepsval$, and $\totaliters = \Ord(\max(\alpha^{-1}, \alpha)\log (m / \teps))$.\;  

\For{$k = 1, 2, 3, \dotsc, \totaliters$}{

    $\widetilde{w}^{(k)} \leftarrow \roundcode(w^{(k - 1)}, \mA, \alpha)$ \Comment{ Invoke Algorithm~\ref{alg_roundparallel} (parallel) or \ref{alg_roundsequential}} (sequential)
    \BlankLine

    $w^{(k)} \leftarrow \progcode(\widetilde{w}^{(k)}, [m], \tfrac{1}{3\alphamax}\1)$
    \Comment{See \eqref{eq-wratiostep} and Lemma \ref{lem_fast_steps_count}} \;
}	

Set $\wround \leftarrow \roundcode(w^{(\totaliters)}, \mA, \alpha)$ \label{line:meta-lastround} \;

Return $\hw \in \R^m_{>0}$, where $\hw_i = (a_i^\top (\mA^\top \mwround \mA)^{-1} a_i)^{1/\alpha}$. \Comment{See Section~\ref{sec:opti-to-lw}} \label{line:finalLW} \; 

\end{algorithm}	 

\begin{algorithm}[!ht]
\DontPrintSemicolon
\caption{\roundparallelcode($w$, $\mA$, $\alpha$) }\label{alg_roundparallel}
\KwIn{Vector $w\in \reals^m_{>0}$, matrix $\mA\in \reals^{m \times n}$, parameter $\alpha > 0$}
\KwOut{Vector $w\in \reals^m_{>0}$ satisfying \eqref{eq_grad_cond_for_prog}}

Define $\rho(w)$ as in \eqref{def-rho-notation}\; 

\While{$\coordset = \{i ~|~ \rho_i(w) > 1+\alpha \} \neq \emptyset$  }{\label{line:parallel_while_start} 

    $w \leftarrow \progcode(w, \coordset, \tfrac{1}{3\alphamax}\1)$
  \Comment{See Section~\ref{sec-slowiters-ana}} \; 
}

Return $w$ \; 

\end{algorithm}

\begin{algorithm}[!ht]
\DontPrintSemicolon
\caption{\roundsequentialcode($w$, $\mA$,$\alpha$)}\label{alg_roundsequential}
\KwIn{Vector $w\in \reals^m_{>0}$, matrix $\mA\in \reals^{m \times n}$, parameter $\alpha > 0$}
\KwOut{Vector $w\in \reals^m_{>0}$ satisfying \eqref{eq_grad_cond_for_prog}}
Define $\rho(w)$ as in \eqref{def-rho-notation} and $\sigma(w)$ as in \eqref{def-levscoreW}\; 

Define $\coordset = \{i ~|~ \rho_i(w) \geq 1 \}$ \; 

\For{ $i \in \coordset$}{
 $w_i\leftarrow w_i(1+\delta_i)$, where $\delta_i$ solves $\rho_{i}(w)=(1+ \delta_i \sigma_i(w)) (1+\delta_i)^{\alpha}$  \Comment{see Section~\ref{sec:anaseqalg}} \; 
}

Return $w$ \; 
\end{algorithm}

\subsection{Analysis of $\progcode({}\cdot{})$}\label{sec:descentanalysis}
We first analyze $\progcode({}\cdot{})$ since it is common to both the parallel and sequential algorithms.

\begin{lemma}[Iteration Complexity of $\progcode({}\cdot{})$] \label{lem_fast_steps_count} Each iteration of  $\progcode({}\cdot{})$  (described in \eqref{eq-wratiostep}) decreases the value of $\fobj$. Assuming that  $\roundcode({}\cdot{})$ does not increase the value of the objective in \eqref{def_objective}, for any given accuracy parameter $0 < \teps <1$, the number of $\progcode({}\cdot{})$ steps that Algorithm~\ref{alg_ourAlg} performs before achieving $\fobj(w)\leq \fobj(\wopt) +\teps$ is given by the following bound: \[\totaliters = \Ord(\max( \alpha^{-1}, \alpha) \log (m/  \teps)).\]  
\end{lemma}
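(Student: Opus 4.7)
The plan is to cast $\progcode$ as a preconditioned gradient descent on $\fobj$, then chain three standard pieces: a per-step decrease of $\fobj$, a Polyak-\L{}ojasiewicz (PL) style lower bound on that decrease in terms of the suboptimality, and a bound on the initial suboptimality. First I would invoke (the referenced) Lemma~\ref{lem_gradAndHess} to express
\[
\nabla_i \fobj(w) = w_i^{\alpha}(1 - \rho_i(w)), \qquad \nabla^2 \fobj(w) = \mW^{-1}\mP(w)^{(2)}\mW^{-1} + \alpha\,\mW^{\alpha - 1},
\]
so that the $\progcode$ direction $u_i := w_i \eta_i (\rho_i-1)/(\rho_i+1)$ with $\eta_i = 1/(3\alphamax)$ is a (positive) rescaling of $-\nabla_i\fobj(w)$ that is uniformly bounded in sup-norm by $1/(3\alphamax)$.

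For the per-step decrease, I would work in the multiplicative parametrization $s_i = u_i/w_i$ and expand $\fobj(w+u)$ using convexity plus an integrated Hessian bound along the segment $[w,w+u]$ (this is the place where the self-concordance-like behavior of $-\log\det$ matters, but since $\|s\|_\infty \leq 1/(3\alphamax) \leq 1/3$ the path lies in a region where $\nabla^2\fobj$ changes by at most a constant factor, so it suffices to bound the Hessian at $w$). Using Fact~\ref{fact_projmatrices}, $\mP(w)^{(2)}\preceq \mathbf{\Sigma}(w)$, the bound $s^\top \nabla^2\fobj(w)\, s \leq \sum_i s_i^2 w_i^{1+\alpha}(\rho_i + \alpha)$ follows, and the rounding condition \eqref{eq_grad_cond_for_prog} gives $\rho_i + \alpha \leq 1 + 2\alpha \leq 3\alphamax$. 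Combining with the linear term $u^\top \nabla\fobj(w) = -\sum_i \eta_i w_i^{1+\alpha}(\rho_i-1)^2/(\rho_i+1)$ and using $\eta_i = 1/(3\alphamax)$ (so $\eta_i^2 \cdot 3\alphamax \leq \eta_i$) yields
\[
\fobj(w) - \fobj(w+u) \;\geq\; \frac{c}{\alphamax}\sum_{i\in[m]} \frac{w_i^{1+\alpha}(\rho_i(w)-1)^2}{\rho_i(w)+1}
\;=:\; \frac{c}{\alphamax}\,\Delta(w)
\]
for an absolute constant $c>0$; in particular, $\progcode$ is monotone in $\fobj$, which is the first claim of the lemma.

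Next I would prove a PL inequality $\fobj(w) - \fobj(\wopt) \leq \tfrac{C(2+\alpha)}{\alpha}\Delta(w)$. The cleanest route is to write $\fobj(w) - \fobj(\wopt) \leq \tfrac{1}{2}\|\nabla\fobj(w)\|^2_{H^{-1}}$ for any $H$ with $\nabla^2\fobj \succeq H$, and then use the diagonal lower bound $\nabla^2\fobj(w) \succeq \alpha\, \mW^{\alpha-1}$. With $g_i = \nabla_i\fobj(w) = w_i^\alpha(1-\rho_i)$, this gives $\fobj(w) - \fobj(\wopt) \leq \tfrac{1}{2\alpha}\sum_i g_i^2 w_i^{1-\alpha} = \tfrac{1}{2\alpha}\sum_i w_i^{1+\alpha}(\rho_i-1)^2$, which under the rounding condition (so $\rho_i + 1 \leq 2+\alpha$) is at most $\tfrac{2+\alpha}{2\alpha}\Delta(w)$. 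Combining with the per-step decrease, each $\progcode$ step contracts the suboptimality by a factor of $1 - \Theta(\alpha/(\alphamax(2+\alpha))) = 1 - \Theta(1/\max(\alpha,\alpha^{-1}))$, which—telescoped over iterations, in between which $\roundcode$ by assumption does not increase $\fobj$—produces a geometric-rate convergence of $\fobj(w^{(k)}) - \fobj(\wopt)$.

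Finally, I would bound the initial suboptimality $\fobj(w^{(0)}) - \fobj(\wopt)$ by $\mathrm{poly}(m)$. Since $w^{(0)} = (n/m)\1$, $\mA$ is full rank, and $\wopt_i \leq 1$ (from $\sum_i \wopt_i \leq n \leq m$ and standard bounds on Lewis weights), both $-\log\det(\mA^\top\mW\mA)$ and $\tfrac{1}{1+\alpha}\1^\top w^{1+\alpha}$ at $w^{(0)}$ and $\wopt$ can be bounded by $O(n \log(m/n))$ (using that the initial leverage scores are $n/m$ and standard log-determinant estimates); a crude bound $\fobj(w^{(0)}) - \fobj(\wopt) \leq \mathrm{poly}(m)$ suffices. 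Taking logs, the number of $\progcode$ steps needed to reach $\fobj(w)-\fobj(\wopt) \leq \teps$ is $\totaliters = O(\max(\alpha,\alpha^{-1}) \log(m/\teps))$, as claimed. The main obstacle is the smoothness control in the per-step decrease: we must ensure that with step size $1/(3\alphamax)$ the Hessian does not blow up along the update segment, and that the rounding condition exactly tames the ``bad'' $\mP^{(2)}$ contribution—this is precisely where the condition $\rho_{\max}(w)\leq 1+\alpha$ (invariantly maintained by $\roundcode$) is used.
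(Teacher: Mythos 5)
Your overall plan---per-step decrease, suboptimality bound, initial error, chaining---mirrors the paper's proof (which combines Lemma~\ref{lem:parallel_progress}, Lemma~\ref{lem_subopt}, and Lemma~\ref{lem_initialError}). Your per-step decrease argument and initial-error bound are essentially the paper's, and the final telescoping is standard and correct.

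However, there is a genuine gap in your suboptimality step, and it is the crux of the argument. You write
\[
\fobj(w) - \fobj(\wopt) \;\le\; \tfrac12\|\nabla\fobj(w)\|^2_{H^{-1}},\qquad H = \alpha\mW^{\alpha-1}.
\]
For this inequality to hold with a fixed $H$, you need $\nabla^2\fobj(y)\succeq H$ for all $y$ on the segment from $w$ to $\wopt$ (this is exactly the strong-convexity argument you cite). But $\alpha\mW^{\alpha-1}$ is a lower bound on the Hessian only at the single point $w$; at a generic $y$ one has $\nabla^2\fobj(y)\succeq\alpha\,\mathbf{Y}^{\alpha-1}$, and comparing $\mathbf{Y}^{\alpha-1}$ to $\mW^{\alpha-1}$ requires a coordinate-wise order between $y$ and $w$ whose direction flips with the sign of $\alpha-1$. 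In neither regime ($\alpha\in(0,1]$ or $\alpha>1$) does one control $\wopt$ relative to $w$ tightly enough for this to hold, so $\fobj$ is not strongly convex relative to the $w$-dependent metric $\alpha\mW^{\alpha-1}$ along the segment, and the PL bound is unjustified. This is not a constant-factor issue: without a valid suboptimality bound in terms of $\Delta(w)$, the geometric contraction does not follow.

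The paper's Lemma~\ref{lem_subopt} takes a route that avoids this problem entirely. It uses convexity of only the $-\log\det$ term to get a first-order lower bound on $\fobj(\wopt)$, and then, because the regularizer $\frac{1}{1+\alpha}\ones^\top w^{1+\alpha}$ is separable, it \emph{exactly} minimizes the resulting per-coordinate expression over $v\ge 0$ by one-dimensional calculus, arriving at
$\fobj(\wopt)-\fobj(w)\ge\sum_i\frac{w_i^{1+\alpha}}{1+\alpha}\bigl[-\alpha\rho_i^{1+1/\alpha}+(1+\alpha)\rho_i-1\bigr]$.
The rounding condition $\rho_i\le 1+\alpha$ enters only afterward, to Taylor-bound $\rho_i^{1/\alpha}$ via $1+x\le e^x\le 1+x+x^2$ for $x\le1$. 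This delivers $\fobj(w)-\fobj(\wopt)\le 5\max(1,\alpha^{-1})\sum_i w_i^{1+\alpha}(\rho_i-1)^2/(\rho_i+1)$ without invoking any global Hessian lower bound. You should replace your PL step with an argument of this form (treating the regularizer exactly rather than through a local Hessian); the rest of your proof then goes through as written.
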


As is often the case to obtain such an iteration complexity, we prove Lemma~\ref{lem_fast_steps_count} by incorporating the maximum sub-optimality in function value (Lemma~\ref{lem_subopt}) and the initial error bound (Lemma~\ref{lem_initialError}) into the inequality describing minimum function progress (Lemma~\ref{lem:parallel_progress}). The assumption that $\roundcode({}\cdot{})$ does not increase the value of the objective is justified in Lemma~\ref{lem:parallel_iteration}. 

Since our algorithm leverages quasi-Newton steps, we begin our analysis by stating the gradient and Hessian of the objective in \eqref{def_objective} as well as the error at the initial vector $w^{(0)}$, as measured against the optimal function value. The Hessian below is positive semidefinite when $\alpha \geq 0$ (equivalently, when $p\geq 2$) and not necessarily so otherwise. Consequently, the objective is convex for $\alpha \geq 0$,  and we therefore consider only this setting throughout.

\begin{restatable}[Gradient and Hessian]{lemmma}{LEMgradAndHess}\label{lem_gradAndHess}
For any $w \in \R^m_{>0}$,  the objective in  \eqref{def_objective}, $\fobj(w) = -\log \det (\mA^\top \mW \mA) + \frac{1}{1+\alpha}\ones^\top w^{1+\alpha}$, has gradient and Hessian given by the following expressions.  
$$
\left[ \nabla \fobj(w)\right]_i = w_i^{-1} \cdot (w_i^{1+\alpha} - \sigma_i(w)) \text{ and } \nabla^2 \fobj(w) = \mW^{-1} \mP(w)^{(2)} \mW^{-1} + \alpha \mW^{\alpha-1}.
$$ 
\end{restatable}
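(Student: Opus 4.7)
}
The plan is a routine matrix-calculus computation, handling the two terms of $\fobj$ separately and then packaging the Hessian of $-\log\det(\mA^\top \mW \mA)$ into the entrywise-square form $\mW^{-1}\mP(w)^{(2)}\mW^{-1}$.

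First I would compute the gradient. Writing $\mA^\top \mW \mA = \sum_{j=1}^m w_j a_j a_j^\top$, Jacobi's formula gives
\[
\frac{\partial}{\partial w_i}\log\det(\mA^\top \mW \mA) \;=\; \tr\!\left((\mA^\top \mW \mA)^{-1} a_i a_i^\top\right) \;=\; a_i^\top (\mA^\top \mW \mA)^{-1} a_i \;=\; \frac{\sigma_i(w)}{w_i},
\]
using the definition of $\sigma_i(w)$ in~\eqref{def-levscoreW}. Direct differentiation of $\frac{1}{1+\alpha}\mathbf{1}^\top w^{1+\alpha}$ yields $w_i^\alpha$ in the $i$'th coordinate. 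Combining these gives $[\nabla\fobj(w)]_i = w_i^\alpha - w_i^{-1}\sigma_i(w) = w_i^{-1}(w_i^{1+\alpha}-\sigma_i(w))$, matching the claim.

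Next I would compute the Hessian. For the regularizer, $\frac{\partial}{\partial w_j} w_i^\alpha = \alpha w_i^{\alpha-1}\delta_{ij}$, which contributes the diagonal matrix $\alpha \mW^{\alpha-1}$. For the log-det term, I would use the matrix-inverse derivative identity
\[
\frac{\partial}{\partial w_j}(\mA^\top \mW \mA)^{-1} \;=\; -(\mA^\top \mW \mA)^{-1} a_j a_j^\top (\mA^\top \mW \mA)^{-1},
\]
so that
\[
\frac{\partial^2}{\partial w_i\,\partial w_j}\bigl(-\log\det(\mA^\top \mW \mA)\bigr) \;=\; -\frac{\partial}{\partial w_j}\!\left(-a_i^\top (\mA^\top \mW \mA)^{-1} a_i\right)^{(\cdot)} \;=\; \bigl(a_i^\top (\mA^\top \mW \mA)^{-1} a_j\bigr)^2.
\]
Finally I would identify this matrix with $\mW^{-1}\mP(w)^{(2)}\mW^{-1}$: since $\mP(w)_{ij} = w_i^{1/2}w_j^{1/2}\,a_i^\top(\mA^\top\mW\mA)^{-1}a_j$, we have $\mP(w)_{ij}^2 = w_i w_j \bigl(a_i^\top(\mA^\top\mW\mA)^{-1}a_j\bigr)^2$, so $[\mW^{-1}\mP(w)^{(2)}\mW^{-1}]_{ij} = (w_iw_j)^{-1}\mP(w)_{ij}^2$ agrees with the Hessian entry computed above. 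Adding the regularizer contribution gives the stated formula.

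There is no real obstacle here, only bookkeeping. The one step to be careful with is keeping track of the two square-root factors of $w$ hidden inside $\mP(w)$ when converting the symmetric bilinear form $(a_i^\top(\mA^\top\mW\mA)^{-1}a_j)^2$ into the Schur-product matrix $\mP(w)^{(2)}$ conjugated by $\mW^{-1}$; this is exactly why the claim has $\mW^{-1}$ on both sides rather than, say, $\mW^{-1/2}$.
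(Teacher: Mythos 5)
Your proof is correct and follows essentially the same matrix-calculus route as the paper: differentiate $\log\det$ to get $-w_i^{-1}\sigma_i(w)$, differentiate again via the matrix-inverse derivative identity to get $(a_i^\top(\mA^\top\mW\mA)^{-1}a_j)^2$, and repackage that as $\mW^{-1}\mP(w)^{(2)}\mW^{-1}$. The only cosmetic difference is that the paper phrases the second differentiation using directional derivatives (bilinear form $\langle h, \Diag(\mM\mH\mM)\rangle$) while you work entrywise with $\partial^2/\partial w_i\,\partial w_j$; the underlying computation and the final identification with the Schur square conjugated by $\mW^{-1}$ are identical.
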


\begin{restatable}[Initial Sub-Optimality]{lemmma}{LEMinitialError}\label{lem_initialError}
At the start of Algorithm~\ref{alg_ourAlg}, the value of the  objective of  \eqref{def_objective} differs from the optimum objective value as $\fobj(w^{(0)}) \leq \fobj(\wopt) + n \log (m/n)$. 
\end{restatable}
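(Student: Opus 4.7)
The plan is to compute $\fobj(w^{(0)})$ explicitly using the simple form of $w^{(0)}$ and to lower bound $\fobj(\wopt)$ by exploiting the first-order optimality conditions together with a basic boundedness property of the optimal weights.

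For the first step, since $w_i^{(0)} = n/m$ for every $i$, the weighted Gram matrix is $\mA^\top \mW^{(0)} \mA = (n/m)\mA^\top \mA$, so $-\log\det(\mA^\top \mW^{(0)} \mA) = n\log(m/n) - \log\det(\mA^\top\mA)$. The regularization term becomes $\frac{1}{1+\alpha}\ones^\top (w^{(0)})^{1+\alpha} = \frac{m}{1+\alpha}(n/m)^{1+\alpha} = \frac{n}{1+\alpha}(n/m)^{\alpha}$. This yields a clean closed form for $\fobj(w^{(0)})$.

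For the second step, by Lemma~\ref{lem_gradAndHess}, setting $\nabla \fobj(\wopt) = 0$ gives $\wopt_i^{1+\alpha} = \sigma_i(\wopt)$. Summing over $i$ and using the leverage score identity $\sum_{i} \sigma_i(w) = n$ from Fact~\ref{fact_projmatrices} yields $\ones^\top \wopt^{1+\alpha} = n$, so $\fobj(\wopt) = -\log\det(\mA^\top \mWopt \mA) + \frac{n}{1+\alpha}$. Moreover, since $\sigma_i(\wopt) \leq 1$ (again by Fact~\ref{fact_projmatrices}), the optimality condition forces $\wopt_i \leq 1$, hence $\mWopt \preceq \mathbf{I}$ and $\mA^\top \mWopt \mA \preceq \mA^\top \mA$, which by monotonicity of $\log\det$ gives $-\log\det(\mA^\top \mWopt \mA) \geq -\log\det(\mA^\top \mA)$.

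Subtracting the two expressions, the $\log\det(\mA^\top\mA)$ terms cancel and we are left with $\fobj(w^{(0)}) - \fobj(\wopt) \leq n\log(m/n) + \frac{n}{1+\alpha}\bigl((n/m)^{\alpha} - 1\bigr)$, and the last term is nonpositive because $n \leq m$ and $\alpha > 0$, yielding the claimed bound. There is no genuine obstacle here; the only thing to be careful about is deriving $\wopt_i \leq 1$ from the optimality condition, which is what unlocks the lower bound on the $\log\det$ term without any spectral assumption on $\mA$.
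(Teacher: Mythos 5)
Your proposal is correct and follows essentially the same path as the paper's proof: compute $\fobj(w^{(0)})$ in closed form, use the optimality condition $\sigma(\wopt)=(\wopt)^{1+\alpha}$ together with $\sigma_i \leq 1$ to get $\wopt_i \leq 1$ and hence $\mA^\top\mWopt\mA \preceq \mA^\top\mA$, and use $\sum_i \sigma_i(\wopt)=n$ to control the regularization term. The only cosmetic difference is that you carry the explicit expression $\frac{n}{1+\alpha}((n/m)^\alpha - 1)$ and then observe it is nonpositive, whereas the paper states the equivalent inequality $\ones^\top(w^{(0)})^{1+\alpha} \leq \ones^\top(\wopt)^{1+\alpha}$ directly from $m \geq n$.
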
 
\subsubsection{Minimum Progress and Maximum Sub-optimality in  $\progcode({}\cdot{})$}
We first prove an upper bound on objective sub-optimality, necessary to obtain a runtime polylogarithmic in $1/\eps$. Often, to obtain such a rate, the bound involving objective sub-optimality has a quadratic term derived from the Hessian; our lemma is somewhat non-standard in that it uses only the convexity of $\fobj$. Note that this lemma crucially uses  \eqref{eq_grad_cond_for_prog}.  

\begin{restatable}[Objective Sub-optimality]{lemmma}{LEMsuboptkeylemma}\label{lem_subopt}
Suppose $w \in \reals^m_{> 0}$ and $\rho_{\max}(w) \leq 1 + \alpha$. Then the  value of the objective of \eqref{def_objective} at $w$ differs from the optimum objective value as follows. 
\begin{align*}
\fobj(w) - \fobj(\wopt)
&\leq  \sum_{i \in [m]}
\frac{w_i^{1+ \alpha}}{1 + \alpha} \left(1 + \frac{\rho_i(w)}{\alpha}\right)
 \left( \rho_i(w) - 1 \right)^2  \leq  5 \max\{1, \alpha^{-1}\}
\sum_{i \in [m]}
w_i^{1+ \alpha} \frac{(\rho_i(w) - 1)^2}{\rho_i(w) + 1}. 
\end{align*}
\end{restatable}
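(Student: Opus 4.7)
The proof uses only convexity, as emphasized in the statement. My plan is to lower-bound $\fobj(\wopt)$ by the minimum of a tractable lower bound on $\fobj$. Concavity of $\log\det(\mA^\top(\cdot)\mA)$ together with $\partial_{w_i}\log\det(\mA^\top\mW\mA)=\sigma_i(w)/w_i$ yield, for every $w^*\in\R^m_{>0}$,
\[
-\log\det(\mA^\top\mW^*\mA)\;\geq\;-\log\det(\mA^\top\mW\mA)+n-\sum_{i\in[m]}\frac{\sigma_i(w)}{w_i}w_i^{*},
\]
using $\sum_i\sigma_i(w)=n$ from Fact~\ref{fact_projmatrices}. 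Hence the function
\[
\phi(w^*) \;:=\; -\log\det(\mA^\top\mW\mA)+n-\sum_{i\in[m]}\frac{\sigma_i(w)}{w_i}w_i^{*}+\frac{1}{1+\alpha}\sum_{i\in[m]}(w_i^{*})^{1+\alpha}
\]
is a pointwise lower bound on $\fobj(w^*)$, so $\fobj(\wopt)\geq\min_{w^*>0}\phi(w^*)$. Since $\phi$ is separable and strictly convex, its coordinatewise minimum is at $w_i^{*}=(\sigma_i(w)/w_i)^{1/\alpha}=w_i\rho_i(w)^{1/\alpha}$. Substituting and using $n=\sum_i w_i^{1+\alpha}\rho_i(w)$, an algebraic cancellation yields
\[
\fobj(w)-\fobj(\wopt) \;\leq\; \sum_{i\in[m]}\frac{w_i^{1+\alpha}}{1+\alpha}\,\psi(\rho_i(w)), \qquad \psi(\rho)\;:=\;1-(1+\alpha)\rho+\alpha\rho^{1+1/\alpha}.
\]

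The first inequality of the lemma then reduces to the scalar bound $\psi(\rho)\leq(1+\rho/\alpha)(\rho-1)^2$ for $\rho\in[0,1+\alpha]$, which is precisely where the rounding condition $\rho_{\max}(w)\leq 1+\alpha$ is used. I would prove it by Taylor's theorem with integral remainder: since $\psi(1)=\psi'(1)=0$ and $\psi''(s)=\tfrac{1+\alpha}{\alpha}s^{1/\alpha-1}$, one has $\psi(\rho)=\int_1^\rho(\rho-s)\psi''(s)\,ds$, and the integral can be bounded by endpoint values of $\psi''$ after splitting on the sign of $\rho-1$ and on whether $\alpha\geq 1$ or $\alpha<1$ (the latter controlling the monotonicity of $\psi''$). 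The constraint $\rho\leq 1+\alpha$, combined with the universal inequality $(1+\alpha)^{1/\alpha}\leq e$, tames $s^{1/\alpha-1}$ throughout $[\min(1,\rho),\max(1,\rho)]\subseteq[0,1+\alpha]$. The second inequality of the lemma is then a one-variable algebraic check: from $\rho\leq 1+\alpha$ one has $(\alpha+\rho)(\rho+1)/[\alpha(1+\alpha)]\leq (1+2\alpha)(2+\alpha)/[\alpha(1+\alpha)]$, which a short case analysis on $\alpha\geq 1$ versus $\alpha<1$ confirms is at most $5\max\{1,\alpha^{-1}\}$.

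The main obstacle is the scalar inequality $\psi(\rho)\leq(1+\rho/\alpha)(\rho-1)^2$. The most naive approach---replacing $\psi''$ inside the integral by its maximum on $[1,\rho]$---fails near $\rho=1+\alpha$ for small $\alpha$: there both $\psi(1+\alpha)$ and the target $(1+\rho/\alpha)\alpha^2$ scale like $\alpha$, with ratio tending to $e-2\approx 0.72$, so the inequality barely holds and demands a quantitatively careful bound. A robust alternative is to establish non-negativity of $g(\rho):=(1+\rho/\alpha)(\rho-1)^2-\psi(\rho)$ on $[0,1+\alpha]$ directly, by checking $g(0)=g(1)=0$, $g'(1)=0$, $g''(1)=(1+\alpha)/\alpha>0$, and $g(1+\alpha)=\alpha(1+\alpha)[3-(1+\alpha)^{1/\alpha}]\geq 0$ (using $(1+\alpha)^{1/\alpha}\leq e<3$), then analyzing the sign of $g'$ to rule out interior violations of $g\geq 0$.
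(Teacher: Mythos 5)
Your structural reduction is exactly the one the paper uses: linearize $-\log\det(\mA^\top\mW\mA)$ at $w$, minimize the resulting separable lower bound $\phi(w^*)$ coordinate-wise, and arrive at
\[
\fobj(w)-\fobj(\wopt)\leq\sum_{i}\frac{w_i^{1+\alpha}}{1+\alpha}\psi(\rho_i(w)),\qquad \psi(\rho)=1-(1+\alpha)\rho+\alpha\rho^{1+1/\alpha},
\]
and this computation, along with your check of the second inequality, is correct. The gap is in the key scalar inequality $\psi(\rho)\leq(1+\rho/\alpha)(\rho-1)^2$ on $[0,1+\alpha]$: you acknowledge that the naive Taylor-remainder bound fails for small $\alpha$, and you then propose to salvage it by ``analyzing the sign of $g'$ to rule out interior violations of $g\geq0$'' --- but you never actually carry out that sign analysis, and it is precisely the nontrivial step. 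Checking $g(0)=g(1)=0$, $g'(1)=0$, $g''(1)>0$, and $g(1+\alpha)\geq0$ does not by itself preclude $g$ dipping below zero in the interior of $[0,1+\alpha]$.

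There is a much shorter route that you were circling around without landing on. The paper writes $\rho^{1/\alpha}=(1-(1-\rho))^{1/\alpha}\leq\exp(\tfrac{1}{\alpha}(\rho-1))\leq1+\tfrac{1}{\alpha}(\rho-1)+\tfrac{1}{\alpha^2}(\rho-1)^2$, the last step using $e^x\leq1+x+x^2$ valid for $x\leq1$ --- and $\tfrac{1}{\alpha}(\rho-1)\leq1$ is exactly where $\rho\leq1+\alpha$ is used. Substituting this polynomial upper bound on $\rho^{1/\alpha}$ into $\psi$, everything telescopes to $(1+\rho/\alpha)(\rho-1)^2$ with no residual error. Equivalently, after the substitution $u=(\rho-1)/\alpha\in(-1/\alpha,1]$, your target $\psi(\rho)\leq(1+\rho/\alpha)(\rho-1)^2$ becomes $1+u+u^2\geq(1+\alpha u)^{1/\alpha}$, and $(1+\alpha u)^{1/\alpha}\leq e^{u}\leq1+u+u^2$ for $u\leq1$ closes it immediately. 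I would recommend replacing your unfinished sign analysis of $g$ with this two-line exponential chain; it is both complete and considerably shorter.
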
 

\begin{proof}
Since $g(w) \defeq - \log \det(\mA^\top \mW \mA)$ is convex  and $[\nabla g(w)]_i = - w_i^{-1} \sigma_i(w)$, we have 
\[
g(\wopt) \geq g(w) + \nabla g(w)^\top (\wopt - w)
= g(w) + \sum_{i \in [m]} \left(- \frac{\sigma_i(w) \wopt_i}{w_i}  + \sigma_i(w))\right),
\]
and therefore, 
\begin{align*}
\fobj(\wopt) - \fobj(w) 
&= g(\wopt) - g(w) + \frac{1}{1+ \alpha} \sum_{i \in [m]} \left([\wopt]_i^{1+ \alpha} - w_i^{1 + \alpha}\right) \\
&\geq \sum_{i \in [m]} c_i
\text{ where } c_i \defeq - \frac{\sigma_i(w) \wopt_i}{w_i} + \sigma_i(w) + \frac{1}{1+ \alpha}  \left([\wopt]_i^{1+ \alpha} - w_i^{1 + \alpha}\right)\,.
\end{align*} To prove the claim, it suffices to bound each $c_i$ from below. First, note that
\begin{align}
c_i 
&\geq \min_{v \geq 0}  
-\frac{v \cdot \sigma_i(w)}{w_i} + \sigma_i(w) + \frac{1}{1+ \alpha}  \left(v^{1+ \alpha} - w_i^{1 + \alpha}\right) = - \frac{\alpha}{1 + \alpha} \left(\frac{\sigma_i(w)}{w_i}\right)^{1 + \frac{1}{\alpha}}
+ \sigma_i(w) - \frac{w_i^{1 + \alpha}}{1 + \alpha}
\nonumber
\\
&= \frac{w_i^{1+ \alpha}}{1 + \alpha} \left[ - \alpha \rho_i(w)^{1 + \frac{1}{\alpha}} + (1 + \alpha) \rho_i(w) - 1 \right] 
\label{eq:cbound_1}
\end{align}
where  the first equality  used  that the minimization problem is convex and the solutions to $-\sigma_i(w) w_i^{-1} + v^\alpha = 0$ (i.e.\ where the gradient is 0) is a minimizer, and the second equality  used  $\rho_i(w) = \sigma_i(w) / w_i^{1 + \alpha}$. Applying  $\rho_i(w) \leq 1+ \alpha$, $1 + x \leq \exp x$, and $\exp x \leq 1 + x + x^2$ for $x \leq 1$ yields  
\begin{align*}
\rho_i(w)^{\frac{1}{\alpha}} &= 
(1 - (1 - \rho_i(w)))^{\frac{1}{\alpha}} \leq  \exp(\tfrac{1}{\alpha}(\rho_i(w) - 1)) \leq 1 + \frac{1}{\alpha}(\rho_i(w) - 1) + \frac{1}{\alpha^2}(\rho_i(w) - 1)^2.
\numberthis\label{eq:cbound_2}
\end{align*}
Combining \eqref{eq:cbound_2} with \eqref{eq:cbound_1} yields that 
\begin{align*}
c_i &\geq 
\frac{w_i^{1+ \alpha}}{1 + \alpha} \left[ - \alpha \rho_i(w) \left[
1 + \left(\frac{\rho_i(w) - 1}{\alpha}\right) + \left(\frac{\rho_i(w) - 1}{\alpha}\right)^2
\right]
 + (1 + \alpha) \rho_i(w) - 1 \right] \\
&= 
\frac{w_i^{1+ \alpha}}{1 + \alpha} \left[-1 + 2 \rho_i(w) - \rho_i(w)^2 - \frac{\rho_i(w)}{\alpha} \cdot (\rho_i(w) - 1)^2 \right] 
= 
-
\frac{w_i^{1+ \alpha}}{1 + \alpha} \left(1 + \frac{\rho_i(w)}{\alpha}\right)
\cdot \left( \rho_i(w) - 1 \right)^2
\end{align*}
The claim then follows from the  fact that for $\rho_i(w) \leq 1 + \alpha$, we have $\frac{(1 + \rho_i(w) \alpha^{-1}) (1 + \rho_i(w))}{1 + \alpha}
\leq \frac{1}{1 + \alpha}+ \frac{1}{\alpha} + 1 + 1 + \frac{1}{\alpha}
\leq 5 \max\{1, \alpha^{-1}\}$. 
\end{proof}

\begin{restatable}[Function Decrease in $\progcode({}\cdot{})$]{lemmma}{LEMminfndecrease}\label{lem:parallel_progress}
Let $w, \eta \in \R^m_{> 0}$ with $\eta_i \in [0, \tfrac{1}{3\alphamax}]$  for all $i \in [m]$. Further, let $w^+ = \progcode(w, [m], \eta)$, where $\progcode$ is defined in \eqref{eq-wratiostep}. 
Then, $w^+ \in \R^m_{>0}$ with the following decrease in function objective. \[\fobj(w^+) \leq  \fobj(w)  - \sum_{i \in [m]} \frac{\eta_i}{2} \cdot w_i^{1 + \alpha} \cdot \frac{(\rho_i(w) - 1)^2}{\rho_i(w) + 1}.\] 
\end{restatable}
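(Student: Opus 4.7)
The plan is to split $\fobj = g + h$ with $g(w) \defeq -\log\det(\mA^\top \mW \mA)$ and $h(w) \defeq \frac{1}{1+\alpha}\mathbf{1}^\top w^{1+\alpha}$, bound $g(w^+) - g(w)$ and $h(w^+) - h(w)$ separately by sharp second-order expansions, and then combine so that the first-order terms (which must reproduce exactly the claimed bound) dominate the second-order remainders by a factor of two. I set $\mu_i \defeq \eta_i\, \frac{\rho_i(w) - 1}{\rho_i(w) + 1}$ so that $w_i^+ = w_i(1 + \mu_i)$; since $|\mu_i| \leq \eta_i \leq \frac{1}{3 \alphamax} \leq \frac{1}{3}$, we get $w^+ \in \R^m_{>0}$ immediately.

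For the log-determinant piece, I write $g(w^+) - g(w) = -\log\det(\mathbf{I}_n + \mathbf{E})$ with $\mathbf{E} \defeq \mM^\top \mathrm{diag}(\mu)\, \mM$ and $\mM \defeq \mW^{1/2} \mA (\mA^\top \mW \mA)^{-1/2}$; observe $\mM^\top \mM = \mathbf{I}_n$ and $\mM \mM^\top = \mP(w)$. The variational form $u^\top \mathbf{E} u = \sum_i \mu_i (\mM u)_i^2$ together with $\|\mM u\| = \|u\|$ pins the spectrum of $\mathbf{E}$ in $[\min_i \mu_i, \max_i \mu_i] \subseteq [-\tfrac{1}{3}, \tfrac{1}{3}]$. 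Applying the elementary inequality $-\log(1 + y) \leq -y + y^2$ (valid on $|y| \leq \tfrac{1}{3}$) eigenvalue-wise yields
\[
g(w^+) - g(w) \leq -\mathrm{tr}(\mathbf{E}) + \mathrm{tr}(\mathbf{E}^2) = -\sum_i \mu_i \sigma_i(w) + \mu^\top \mP(w)^{(2)} \mu \leq -\sum_i \mu_i \sigma_i(w) + \sum_i \mu_i^2 \sigma_i(w),
\]
where the identities use $\mM \mM^\top = \mP(w)$, and the last step invokes $\mP(w)^{(2)} \preceq \mathbf{\Sigma}(w)$ from Fact~\ref{fact_projmatrices}.

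For $h$, Taylor's theorem with Lagrange remainder gives $(1 + x)^{1+\alpha} \leq 1 + (1+\alpha) x + \tfrac{\alpha (1 + \alpha)}{2} (1 + \xi_x)^{\alpha - 1} x^2$ for some $\xi_x$ between $0$ and $x$. A short case analysis on $\alpha \lessgtr 1$ and on the sign of $x$, using $|x| \leq \tfrac{1}{3 \alphamax}$, shows $(1 + \xi_x)^{\alpha-1} \leq \tfrac{3}{2}$ uniformly; the binding case is $\alpha \to 0^+$ with $x = -\tfrac{1}{3}$, giving $(2/3)^{-1} = 3/2$. Summed coordinate-wise this produces
\[
h(w^+) - h(w) \leq \sum_i w_i^{1+\alpha} \mu_i + \frac{3\alpha}{4} \sum_i w_i^{1+\alpha} \mu_i^2.
\]

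Adding the two inequalities, substituting $\sigma_i(w) = \rho_i(w)\, w_i^{1+\alpha}$ and $\mu_i = \eta_i \tfrac{\rho_i - 1}{\rho_i + 1}$, and using $\mu_i (1 - \rho_i) = -\eta_i \tfrac{(\rho_i - 1)^2}{\rho_i + 1}$, collapses everything to
\[
\fobj(w^+) - \fobj(w) \leq \sum_{i \in [m]} \eta_i\, w_i^{1+\alpha}\, \frac{(\rho_i - 1)^2}{\rho_i + 1} \left[ -1 + \eta_i\, \frac{\rho_i + 3\alpha/4}{\rho_i + 1} \right].
\]
The claim follows once I verify $\eta_i\, \tfrac{\rho_i + 3\alpha/4}{\rho_i + 1} \leq \tfrac{1}{2}$ uniformly in $\rho_i \geq 0$. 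Since $\tfrac{\rho_i + 3\alpha/4}{\rho_i + 1} \leq \max\{1, \tfrac{3 \alpha}{4}\}$, this splits into two cases: if $\alpha < \tfrac{4}{3}$ the maximum is $1$ and $\eta_i \leq \tfrac{1}{3}$ is enough; if $\alpha \geq \tfrac{4}{3}$ then $\alphamax = \alpha$ and $\eta_i \cdot \tfrac{3\alpha}{4} \leq \tfrac{1}{4}$. The main obstacle I anticipate is precisely this balance of constants: the Lagrange-remainder factor $\tfrac{3}{2}$ and the step cap $\tfrac{1}{3 \alphamax}$ are tight against each other, and both must be verified sharply enough to keep the bracket below $-\tfrac{1}{2}$.
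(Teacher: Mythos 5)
Your proof is correct, and while it shares the overall skeleton of the paper's argument (a second-order expansion in which the first-order term produces exactly the claimed decrement and the second-order remainder is shown to eat at most half of it), the technical execution differs in a useful way. The paper applies the multivariate Lagrange form of Taylor's theorem to $\fobj$ as a whole, which forces it to bound $\nabla^2\fobj(\wmid)$ at an unknown intermediate point $\wmid$ and to transfer leverage scores and powers of $\wmid$ back to $w$ via $(1-\tfrac{1}{3\alphamax})^{-2}$-type distortion factors; the constant $3\alphamax$ in front of the quadratic form absorbs all of that slack at once. You instead split $\fobj = g+h$ and treat each piece with quantities evaluated only at $w$: for the log-determinant you use the exact identity $g(w^+)-g(w) = -\log\det(\mathbf{I}+\mathbf{E})$ with $\mathbf{E}=\mM^\top\Diag(\mu)\mM$, control the spectrum of $\mathbf{E}$ by the variational argument, and apply $-\log(1+y)\le -y+y^2$ eigenvalue-wise, recovering $-\tr(\mathbf{E})+\tr(\mathbf{E}^2) = -\sum_i\mu_i\sigma_i(w)+\mu^\top\mP(w)^{(2)}\mu$ before invoking Fact~\ref{fact_projmatrices}; for the separable power term you use a scalar Lagrange remainder with the uniform bound $(1+\xi)^{\alpha-1}\le\tfrac32$. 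All the individual estimates check out (the inequality $-\log(1+y)\le -y+y^2$ does hold on $|y|\le\tfrac13$, the two cases for $(1+\xi)^{\alpha-1}$ are handled correctly, and $\sup_{\rho\ge 0}\tfrac{\rho+3\alpha/4}{\rho+1}=\max\{1,\tfrac{3\alpha}{4}\}$ gives the final $\tfrac12$). What your route buys is that the only place an intermediate point appears is in the one-dimensional remainder for $h$, making the constants easier to audit; what the paper's route buys is a reusable Loewner bound on $\nabla^2\fobj$ that it states once and applies uniformly.
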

The proof of this lemma resembles that of quasi-Newton method: first, we write a second-order Taylor approximation of $\fobj(w^+)$ around $w$ and apply  Fact~\ref{fact_projmatrices} to Lemma~\ref{lem_gradAndHess} to obtain the upper bound on Hessian: $
\nabla^2 \fobj(\wmid) = \mwmid^{-1} \mP(\wmid)^{(2)} \mwmid^{-1}  +\alpha \mwmid^{\alpha-1} 
\preceq \mwmid^{-1} \Sigma(\wmid) \mwmid^{-1} + \alpha \mwmid^{\alpha-1}.$ We further use the expression for $\nabla \fobj(w)$ in this second-order approximation and simplify to obtain the claim, as detailed in Section~\ref{sec:technicalproofsapp}.

\subsubsection{Iteration Complexity of $\progcode({}\cdot{})$}\label{sec:roundparallelanalysis}
\begin{proof}[Proof of Lemma~\ref{lem_fast_steps_count}] Since Algorithm~\ref{alg_ourAlg} calls $\progcode({}\cdot{})$ after running $\roundcode({}\cdot{})$, the requirement $\rho_{\max}(w)\leq 1+\alpha$ in Lemma~\ref{lem_subopt} is met. Therefore, we may combine Lemma~\ref{lem_subopt} alongwith Lemma~\ref{lem:parallel_progress} and our choice of $\eta_i = \frac{1}{3\alphamax}$ in  Algorithm~\ref{alg_ourAlg} to get a geometric decrease in function error as follows.   
\begin{align*}
\fobj(w^{+})-\fobj(\wopt) & \leq \fobj(w)-\fobj(\wopt)-\frac{1}{6 \max(\alpha, 1)}\sum_{i=1}^{m}w_i^{1+\alpha} \frac{(\rho_{i}(w)-1 )^{2}}{\rho_{i}(w)+1 }\\
 & \leq\left( 1-\frac{1}{30 \max(1,\alpha) \cdot \max(1,\alpha^{-1})} \right)(\fobj(w)-\fobj(\wopt)). \numberthis\label{fastitercount-1}
\end{align*} 
We apply this inequality recursively over all iterations of Algorithm~\ref{alg_ourAlg}, while also using  the assumption that $\roundcode({}\cdot{})$ does not increase the objective value. Setting the final value of \eqref{fastitercount-1} to $\teps$,  bounding the initial error as $\fobj(w)-\fobj(\wopt)\leq n \log (m/n)\leq m^2$ by Lemma~\ref{lem_initialError},  observing  $\max(1,\alpha) \cdot \max(1,\alpha^{-1}) = \max(\alpha, \alpha^{-1})$, and taking logarithms on both sides of the inequality  gives the claimed iteration complexity of $\progcode({}\cdot{})$. 
\end{proof}
\section{Analysis of $\roundcode({}\cdot{})$: The Parallel Algorithm}\label{sec-slowiters-ana}
The main export of  this section is the  proof of our main theorem about the parallel algorithm (Theorem~\ref{thm_main_thm}). This proof combines the iteration count of $\progcode({}\cdot{})$ from the preceding section with the analysis of Algorithm~\ref{alg_roundparallel} (invoked by $\roundcode({}\cdot{})$ in the parallel setting), shown next. In Lemma~\ref{lem:parallel_iteration}, we show that $\roundparallelcode({}\cdot{})$ decreases the function objective, thereby justifying the key assumption in Lemma~\ref{lem_fast_steps_count}. Lemma~\ref{lem:parallel_iteration} also shows an upper bound on the new value of $\rho$ after one \texttt{while} loop of $\roundparallelcode({}\cdot{})$, and by combining this with the maximum value of $\rho$ for termination in Algorithm~\ref{alg_roundparallel}, we get the iteration complexity of $\roundparallelcode({}\cdot{})$ in Corollary~\ref{cor:roundparallelcodeiters}. 
\begin{lemma}[Outcome of $\roundparallelcode({}\cdot{})$]\label{lem:parallel_iteration}Let  $w^+\in \reals^m_{>0}$ be the state of $w\in \reals^m_{>0}$ at the end of one $\texttt{while}$ loop of $\roundparallelcode({}\cdot{})$ (Algorithm~\ref{alg_roundparallel}). Then,  $\fobj(w^+) \leq \fobj(w)$ 
and $\rho_{\max}(w^+) \leq  (1+\frac{\alpha}{3\alphamax(2+\alpha)})^{-\alpha} \rho_{\max}(w)$.
\end{lemma}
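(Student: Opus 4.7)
The plan is to handle the two conclusions in order, both driven by the observation that one execution of the $\texttt{while}$ body of Algorithm~\ref{alg_roundparallel} touches only coordinates in $\coordset = \{i : \rho_i(w) > 1+\alpha\}$, and on such coordinates the step factor $\delta_i := \tfrac{1}{3\alphamax}\cdot\tfrac{\rho_i(w)-1}{\rho_i(w)+1}$ is strictly positive. Consequently $w^+_i = w_i(1+\delta_i) > w_i$ on $\coordset$ and $w^+_i = w_i$ off $\coordset$, so $w^+ \geq w$ entrywise, and therefore $\mA^\top \mW^+ \mA \succeq \mA^\top \mW \mA$ in the PSD order. This monotonicity is the engine of both parts.

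For the function-value inequality, I would recast the sparse-coordinate update as a full-coordinate $\progcode$ call with step vector $\tilde\eta \in \R^m$ equal to $\tfrac{1}{3\alphamax}$ on $\coordset$ and $0$ elsewhere; since $\tilde\eta_i \in [0,\tfrac{1}{3\alphamax}]$ for every $i$, Lemma~\ref{lem:parallel_progress} applies verbatim and yields
\[
\fobj(w^+) \leq \fobj(w) - \frac{1}{6\alphamax}\sum_{i\in\coordset} w_i^{1+\alpha}\frac{(\rho_i(w)-1)^2}{\rho_i(w)+1},
\]
each summand being nonnegative, so $\fobj(w^+) \leq \fobj(w)$. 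This establishes the first claim and simultaneously validates the hypothesis invoked in Lemma~\ref{lem_fast_steps_count}.

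For the bound on $\rho_{\max}$, I would invert the PSD inequality to get $(\mA^\top\mW^+\mA)^{-1}\preceq(\mA^\top\mW\mA)^{-1}$ and evaluate it on the quadratic form $a_i^\top(\cdot)a_i$, yielding $\sigma_i(w^+)/w^+_i \leq \sigma_i(w)/w_i$ for every $i\in[m]$. Writing $\rho_i(w^+) = (\sigma_i(w^+)/w^+_i)(w^+_i)^{-\alpha}$ and substituting produces the key estimate
\[
\rho_i(w^+) \leq \rho_i(w)\,(w_i/w^+_i)^{\alpha}.
\]
For $i\in\coordset$, the rounding defect $\rho_i(w) > 1+\alpha$ implies $\tfrac{\rho_i(w)-1}{\rho_i(w)+1} > \tfrac{\alpha}{2+\alpha}$, so $w^+_i/w_i > 1 + \delta_\star$ with $\delta_\star := \tfrac{\alpha}{3\alphamax(2+\alpha)}$, giving $\rho_i(w^+) \leq (1+\delta_\star)^{-\alpha}\rho_i(w) \leq (1+\delta_\star)^{-\alpha}\rho_{\max}(w)$. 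For $i\notin\coordset$, the update is trivial and $\rho_i(w^+) \leq \rho_i(w) \leq 1+\alpha$, so these indices cannot govern the maximum whenever the geometric bound $(1+\delta_\star)^{-\alpha}\rho_{\max}(w)$ has not already fallen below $1+\alpha$; taking the maximum over $i\in[m]$ then delivers $\rho_{\max}(w^+) \leq (1+\delta_\star)^{-\alpha}\rho_{\max}(w)$.

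The main obstacle is the $\rho_{\max}$ step, and within it the recognition that it is the polynomial denominator $w_i^{1+\alpha}$ in $\rho_i$ that converts an additive step of size $\delta_i$ into a multiplicative shrinking factor $(1+\delta_\star)^{-\alpha}$, while PSD monotonicity of $(\mA^\top\mW\mA)^{-1}$ keeps the numerator $\sigma_i$ from growing and spoiling the estimate. The function-decrease portion, by contrast, is essentially immediate from Lemma~\ref{lem:parallel_progress} once the sparse update is viewed as a zero-padded full-coordinate step, so the main novelty of the argument is the simultaneous use of PSD monotonicity and the explicit lower bound $\delta_i > \delta_\star$ arising from the rounding-condition violation.
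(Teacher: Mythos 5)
Your proof follows essentially the same route as the paper's: the first claim is an immediate application of Lemma~\ref{lem:parallel_progress} to the zero-padded step, and the second claim combines $w^+ \ge w$ entrywise (hence PSD monotonicity of $(\mA^\top\mW\mA)^{-1}$) with the lower bound $w_i^+/w_i \ge 1 + \tfrac{\alpha}{3\alphamax(2+\alpha)}$ for $i \in \coordset$, which comes from monotonicity of $x \mapsto \tfrac{x-1}{x+1}$ and $\rho_i(w) > 1+\alpha$. If anything you are slightly more careful than the paper: you explicitly flag that for $i \notin \coordset$ the stated inequality only holds once $(1+\delta_\star)^{-\alpha}\rho_{\max}(w) \ge 1+\alpha$ (i.e., while the loop is still active), a minor edge case the paper's proof silently restricts to $i \in \coordset$ and which is harmless for the downstream iteration-count argument in Corollary~\ref{cor:roundparallelcodeiters}.
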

\begin{proof} Each iteration of the \texttt{while} loop in $\roundparallelcode({}\cdot{})$  performs $\progcode(w, \coordset, \frac{1}{3\alphamax}\1)$ over the set of coordinates $\coordset = \{i: \rho_i(w) > 1+ \alpha\}$. Lemma~\ref{lem:parallel_progress} then immediately proves $\fobj(w^+)\leq \fobj(w)$, which is our first claim. 

To prove the second claim, note that in Algorithm~\ref{alg_roundparallel}, for every $i\in \coordset$ 
\begin{align*}
w^+_{i} &= w_i + \frac{w_{i}}{3\alphamax}\cdot\left[\frac{\rho_{i}(w)-1}{\rho_{i}(w)+1}\right] 
\geq w_i +\frac{w_{i}}{3\alphamax}\cdot\left[\frac{\alpha}{1+1+\alpha}\right]= w_{i}\cdot\left(1+\frac{\alpha}{3\alphamax(2+\alpha)}\right),
\end{align*} where the second step is by the monotonicity of $x\rightarrow \frac{x-1}{x+1}$ for $x\geq 1$. Combining it with $w_i^+ = w_i$ for all $i \notin \coordset$ implies that $w^+ \geq w$. Therefore, for all $i \in \coordset$, we have  

\begin{align*}
\rho(w^+)_i 
&= [w^+_i]^{-\alpha} [\mA (\mA^\top \mW^+ \mA)^{-1} \mA^\top]_{ii} 
\leq \left[ 1+\frac{\alpha}{3\alphamax(2+\alpha)} \right]^{-\alpha} \cdot w_i^{-\alpha} [\mA (\mA^\top \mW \mA)^{-1} \mA^\top]_{ii}. \numberthis\label{roundpar-rho-ub}
\end{align*}
\end{proof} 
\begin{corollary}\label{cor:roundparallelcodeiters}
Let $w$ be the input to $\roundparallelcode({}\cdot{})$. Then, the number of iterations of the \texttt{while} loop of $\roundparallelcode({}\cdot{})$  is at most $O\left((1+\alpha^{-2}) \log(\frac{\rho_{\max}(w) }{ 1 + \alpha}) \right)$.
\end{corollary}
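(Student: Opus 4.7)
The strategy is to iterate the per-step contraction established in Lemma~\ref{lem:parallel_iteration} until $\rho_{\max}(w)$ drops below the termination threshold $1+\alpha$, and then carefully estimate the resulting geometric rate in the two regimes $\alpha\le 1$ and $\alpha\ge 1$ determined by the definition of $\alphamax=\max(\alpha,1)$.

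More concretely, let $w^{(0)}=w$ be the input to $\roundparallelcode({}\cdot{})$ and let $w^{(k)}$ denote its state after $k$ completed iterations of the \texttt{while} loop. Applying Lemma~\ref{lem:parallel_iteration} $k$ times gives
\[
\rho_{\max}(w^{(k)}) \;\le\; \left(1+\frac{\alpha}{3\alphamax(2+\alpha)}\right)^{-\alpha k}\rho_{\max}(w^{(0)}).
\]
Since the loop terminates as soon as $\rho_{\max}\le 1+\alpha$, it suffices to pick the smallest $k$ for which the right-hand side is at most $1+\alpha$. Taking logarithms and rearranging, this is bounded by
\[
k \;\le\; \left\lceil \frac{\log\!\left(\rho_{\max}(w)/(1+\alpha)\right)}{\alpha \log\!\left(1+\frac{\alpha}{3\alphamax(2+\alpha)}\right)} \right\rceil .
\]
So everything reduces to lower-bounding the denominator $\alpha\log\!\bigl(1+\tfrac{\alpha}{3\alphamax(2+\alpha)}\bigr)$.

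I would carry this out by splitting on $\alpha$. When $\alpha\le 1$, we have $\alphamax=1$ and $\tfrac{\alpha}{3(2+\alpha)}\le\tfrac{1}{6}$, so the inequality $\log(1+x)\ge x/2$ valid on $[0,1]$ yields $\alpha\log(1+\tfrac{\alpha}{3(2+\alpha)})\ge \tfrac{\alpha^2}{6(2+\alpha)} = \Omega(\alpha^{2})$. When $\alpha\ge 1$, we have $\alphamax=\alpha$ so the argument simplifies to $\tfrac{1}{3(2+\alpha)}$, which is still at most $1/9$, and the same inequality gives $\alpha\log(1+\tfrac{1}{3(2+\alpha)})\ge \tfrac{\alpha}{6(2+\alpha)} = \Omega(1)$. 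Combining both regimes the denominator is $\Omega\!\bigl((1+\alpha^{-2})^{-1}\bigr)$, which is exactly what is needed to match the claimed bound $O\!\bigl((1+\alpha^{-2})\log(\rho_{\max}(w)/(1+\alpha))\bigr)$.

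\textbf{Main obstacle.} There is no real obstacle: the result is essentially an immediate corollary of Lemma~\ref{lem:parallel_iteration}, and the only subtlety is handling the two cases for $\alphamax$ to produce the combined $(1+\alpha^{-2})$ factor rather than separate bounds. One small care point is verifying that $\rho_{\max}$ stays well-defined across iterations (it does, since $w$ remains strictly positive under $\progcode({}\cdot{})$ by Lemma~\ref{lem:parallel_progress}), and that the geometric-decay inequality from Lemma~\ref{lem:parallel_iteration} applies cumulatively, which it does because each loop invocation takes the current $w$ as its starting point.
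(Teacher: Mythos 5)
Your proposal is correct and follows essentially the same route as the paper: iterate the per-step contraction from Lemma~\ref{lem:parallel_iteration}, impose the termination threshold $1+\alpha$, and take logarithms to solve for $k$. Your case analysis lower-bounding $\alpha\log\bigl(1+\tfrac{\alpha}{3\alphamax(2+\alpha)}\bigr)$ by $\Omega\bigl((1+\alpha^{-2})^{-1}\bigr)$ just fills in the arithmetic the paper compresses into "applying $1+x\leq \exp(x)$ and taking logarithms."
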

\begin{proof}
Let $w^{(i)}$ be the value of $w$ at the start of the $i$'th execution of the \texttt{while} loop of $\roundparallelcode({}\cdot{})$. Repeated application of Lemma~\ref{lem:parallel_iteration} over $k$ executions of the \texttt{while} loop gives $
\rho_{\max}(w^{(k)}) \leq \rho_{\max}(w) \left(1+\frac{\alpha}{3\alphamax(2+\alpha)}\right)^{-\alpha k} $. We set $\rho_{\max}(w) \left(1+\frac{\alpha}{3\alphamax(2+\alpha)}\right)^{-\alpha k}  \leq  1 + \alpha$ in accordance with the termination condition of $\roundparallelcode({}\cdot{})$. Next, applying $1+x\leq \exp(x)$, and taking logarithms on both sides yields the claimed limit on the number of iterations, $k$. 
\end{proof}
\begin{lemma}\label{lem-total-slowiters} 
 Over the entire run of Algorithm~\ref{alg_ourAlg}, the  \texttt{while} loop of $\roundparallelcode({}\cdot{})$ runs for at most  $O\left(\totaliters \cdot\alpha^{-2} \cdot \log\left(\tfrac{m}{n(1+\alpha)}\right) \right)$ iterations if $\alpha\in (0, 1]$ and $\Ord\left(\totaliters \cdot \alpha  \cdot\log\left( \tfrac{m}{n(1+\alpha)}\right) \right)$ iterations if $\alpha\geq 1$. 
\end{lemma}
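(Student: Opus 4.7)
}
The plan is to sum the per-call bound from Corollary~\ref{cor:roundparallelcodeiters} over all $\totaliters+1$ invocations of $\roundparallelcode({}\cdot{})$ (the $\totaliters$ calls inside the \texttt{for} loop plus the final call on line~\ref{line:meta-lastround}), so the whole argument reduces to controlling $\rho_{\max}$ at the input of each such invocation. Writing $w^{(k-1)}$ for the input to the $k$-th call, Corollary~\ref{cor:roundparallelcodeiters} gives a per-call bound of $O\!\bigl((1+\alpha^{-2})\log\bigl(\rho_{\max}(w^{(k-1)})/(1+\alpha)\bigr)\bigr)$, so I need to bound $\rho_{\max}(w^{(k-1)})$ for each $k$.

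For $k = 1$, this uses only the initialization: since $w^{(0)}_i = n/m$ and $\sigma_i(w^{(0)}) \leq 1$ by Fact~\ref{fact_projmatrices}, we have $\rho_i(w^{(0)}) = \sigma_i(w^{(0)}) \cdot (m/n)^{1+\alpha} \leq (m/n)^{1+\alpha}$, yielding $\log(\rho_{\max}(w^{(0)})/(1+\alpha)) = O((1+\alpha)\log(m/(n(1+\alpha))))$ (after absorbing an additive $O(\log(1+\alpha))$ term into the logarithm). For $k \geq 2$, I will show that $\rho_{\max}(w^{(k-1)}) = O(1+\alpha)$, so these contributions are each only $O(1+\alpha^{-2})$. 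The main technical step is this latter bound, which requires analyzing how one step of $\progcode({}\cdot{})$ can inflate $\rho_{\max}$ starting from the ceiling $\rho_{\max}(\widetilde{w}^{(k-1)}) \leq 1+\alpha$ guaranteed by the preceding $\roundparallelcode({}\cdot{})$.

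Concretely, $\progcode({}\cdot{})$ with $\eta = \tfrac{1}{3\alphamax}\1$ scales $w_i \mapsto w_i(1+s_i)$ with $|s_i| \leq \tfrac{1}{3\alphamax}$. Letting $\mathbf{D} = \diag(\1 + s)$ so $w^+ = \mathbf{D} w$, the operator inequality $(1-\tfrac{1}{3\alphamax})\mathbf{W} \preceq \mathbf{D}\mathbf{W} \preceq (1+\tfrac{1}{3\alphamax})\mathbf{W}$ yields
\[
\sigma_i(w^+) \leq \frac{1+s_i}{1-\frac{1}{3\alphamax}}\,\sigma_i(w)
\quad\text{and}\quad
(w_i^+)^{1+\alpha} \geq w_i^{1+\alpha}\Bigl(1-\tfrac{1}{3\alphamax}\Bigr)^{1+\alpha}.
\]
Dividing, $\rho_i(w^+) \leq C_\alpha \cdot \rho_i(w)$ with $C_\alpha := \tfrac{1+1/(3\alphamax)}{(1-1/(3\alphamax))^{2+\alpha}}$. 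A short calculation (separately for $\alpha \in (0,1]$, where $\alphamax = 1$, and $\alpha \geq 1$, where $(1-\tfrac{1}{3\alpha})^{2+\alpha} \to e^{-1/3}$) shows $C_\alpha = O(1)$ in both regimes, hence $\rho_{\max}(w^{(k)}) \leq C_\alpha(1+\alpha) = O(1+\alpha)$ for every $k \geq 1$.

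Putting the pieces together, the total iteration count is
\[
N_1 + \sum_{k \geq 2} N_k \;=\; O\!\bigl((1+\alpha^{-2})(1+\alpha)\log(m/(n(1+\alpha)))\bigr) \;+\; \totaliters \cdot O(1+\alpha^{-2}),
\]
which can be loosened to $\totaliters \cdot O\bigl((1+\alpha^{-2})(1+\alpha)\log(m/(n(1+\alpha)))\bigr)$. In the $\alpha \in (0,1]$ regime this simplifies to $O(\totaliters \alpha^{-2}\log(m/(n(1+\alpha))))$, and in the $\alpha \geq 1$ regime to $O(\totaliters \alpha\log(m/(n(1+\alpha))))$, matching the claim. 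The main obstacle is the bound $C_\alpha = O(1)$ in the second step above: it requires a careful (but routine) case split on $\alpha$ and the observation that the exponent $2+\alpha$ in the denominator of $C_\alpha$ is tamed by $\alphamax$ growing with $\alpha$; if one tried the naive step size $1$ instead of $1/(3\alphamax)$, this bound would fail for large $\alpha$, which is precisely why the algorithm chooses that step size.
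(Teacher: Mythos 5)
Your proof is correct and follows essentially the same strategy as the paper's: bound the first call's iteration count via $\rho_{\max}(w^{(0)})\leq(m/n)^{1+\alpha}$, then show that a single $\progcode$ step can inflate $\rho_{\max}$ by only an $O(1)$ factor (via the operator inequality $\mA^\top\mW^+\mA\succeq(1-\tfrac{1}{3\alphamax})\mA^\top\mW\mA$), so each of the remaining $\Theta(\totaliters)$ calls costs only $O(1+\alpha^{-2})$ iterations. Your derivation of the $O(1)$ inflation constant is in fact slightly cleaner than the paper's, which invokes the intermediate bound $\tfrac{\rho_i-1}{\rho_i+1}\geq\tfrac{-1}{2+\alpha}$ that does not hold near $\rho_i=0$; your use of the always-valid bound $\tfrac{\rho_i-1}{\rho_i+1}\geq -1$ fixes this without changing the conclusion.
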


\begin{proof}
Note that $\rho_{\max}(\frac{n}{m}) \leq (\frac{m}{n})^{1 + \alpha}$;  consequently, in the first iteration of Algorithm~\ref{alg_ourAlg}, there are at most $O((\alpha+\alpha^{-2}) \log(m /(n(1+\alpha))))$ iterations of the \texttt{while} loop of $\roundparallelcode({}\cdot{})$ by Corollary~\ref{cor:roundparallelcodeiters}. Note that between each call to $\roundparallelcode({}\cdot{})$, for all $i\in [m]$, 
\begin{align*}
w^+_{i} &= w_i + \frac{w_{i}}{3\alphamax}\cdot\left[\frac{\rho_{i}(w)-1}{\rho_{i}(w)+1}\right] 
\geq w_i +\frac{w_{i}}{3\alphamax}\cdot\left[\frac{-1}{1+1+\alpha}\right]= w_{i}\cdot\left(1-\frac{1}{(3\alphamax)(2+\alpha)}\right),
\end{align*}
where the first inequality is by using the fact that the output $w$ of $\roundparallelcode({}\cdot{})$ satisfies $\rho_{\max}(w)\leq 1+ \alpha$. Therefore, applying the same logic as in \eqref{roundpar-rho-ub}, we get that between two calls to $\roundparallelcode({}\cdot{})$, the value of 
$\rho_i(w)$ increases by at most $\left(1-\frac{1}{(3\alphamax)(2+\alpha)}\right)^{-(1+\alpha)} = O(1)$ for all $i\in [m]$. Combining this with Corollary~\ref{cor:roundparallelcodeiters} and  the total initial iteration count and observing that  $\totaliters$ is the total number of calls to $\roundparallelcode({}\cdot{})$   finishes the proof. 
\end{proof}

\subsection{Proof of Main Theorem (Parallel)}\label{sec:pf-main-thm-par}

\begin{proof}(Proof of Theorem~\ref{thm_main_thm})  First, we show correctness. Note that, as a corollary of Lemma \ref{lem_fast_steps_count},   $\fobj(w^{(\totaliters)}) \leq \fobj(\wopt) + \teps$. By the properties of $\roundcode$ as shown in Lemma~\ref{lem:parallel_iteration}, we also have that $\fobj(\wround) \leq \fobj(\wopt) + \teps$ and $\rho_{\max}(\wround)\leq 1+\alpha$ for $\wround = \roundcode(w^{(\totaliters)}, \mA, \alpha)$.  Therefore, Lemma~\ref{lem-fn-to-LW} is applicable, and by the choice of $\teps = \tfrac{\alpha^4 \eps^4}{(2 m (\sqrt{n} + \alpha) (\alpha + \alpha^{-1}))^4}$, we conclude that $\hw\in \R^m$ defined as $\hw_i = (a_i^\top (\mA^\top \mwround \mA)^{-1} a_i)^{1/\alpha}$  satisfies $\hw_i \approx_{\eps} \lw_i$ for all $i \in [m]$. Combining the iteration counts of $\progcode({}\cdot{})$ from Lemma~\ref{lem_fast_steps_count} and of $\roundparallelcode({}\cdot{})$ from Lemma~\ref{lem-total-slowiters} yields the total iteration count as $O(\alpha^{-3} \log (m/(\eps\alpha)))$ if $\alpha \leq 1$ and $O(\alpha^2 \log (m/\eps))$ if $\alpha > 1$. As stated in Section~\ref{sec:notation},  $\alpha = \frac{2}{p - 2}$, and so translating these rates in terms of $p$ gives $O(p^3 \log (mp/\eps))$ for $p \geq 4$ and $O(p^{-2} \log (mp/\eps))$ for $p \in (2, 4)$, thereby proving the stated claim. The cost per iteration is  $O(mn^2)$\footnote{This can be improved to $O(mn^{\omega-1})$ using fast matrix multiplication.}  for multiplying two $m\times n$ matrices.  
\end{proof}

\section{Analysis of $\roundcode({}\cdot{})$: Sequential Algorithm} \label{sec:anaseqalg}
We now analyze Algorithm~\ref{alg_roundsequential}. Note that these proofs work for all $\alpha > 0$. 

\begin{restatable}[Coordinate Step Progress]{lemmma}{LEMcoordinateprogress}\label{lem_coordinate_progress_1}Given $w\in\R_{>0}^{m}$, a coordinate $i\in [m]$, and $\delta_i\in \R$,
we have
 
\[
\fobj(w+\delta_i w_{i}e_{i})=\fobj(w)-\log(1+\delta_i\sigma_{i}(w))+\frac{w_{i}^{1+\alpha}}{1+\alpha}((1+\delta_i)^{1+\alpha}-1).
\]

\end{restatable}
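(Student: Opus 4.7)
The plan is to directly compute how each of the two terms in $\fobj$ changes when we perturb the $i$-th coordinate of $w$ by $\delta_i w_i$, i.e.\ when we replace $w_i$ by $w_i(1+\delta_i)$ while leaving all other coordinates fixed. Call the new weight vector $w' = w + \delta_i w_i e_i$ and let $\mathbf{W}'$ be the corresponding diagonal matrix.

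First I would handle the separable penalty $\frac{1}{1+\alpha}\ones^\top w^{1+\alpha}$, which is the easy part: only the $i$-th summand changes, giving a contribution of $\frac{1}{1+\alpha}\bigl(w_i^{1+\alpha}(1+\delta_i)^{1+\alpha} - w_i^{1+\alpha}\bigr) = \frac{w_i^{1+\alpha}}{1+\alpha}\bigl((1+\delta_i)^{1+\alpha}-1\bigr)$, matching the last term in the claimed identity.

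Next I would handle the log-determinant term, which is where essentially all the content lies. The key observation is that
\[
\mA^\top \mathbf{W}' \mA = \mA^\top \mathbf{W}\mA + \delta_i w_i\, a_i a_i^\top,
\]
a rank-one update of $\mA^\top\mW\mA$. Applying the matrix determinant lemma $\det(M + uv^\top) = \det(M)(1+v^\top M^{-1}u)$ with $M = \mA^\top\mW\mA$ and $u = v = \sqrt{\delta_i w_i}\, a_i$ (treating the sign suitably, or equivalently using $\det(M+cuu^\top) = \det(M)(1+c\,u^\top M^{-1}u)$) yields
\[
\det(\mA^\top \mathbf{W}'\mA) = \det(\mA^\top \mW \mA)\bigl(1 + \delta_i w_i\, a_i^\top(\mA^\top\mW\mA)^{-1}a_i\bigr) = \det(\mA^\top \mW \mA)\bigl(1 + \delta_i\sigma_i(w)\bigr),
\]
using the definition $\sigma_i(w) = w_i \cdot a_i^\top(\mA^\top\mW\mA)^{-1}a_i$ from \eqref{def-levscoreW}. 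Taking negative logs and combining with the first computation gives the claimed formula.

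The only subtlety worth checking is that the update remains in the domain, i.e.\ that $\log(1+\delta_i\sigma_i(w))$ is well-defined; since the lemma only asserts an algebraic identity we need not impose positivity of $1+\delta_i$ here, but in the intended use we will have $\delta_i \geq 0$ (see Algorithm~\ref{alg_roundsequential}) so $1+\delta_i\sigma_i(w) > 0$ automatically since $\sigma_i(w) \in [0,1]$ by Fact~\ref{fact_projmatrices}. There is no real obstacle in this proof; the whole argument is a one-line application of the matrix determinant lemma combined with trivial bookkeeping for the separable term.
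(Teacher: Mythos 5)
Your proof is correct and follows essentially the same route as the paper: separate out the change in the penalty term, then apply the matrix determinant lemma to the rank-one update $\mA^\top\mW\mA + \delta_i w_i a_i a_i^\top$ and use the definition of $\sigma_i(w)$. No gaps.
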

\begin{proof}
By definition of $\fobj$, we have 
\begin{align*}
\fobj(w+\delta_i w_{i}e_{i})= & -\log\det(\mA^{\top}\mW\mA+\delta_i w_{i}a_{i}a_{i}^{\top})+\frac{1}{1+\alpha}\sum_{j\neq i}w_{j}^{1+\alpha}+\frac{w_{i}^{1+\alpha}}{1+\alpha}(1+\delta_i)^{1+\alpha}.
\end{align*} Recall the matrix determinant lemma: $\det(\mA + uv^\top) = (1+v^\top \mA^{-1} u) \det(\mA)$. Applying it to $\det(\mA^\top \diag (w + \delta_i w_i e_i) \mA)$ in the preceding expression for $\fobj(w+\delta_i w_i e_i)$ proves the lemma. 

\end{proof}

\begin{restatable}[Coordinate Step Outcome]{lemmma}{LEMcoordinateprogressmore}\label{lem_coordinate_progress_2}Given $w\in\R_{>0}^{m}$
and  $\coordset = \{i: \rho_i(w) \geq 1 \}$, let $w^+ = w+\delta_i w_i e_i$  for any  $i\in \coordset$, where $\delta_i =\arg\min_{\delta} \left[- \log (1+\delta \sigma_i(w)) + \frac{1}{1+\alpha} w_i^{1+\alpha} ((1+\delta)^{1+\alpha}-1)\right]$.  Then, we have  
 $\fobj(w^+) \leq \fobj(w)$ and  $\rho_i(w^+) \leq 1$.

\end{restatable}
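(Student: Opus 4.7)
}
The plan is to reduce both claims to the exact identity from Lemma~\ref{lem_coordinate_progress_1} together with one Sherman--Morrison computation; no subtle estimates are required.

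First, for the monotonicity claim $\fobj(w^+) \leq \fobj(w)$, I would apply Lemma~\ref{lem_coordinate_progress_1} directly. It states that $\fobj(w^+) - \fobj(w)$ equals the univariate function
$$h(\delta) \defeq -\log(1+\delta\sigma_i(w)) + \frac{w_i^{1+\alpha}}{1+\alpha}\bigl((1+\delta)^{1+\alpha}-1\bigr)$$
evaluated at $\delta_i$, and $\delta_i$ is by definition a minimizer of $h$. Since $h(0) = 0$, this immediately gives $\fobj(w^+) - \fobj(w) = h(\delta_i) \leq h(0) = 0$. To match this minimizer with the fixed-point equation stated in Algorithm~\ref{alg_roundsequential}, I would verify that (i) $h$ is convex on $\{\delta : 1+\delta\sigma_i(w) > 0\}$ by inspecting second derivatives, (ii) $h'(0) = -\sigma_i(w) + w_i^{1+\alpha} = w_i^{1+\alpha}(1 - \rho_i(w)) \leq 0$ for $i \in \coordset$, and (iii) $h'(\delta) \to +\infty$ as $\delta \to \infty$. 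Together these imply a unique minimizer $\delta_i \geq 0$ characterized by $h'(\delta_i) = 0$, which rearranges to the stated equation $\rho_i(w) = (1+\delta_i \sigma_i(w))(1+\delta_i)^{\alpha}$.

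Second, for the bound $\rho_i(w^+) \leq 1$, I would compute $\sigma_i(w^+)$ using Sherman--Morrison on the rank-one update $\mA^{\top}\mW^{+}\mA = \mA^{\top}\mW\mA + \delta_i w_i\, a_i a_i^{\top}$. A short calculation then yields
$$\sigma_i(w^+) \;=\; \frac{(1+\delta_i)\,\sigma_i(w)}{1+\delta_i \sigma_i(w)}, \qquad (w_i^+)^{1+\alpha} \;=\; w_i^{1+\alpha}(1+\delta_i)^{1+\alpha},$$
so that $\rho_i(w^+) = \rho_i(w) / \bigl[(1+\delta_i\sigma_i(w))(1+\delta_i)^{\alpha}\bigr]$. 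Substituting the first-order condition derived above makes the denominator exactly $\rho_i(w)$, giving $\rho_i(w^+) = 1$, which is in fact the stronger equality.

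The only potential wrinkle is verifying that the minimization defining $\delta_i$ is well-posed, i.e., convexity and coercivity of $h$ so that the first-order condition characterizes the minimum; this follows by inspection from the explicit form of $h$ for $\alpha \geq 0$, so I do not expect a substantive technical obstacle. Overall, Lemma~\ref{lem_coordinate_progress_1} handles the objective side while Sherman--Morrison handles the leverage-score side, and the two are tied together by the same first-order condition.
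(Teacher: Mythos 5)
Your proposal is correct and follows the same route as the paper: use Lemma~\ref{lem_coordinate_progress_1} for the objective decrease, and the first-order optimality condition of the coordinate-wise minimization together with Sherman--Morrison for the leverage-score claim. Your version is actually more careful than the paper's in two places: you explicitly verify convexity and coercivity of $h$ so the stationarity condition characterizes the minimizer and yields $\delta_i\geq 0$ on $\coordset$, and you derive $\sigma_i(w^+)=(1+\delta_i)\sigma_i(w)/(1+\delta_i\sigma_i(w))$ explicitly, which gives $\rho_i(w^+)=\rho_i(w)/\bigl[(1+\delta_i\sigma_i(w))(1+\delta_i)^\alpha\bigr]=1$; the paper simply asserts ``optimality gives $\rho_i(w^+)=1$.'' One small addition the paper makes that you omit (and which the lemma statement does not require, but the algorithm's overall correctness does) is the observation that for $j\neq i$, since $w^+_j=w_j$ and $\delta_i\geq 0$, Sherman--Morrison gives $a_j^\top(\mA^\top\mW^+\mA)^{-1}a_j\leq a_j^\top(\mA^\top\mW\mA)^{-1}a_j$, hence $\rho_j(w^+)\leq\rho_j(w)$; this ensures that sweeping through $\coordset$ leaves every coordinate satisfying $\rho\leq 1$ at the end of Algorithm~\ref{alg_roundsequential}. (Incidentally, the paper's displayed Sherman--Morrison identity has a typo — the squared numerator should be $(a_j^\top(\mA^\top\mW\mA)^{-1}a_i)^2$ rather than $(a_j^\top(\mA^\top\mW\mA)^{-1}a_j)^2$ — and your computation uses the correct form.)
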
 
\begin{proof}
We note that $\min_{\delta} \left[ - \log (1+\delta \sigma_i(w)) + \frac{1}{1+\alpha} w_i^{1+\alpha} ((1+\delta)^{1+\alpha}-1)\right] \leq 0$. Then, Lemma~\ref{lem_coordinate_progress_1} implies the first claim. Since the update rule optimizes over $\fobj$ coordinate-wise, at each step the optimality condition given  by $\rho_i(w^+) =1$ is met for each $i\in \coordset$. The second claim is then proved by noting that for $j\neq i$, $w_j^+ = w_j$ and by the Sherman-Morrison-Woodbury identity, $\rho_j(w^+) \leq \rho_j(w)$:  \[a_j^\top (\mA^\top \mW^{+} \mA)^{-1} a_j 
= a_j^\top (\mA^\top \mW \mA)^{-1} a_j - \delta_i w_i \frac{(a_j^\top (\mA^\top \mW \mA)^{-1} a_j)^2}{1 + \delta_i w_i a_i^\top (\mA^\top \mW \mA)^{-1} a_i } \leq a_j^\top (\mA^\top \mW \mA)^{-1} a_j. 
\] 
\end{proof}
\begin{lemma}[Number of Coordinate Steps]
\label{lem_coordinate_step_count}For any $0\leq\teps\leq1$, over all $\totaliters$ iterations of Algorithm~\ref{alg_ourAlg}, there
are at most $O(m \max(\alpha, \alpha^{-1}) \log(m/\teps))$   coordinate
steps (see Algorithm~\ref{alg_roundsequential}).
\end{lemma}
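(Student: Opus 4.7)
The plan is to observe that each invocation of $\roundsequentialcode({}\cdot{})$ (Algorithm~\ref{alg_roundsequential}) performs at most $m$ coordinate steps, and then multiply this by the total number of times $\roundcode({}\cdot{})$ is invoked during one execution of Algorithm~\ref{alg_ourAlg}. The per-call bound is read directly off the pseudocode: the for-loop runs once through the set $\coordset = \{i : \rho_i(w) \geq 1\}$, which has at most $m$ elements.

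The one subtle point I need to verify is that a single sweep through $\coordset$ actually yields a vector satisfying the rounding condition $\rho_{\max}(w) \leq 1 + \alpha$, so that $\roundsequentialcode({}\cdot{})$ need not be iterated further. For this I will invoke Lemma~\ref{lem_coordinate_progress_2}: the specified step size $\delta_i$ sets $\rho_i(w^+) = 1$ on the coordinate being updated, and the Sherman--Morrison calculation inside that lemma shows $\rho_j(w^+) \leq \rho_j(w)$ for every $j \neq i$. Consequently, once a coordinate $i \in \coordset$ has been processed its $\rho_i$ value is pinned at (or below) $1$ by all later coordinate updates, and coordinates $i \notin \coordset$ begin with $\rho_i < 1$ and can only decrease thereafter. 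Hence the output of a single pass satisfies $\rho_{\max}(w) \leq 1 \leq 1 + \alpha$, confirming that one for-loop pass suffices.

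With the per-call bound in hand, the total step count is obtained by noting that Algorithm~\ref{alg_ourAlg} invokes $\roundcode({}\cdot{})$ exactly $\totaliters$ times inside its main for-loop and one additional time on Line~\ref{line:meta-lastround}, for a total of $\totaliters + 1 = O(\totaliters)$ calls. Substituting $\totaliters = O(\max(\alpha, \alpha^{-1}) \log(m / \teps))$ (the value assigned in Algorithm~\ref{alg_ourAlg}, as also reproved in Lemma~\ref{lem_fast_steps_count}) yields a total coordinate step count of $O(m \cdot \totaliters) = O(m \max(\alpha, \alpha^{-1}) \log(m / \teps))$, matching the claim.

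The only place where any real content is used is the monotonicity guarantee ``updating $w_i$ does not raise $\rho_j$ for $j \neq i$'' underpinning the single-pass argument; everything else is a routine combination of the per-call bound with the global iteration count $\totaliters$. Since that monotonicity is already packaged inside Lemma~\ref{lem_coordinate_progress_2}, the remaining work is purely bookkeeping.
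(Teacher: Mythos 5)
Your proof takes essentially the same route as the paper: each call to Algorithm~\ref{alg_roundsequential} does at most $m$ coordinate steps because the \texttt{for}-loop runs once over $\coordset \subseteq [m]$, and multiplying by $\totaliters = O(\max(\alpha,\alpha^{-1})\log(m/\teps))$ calls gives the claim. The paper's proof consists of exactly this two-sentence bookkeeping.

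The extra paragraph you add --- verifying that a single pass over $\coordset$ actually achieves $\rho_{\max}(w) \leq 1+\alpha$ --- is not needed to bound the \emph{number} of steps (the count is $\leq m$ per call by construction, regardless of whether the rounding condition is met), and it also has a subtle gap. Your monotonicity argument assumes every later update only decreases the $\rho_j$'s, but that relies on each step having $\delta_j \geq 0$. The set $\coordset$ is fixed at the start of the call, whereas $\rho_j(w)$ is recomputed inside the loop, so a coordinate $j \in \coordset$ may have its $\rho_j$ pushed below $1$ by earlier updates. When such a $j$ is processed, the equation $\rho_j(w) = (1+\delta_j\sigma_j(w))(1+\delta_j)^{\alpha}$ forces $\delta_j < 0$ (i.e., $w_j$ decreases), and the Sherman--Morrison identity then \emph{increases} $\rho_i$ for $i \neq j$, so ``can only decrease thereafter'' is not automatic. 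This doesn't affect the step count, which is what Lemma~\ref{lem_coordinate_step_count} is about, but it is worth flagging that the single-pass sufficiency claim needs a more careful argument than pure coordinate-wise monotonicity.
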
 
\begin{proof}
There are at most $m$ coordinate steps in each call to Algorithm~\ref{alg_roundsequential}. Combining this with the value of $\totaliters$ in Algorithm~\ref{alg_ourAlg} gives the count of $O(m \alpha^{-1} \log(m/\teps))$ coordinate steps.
\end{proof}

\subsection{Proof of Main Theorem (Sequential)} We now combine the preceding results to prove the main theorem about the sequential algorithm (Algorithm~\ref{alg_ourAlg} with Algorithm~\ref{alg_roundsequential}).
\begin{proof}(Proof of Theorem~\ref{thm_main_thm_seq}) The proof of correctness is the same as that for Theorem~\ref{thm_main_thm} since the parallel and sequential algorithms share the same meta-algorithm. Computing leverage scores in the sequential version (Algorithm~\ref{alg_ourAlg} with Algorithm~\ref{alg_roundsequential}) takes  $O(m\max(\alpha, \alpha^{-1})\log(m/(\alpha \eps)))$ coordinate steps. The costliest component of a coordinate step is computing $a_i^\top (\mA^\top (\mW + \delta_i w_i e_i e_i^\top) \mA)^{-1} a_i$. By the Sherman-Morrison-Woodbury formula, computing this costs $O(n^2)$ for each coordinate. Since the initial cost  to compute $(\mA^\top \mW \mA)^{-1}$ is  $O(mn^2)$, the total run time is $O(\max(\alpha, \alpha^{-1}) mn^2 \log (m / \eps))$. When translated in terms of $p$, this gives $O(p mn^2 \log (mp/ \eps))$ for $p \geq 4$ and $O(p^{-1} mn^2 \log(mp/ \eps))$ for $p \in (2,4)$. 
\end{proof}

\section{A ``One-Step'' Parallel Algorithm}\label{sec:onestepparallelalgo}
We conclude our paper with an alternative algorithm (Algorithm~\ref{alg_oneloop}) in which each iteration avoids any rounding and performs only $\progcode({}\cdot{})$.  

\begin{algorithm}[!ht]
\DontPrintSemicolon
\caption{\textbf{One-Step Algorithm}}\label{alg_oneloop}
\KwIn{Matrix $\mA \in \reals^{m \times n}$, parameter $p > 2$, accuracy $\eps$}
\KwOut{Vector $\hw\in \reals^m_{>0}$ that satisfies  \eqref{def-lewiswts-obj-approxLW}}

For all $i \in [m]$, initialize $w_i^{(0)} = 1$. Set $\alpha = \frac{2}{p-2}$. Set $\teps =  \frac{\alpha^4\eps^4}{(2m (\sqrt{n}+\alpha) (\alpha + \alpha^{-1}))^4}$. \;

Set $\beta = \frac{1}{1000}\min(\alpha^2, 1)$ and $\totaliters = \twopartdef{\Ord(\alpha^{-3}\log (mp/\teps))}{\alpha \in (0, 1]}{\Ord(\alpha^2 \log(mp/\teps))}{\alpha > 1}$
\; 

\For{$k = 0, 1, 2, 3, \dotsc, \totaliters-1$}{

    Let $\eta^{(k)} \in \R^m$ where for all $i \in [m]$ we let $\eta^{(k)}_i = \begin{cases}
    \tfrac{1}{3 \alphamax} & \text{if } \rho_i(w^{(k)}) \geq 1 \\
    \tfrac{1}{3 \alphamax} \beta & \text{if } \rho_i(w^{(k)}) < 1 
    \end{cases}$
 
    $w^{(k+1)} \leftarrow \progcode({w}^{(k)}, [m], \eta^{(k)} )$ 
    \Comment{See  \eqref{eq-wratiostep} and Lemma  \ref{lem_fast_steps_count}} \;
}	

Return $\hw \in \R^m_{>0}$, where $\hw_i = (a_i^\top (\mA^\top \mW^{(\totaliters)} \mA)^{-1} a_i)^{1/\alpha}$. \Comment{See Section~\ref{sec:opti-to-lw}} \; 

\end{algorithm}

\begin{restatable}[Main Theorem (One-Step Parallel Algorithm)]{theorem}{THMonestep}\label{mainthm-onestep} Given a full rank matrix $\mA \in \R^{m\times n}$ and $p\geq 4$, we can compute $\eps$-approximate  Lewis weights \eqref{def-lewiswts-obj-approxLW} in 
 $O( p^3 \log (mp/\eps)$ iterations.  Each iteration computes  the leverage score of one row of $\mD \mA$ for some diagonal matrix $\mD$. The total runtime is $O(p^3 m n^2 \log(m p/\eps))$. 
\end{restatable}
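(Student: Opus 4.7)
The plan is to show that Algorithm~\ref{alg_oneloop} can be analyzed in essentially the same way as the meta-algorithm by proving that the rounding condition $\rho_{\max}(w^{(k)}) \le 1+\alpha$ is maintained as an invariant under the hybrid update rule. Once the invariant is secured, Lemma~\ref{lem_subopt} becomes applicable at every iteration, and combining it with Lemma~\ref{lem:parallel_progress} yields a contraction factor that translates directly into the iteration count.

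First I would verify the base case: at $w^{(0)} = \mathbf{1}$, Fact~\ref{fact_projmatrices} gives $\rho_i(w^{(0)}) = a_i^\top (\mA^\top \mA)^{-1} a_i \le 1 \le 1+\alpha$, and a direct computation bounds $\fobj(w^{(0)}) - \fobj(\wopt) = O(\mathrm{poly}(m))$, mirroring Lemma~\ref{lem_initialError}. For the inductive step, I would write $w_i^+ = (1+c_i) w_i$ with $c_i \in [-\beta/(3\alphamax), 1/(3\alphamax)]$ and use the Loewner-order sandwich
\[
\left(1 - \tfrac{\beta}{3\alphamax}\right)\mA^\top \mW \mA \;\preceq\; \mA^\top \mW^+ \mA \;\preceq\; \left(1 + \tfrac{1}{3\alphamax}\right)\mA^\top \mW \mA
\]
to deduce $\rho_i(w^+) \le \rho_i(w) / \bigl[(1 - \beta/(3\alphamax))(1+c_i)^\alpha\bigr]$. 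For $i$ with $\rho_i(w) \ge 1$ the step $c_i \ge 0$ shrinks $\rho_i$ by the same mechanism as Lemma~\ref{lem:parallel_iteration}. The delicate case is $\rho_i(w) < 1$, where $c_i < 0$ makes both factors in the denominator fall below $1$; Bernoulli's inequality together with $|c_i| \le \beta/(3\alphamax)$ yields $\rho_i(w^+) \le \rho_i(w)\bigl(1 + O(\beta(1+\alpha)/\alphamax)\bigr)$, and the choice $\beta = \tfrac{1}{1000}\min(\alpha^2, 1)$ keeps this product below $1+\alpha$, closing the induction.

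With the invariant in hand, Lemma~\ref{lem:parallel_progress} applies because each $\eta^{(k)}_i \in [0, 1/(3\alphamax)]$, and using $\min_i \eta^{(k)}_i \ge \beta/(3\alphamax)$ together with Lemma~\ref{lem_subopt} produces the geometric decrease
\[
\fobj(w^{(k+1)}) - \fobj(\wopt) \;\le\; \left(1 - \tfrac{\beta}{30 \alphamax \max(1,\alpha^{-1})}\right) \bigl(\fobj(w^{(k)}) - \fobj(\wopt)\bigr).
\]
For $p \ge 4$, equivalently $\alpha \in (0,1]$, this contraction factor is $1 - \Theta(\alpha^3) = 1 - \Theta(1/p^3)$, so $\totaliters = O(p^3 \log(mp/\teps))$ iterations suffice to drive the function error below $\teps$. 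Lemma~\ref{lem-fn-to-LW}, whose hypotheses are met since the invariant holds at the final iterate and $\teps = \tepsval$ is chosen accordingly, then converts $w^{(\totaliters)}$ into $\eps$-approximate Lewis weights. The per-iteration cost is one leverage-score computation of $\mW^{1/2}\mA$ at $O(mn^2)$, yielding the stated total runtime of $O(p^3 mn^2 \log(mp/\eps))$.

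I expect the main obstacle to be the invariant maintenance for coordinates with $\rho_i(w) < 1$, where both $(1 - \beta/(3\alphamax))$ and $(1+c_i)^\alpha$ sit below $1$ and hence push $\rho_i(w^+) > \rho_i(w)$. The crude bound $\rho_i(w^+) \le \rho_i(w)/(1 - \beta/(3\alphamax))$ alone narrowly fails; one must exploit the slack that $\rho_i(w)$ is already strictly less than $1$ and match it against a total multiplicative loss of order $\beta(1+\alpha)/\alphamax$, which is precisely why the scaling $\beta = \Theta(\min(\alpha^2, 1))$ (rather than a fixed constant) is forced by the analysis.
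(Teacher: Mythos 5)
Your proposal is correct and follows essentially the same route as the paper, whose Lemma~\ref{lem:parallel_iteration_onestep} proves exactly the rounding-condition invariance you establish (with the same base case $\rho_i(w^{(0)})=\sigma_i(w^{(0)})\le 1$ and the same bound $\rho_i(w^+)\le(1-\beta/(3\alphamax))^{-(1+\alpha)}$ for the $\rho_i<1$ coordinates) and then combines Lemmas~\ref{lem_subopt}, \ref{lem:parallel_progress}, and \ref{lem-fn-to-LW} just as you do. The only slight imprecision is your framing of the ``crude bound'' $\rho_i(w^+)\le\rho_i(w)/(1-\beta/(3\alphamax))$: for $\rho_i(w)<1$ the update has $c_i<0$, so the factor $(1+c_i)^{-\alpha}>1$ cannot be dropped and that expression is not a valid upper bound on its own---the correct statement, which your actual derivation uses, is that both deleterious factors together still give $(1-\beta/(3\alphamax))^{-(1+\alpha)}\le 1+\alpha$ once one exploits $\rho_i(w)\le 1$.
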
 
 We first spell out the key idea of the proof of Theorem~\ref{mainthm-onestep}  in  Lemma~\ref{lem:parallel_iteration_onestep} next, which is that \eqref{eq_grad_cond_for_prog} is maintained in every iteration through the use of varying step sizes, \textit{without explicitly invoking rounding procedures}. Since \eqref{eq_grad_cond_for_prog} always holds, we may use Lemma~\ref{lem_subopt} in bounding the iteration complexity. 
\begin{lemma}[Rounding Condition Invariance] \label{lem:parallel_iteration_onestep}
For any iteration $k \in [\totaliters-2]$ in Algorithm~\ref{alg_oneloop}, if  $\rho_{\max}(w^{(k)}) \leq 1 + \alpha$, then $\rho_{\max}(w^{(k + 1)}) \leq 1 + \alpha$.
\end{lemma}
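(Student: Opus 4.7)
The plan is to verify the invariant coordinatewise after extracting a uniform spectral comparison. Partition $[m] = S^+ \sqcup S^-$ according to whether $\rho_i(w^{(k)}) \geq 1$, so that $\eta^{(k)}_i = 1/(3\alphamax)$ on $S^+$ and $\eta^{(k)}_i = \beta/(3\alphamax)$ on $S^-$. Writing $\tau_i \defeq (\rho_i(w^{(k)})-1)/(\rho_i(w^{(k)})+1)$, the update in \eqref{eq-wratiostep} becomes $w^{(k+1)}_i = w^{(k)}_i(1 + \eta^{(k)}_i \tau_i)$; this is nondecreasing on $S^+$ (since $\tau_i \geq 0$), and on $S^-$ the multiplicative factor is bounded below by $1 - \beta/(3\alphamax)$ using $\tau_i > -1$.

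The first step is the spectral comparison. The coordinatewise bound $\mW^{(k+1)}_{ii} \geq (1 - \tfrac{\beta}{3\alphamax}) \mW^{(k)}_{ii}$, valid on both $S^+$ and $S^-$, lifts to
\[
\mA^\top \mW^{(k+1)} \mA \succeq \left(1 - \tfrac{\beta}{3\alphamax}\right) \mA^\top \mW^{(k)} \mA,
\]
and inverts to $(\mA^\top \mW^{(k+1)} \mA)^{-1} \preceq (1 - \tfrac{\beta}{3\alphamax})^{-1} (\mA^\top \mW^{(k)} \mA)^{-1}$. Substituting into the definition $\rho_i(w) = a_i^\top (\mA^\top \mW \mA)^{-1} a_i / w_i^\alpha$ yields the master inequality
\[
\rho_i(w^{(k+1)}) \;\leq\; \frac{\rho_i(w^{(k)})}{(1 - \tfrac{\beta}{3\alphamax})(1 + \eta^{(k)}_i \tau_i)^\alpha}.
\]

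The rest is a two-case check that this upper bound stays below $1 + \alpha$. For $i \in S^-$, using $\rho_i(w^{(k)}) < 1$ and $1 + \eta^{(k)}_i \tau_i \geq 1 - \beta/(3\alphamax)$, the right-hand side is at most $(1 - \beta/(3\alphamax))^{-(1+\alpha)}$; verifying $(1+\alpha)(1 - \beta/(3\alphamax))^{1+\alpha} \geq 1$ is routine for $\beta \leq \tfrac{1}{1000}\min(\alpha^2, 1)$. For $i \in S^+$, the bound reduces to verifying
\[
G(r) \defeq (1+\alpha)\left(1 - \tfrac{\beta}{3\alphamax}\right)\left(1 + \tfrac{r-1}{3\alphamax(r+1)}\right)^\alpha - r \;\geq\; 0 \quad \text{for} \quad r \in [1, 1+\alpha].
\]
I would verify $G(1) \geq 0$ (equivalent to $\beta/(3\alphamax) \leq \alpha/(1+\alpha)$) and $G(1+\alpha) \geq 0$ (equivalent to $(1 - \tfrac{\beta}{3\alphamax})(1 + \tfrac{\alpha}{3\alphamax(2+\alpha)})^\alpha \geq 1$) directly, both holding with slack using Bernoulli's inequality when $\alpha \geq 1$ and the refinement $(1+x)^\alpha \geq 1 + \alpha x/2$ (from $\log(1+x) \geq x/2$ on $x \in [0,1]$) when $\alpha < 1$, given the choice of $\beta$. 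Extension from the endpoints to the interior follows by a monotonicity argument on $G$: for $\alpha \leq 1$, the function $G$ is concave on $[1,1+\alpha]$ so endpoint positivity suffices; for $\alpha > 1$, $G'$ has at most one sign change, so $G$ has a unique interior minimum, which can be lower-bounded using the endpoint slack.

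The main obstacle is the interior check in the $S^+$ case when $\alpha > 1$, since concavity of $G$ fails and one must exploit that $\beta = \tfrac{1}{1000}\min(\alpha^2,1)$ is tuned small enough to provide large endpoint slack that absorbs any dip at $G$'s unique interior critical point. Locating this minimum via $G'(r_*) = 0$ and verifying $G(r_*) \geq 0$ is the key calculation driving the choice of the numerical constant $1/1000$. Once $\rho_i(w^{(k+1)}) \leq 1 + \alpha$ holds for every $i$, taking the maximum over $i$ completes the proof.
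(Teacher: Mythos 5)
Your proof follows essentially the same route as the paper's: you derive the identical master inequality $\rho_i(w^{(k+1)}) \leq (1 - \tfrac{\beta}{3\alphamax})^{-1}(w^{(k+1)}_i/w^{(k)}_i)^{-\alpha}\rho_i(w^{(k)})$ via the Loewner bound $\mA^\top\mW^{(k+1)}\mA \succeq (1-\tfrac{\beta}{3\alphamax})\mA^\top\mW^{(k)}\mA$, and then split on the sign of $\rho_i(w^{(k)}) - 1$ exactly as the paper does. The only difference is that you go further and actually try to verify the numerical inequality $G(r) \geq 0$ on $[1,1+\alpha]$, which the paper simply asserts from ``the bound on $\beta$.''

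One small correction to your outline: the case split you propose between $\alpha \leq 1$ and $\alpha > 1$ for the interior check is unnecessary, and in fact the ``unique interior minimum'' picture for $\alpha > 1$ is wrong — $G$ is concave on $[1,1+\alpha]$ for \emph{all} $\alpha > 0$. Writing $G(r) = C\,h(r)^\alpha - r$ with $h(r) = 1 + \tfrac{r-1}{3\alphamax(r+1)}$, one has $h' = \tfrac{2}{3\alphamax(r+1)^2}$ and $h'' = -\tfrac{4}{3\alphamax(r+1)^3}$, so
$(\alpha-1)(h')^2 + h h'' = \tfrac{4}{3\alphamax(r+1)^3}\bigl[\tfrac{\alpha-1}{3\alphamax(r+1)} - h\bigr]$,
and since $h \geq 1$ while $\tfrac{\alpha-1}{3\alphamax(r+1)} \leq \tfrac{1}{6}$, the bracket is always negative. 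A concave function on a closed interval is minimized at an endpoint, so your endpoint checks alone complete the argument for all $\alpha > 0$; there is no interior critical point to locate, so the ``main obstacle'' you flag does not actually arise.
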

\begin{proof} 
 By the definition of $\progcode({}\cdot{})$ in \eqref{eq-wratiostep} and choice of $\eta^{(k)}_i$ in Algorithm~\ref{alg_oneloop}, we have, 
\begin{align*}
w^{(k +1)}_{i} &= w_i^{(k)} \cdot\left[1 + \eta_i^{(k)} \left( \frac{\rho_{i}(w^{(k)})-1}{\rho_{i}(w^{(k)})+1}\right) \right] \numberthis\label{eq:eq_w_upper_single_loop}  \\
&\geq w_i^{(k)} (1 - \eta_i^{(k)}) \geq w_i^{(k)}\left( 1 -  \frac{\beta}{3 \alphamax}\right). \numberthis\label{eq:w_upper_single_loop}
\end{align*} 
Applying this inequality to the definition of $\rho(w)$ in \eqref{def-rho-notation},  for all $i \in [m]$, we have
\[ \rho_i(w^{(k +1)}) = \left[\frac{w^{(k +1)}_i}{w^{(k)}_i} \right]^{-\alpha} \frac{1}{[w^{(k)}_i]^{\alpha}}
a_i^\top (\mA^\top \mW^{(k+1)} \mA)^{-1} a_i \leq \left( 1 - \frac{\beta}{3 \alphamax} \right)^{-1} \left[\frac{w^{(k +1)}_i}{w^{(k)}_i} \right]^{-\alpha} \rho_i(w^{(k)}).
\numberthis\label{eq:new_rho_single_loop} \] 
Plugging \eqref{eq:w_upper_single_loop} into \eqref{eq:new_rho_single_loop} when $\rho_i(w^{(k)}) \leq 1$ and using the upper bound on $\beta$ 
yields that 
\[
\rho_i(w^{(k+1)}) \leq  \left( 1 - \frac{\beta}{3 \alphamax} \right)^{-(1 + \alpha)}
\leq 1 + \alpha ~.
\]
If $\rho_i(w^{(k)}) \geq 1$, then \eqref{eq:new_rho_single_loop}, the equality in \eqref{eq:w_upper_single_loop}, the bound on $\beta$, and $\rho_i(w^{(k)}) \leq 1 + \alpha$ imply that 
\begin{align*}
\rho_i(w^{(k+1)}) \leq 
\left(1 -\frac{\beta}{3\alphamax}\right)^{-1}
\left[
1 + \frac{1}{3\alphamax}
\left( 
\frac{\rho_{i}(w^{(k)})-1}{\rho_{i}(w^{(k)})+1}
\right)
\right]^{-\alpha}
\rho_i(w^{(k)})
\leq 1 + \alpha.
\end{align*}
\end{proof} 

\begin{proof}[Proof of Theorem~\ref{mainthm-onestep}]
By our choice of $w_i^{(0)}=1$ for all $i\in [m]$, we have that $\rho_{i}(w^{(0)}) = \sigma_i(w^{(0)}) \leq 1$ by Fact~\ref{fact_projmatrices}. Then applying  Lemma~\ref{lem:parallel_iteration_onestep} yields by induction that $\rho_{\max}(w^{(k)})\leq 1+\alpha$ at every iteration $k$. We may now therefore  upper bound the objective sub-optimality from Lemma \ref{lem_subopt}; as before, combining this with the lower bound on progress from Lemma \ref{lem:parallel_progress} (noticing that $\eta_i \geq \frac{\beta}{3\alphamax}$) yields
\begin{align*}
\fobj(w^{+})-\fobj(\wopt) & \leq \fobj(w)-\fobj(\wopt)-\frac{\beta}{6 \alphamax}\sum_{i=1}^{m}w_i^{1+\alpha} \frac{(\rho_{i}(w)-1 )^{2}}{\rho_{i}(w)+1 }\\
 & \leq\left( 1-\frac{\beta}{30\max(1, \alpha) \max(1, \alpha^{-1})} \right)(\fobj(w)-\fobj(\wopt)). \numberthis\label{fastitercount-11}
\end{align*} 
Thus, $\progcode({}\cdot{})$  decreases $\fobj$. Using $\fobj(w)-\fobj(\wopt)\leq n \log (m/n) \leq m^2$ from Lemma \ref{lem_initialError} and setting \eqref{fastitercount-11} to $\teps$  gives an iteration complexity of $\Ord(\beta^{-1}\alpha^{-1} \log(m/\teps)) = \Ord(\alpha^{-3} \log(m/\teps))$  if $\alpha \in (0, 1]$ and $\Ord(\alpha \beta^{-1} \log (m /\teps)) = \Ord(\alpha  \log (m /\teps))$ otherwise. As in the proofs of Theorems~\ref{thm_main_thm} and ~\ref{thm_main_thm_seq}, we can then invoke Lemma~\ref{lem-fn-to-LW} to construct the vector that is $\eps$-approximate to the Lewis weights. 
\end{proof}

\section{Acknowledgements}
We are grateful to the anonymous reviewers of SODA 2022 for their careful reading and thoughtful comments that helped  us improve our exposition. Maryam Fazel was supported in part by grants NSF TRIPODS II DMS 2023166, NSF TRIPODS CCF 1740551, and NSF CCF 2007036. Yin Tat Lee was supported in part by NSF awards CCF-1749609, DMS-1839116, DMS-2023166, CCF-2105772, a Microsoft Research Faculty Fellowship, Sloan Research Fellowship, and Packard Fellowship. Swati Padmanabhan was supported in part by NSF TRIPODS II DMS 2023166. Aaron Sidford was supported in part by a Microsoft Research Faculty Fellowship, NSF CAREER Award CCF-1844855, NSF Grant CCF-1955039, a PayPal research award, and a Sloan Research Fellowship.   
\newpage
\bibliography{main.bib}

\newcommand{\etalchar}[1]{$^{#1}$}
\begin{thebibliography}{DMIMW12}

\bibitem[AHR08]{abernethy2008competing}
Jacob Abernethy, Elad~E Hazan, and Alexander Rakhlin.
\newblock Competing in the dark: An efficient algorithm for bandit linear
  optimization.
\newblock In {\em 21st Annual Conference on Learning Theory, COLT 2008}, 2008.

\bibitem[AZLSW17]{pmlr-v70-allen-zhu17e}
Zeyuan Allen-Zhu, Yuanzhi Li, Aarti Singh, and Yining Wang.
\newblock Near-optimal design of experiments via regret minimization.
\newblock In {\em Proceedings of the 34th International Conference on Machine
  Learning}, 2017.

\bibitem[BDM{\etalchar{+}}20]{braverman2020near}
Vladimir Braverman, Petros Drineas, Cameron Musco, Christopher Musco, Jalaj
  Upadhyay, David~P Woodruff, and Samson Zhou.
\newblock Near optimal linear algebra in the online and sliding window models.
\newblock In {\em 2020 IEEE 61st Annual Symposium on Foundations of Computer
  Science (FOCS)}, 2020.

\bibitem[BE15]{bubeck2015entropic}
S{\'e}bastien Bubeck and Ronen Eldan.
\newblock The entropic barrier: a simple and optimal universal self-concordant
  barrier.
\newblock In {\em Conference on Learning Theory}, 2015.

\bibitem[BV04]{BVbook}
Stephen Boyd and Lieven Vandenberghe.
\newblock {\em Convex Optimization}.
\newblock Cambridge University Press, New York, NY, USA, 2004.

\bibitem[CCDS20]{chhaya2020streaming}
Rachit Chhaya, Jayesh Choudhari, Anirban Dasgupta, and Supratim Shit.
\newblock Streaming coresets for symmetric tensor factorization.
\newblock In {\em International Conference on Machine Learning}, 2020.

\bibitem[CCLY19]{CohenCousinsLeeYang2019}
Michael~B. Cohen, Ben Cousins, Yin~Tat Lee, and Xin Yang.
\newblock A near-optimal algorithm for approximating the john ellipsoid.
\newblock In {\em Proceedings of the Thirty-Second Conference on Learning
  Theory}, 2019.

\bibitem[CLM{\etalchar{+}}15]{DBLP:conf/innovations/CohenLMMPS15}
Michael~B. Cohen, Yin~Tat Lee, Cameron Musco, Christopher Musco, Richard Peng,
  and Aaron Sidford.
\newblock Uniform sampling for matrix approximation.
\newblock In Tim Roughgarden, editor, {\em Proceedings of the 2015 Conference
  on Innovations in Theoretical Computer Science, {ITCS} 2015, Rehovot, Israel,
  January 11-13, 2015}. {ACM}, 2015.

\bibitem[CP15]{CohenPeng2015}
Michael~B. Cohen and Richard Peng.
\newblock Lp row sampling by lewis weights.
\newblock In {\em Proceedings of the Forty-Seventh Annual ACM Symposium on
  Theory of Computing}, STOC '15. Association for Computing Machinery, 2015.

\bibitem[DAST08]{damla2008linear}
S~Damla~Ahipasaoglu, Peng Sun, and Michael~J Todd.
\newblock Linear convergence of a modified frank--wolfe algorithm for computing
  minimum-volume enclosing ellipsoids.
\newblock {\em Optimisation Methods and Software}, 23(1), 2008.

\bibitem[DFO20]{DBLP:journals/siamjo/DiakonikolasFO20}
Jelena Diakonikolas, Maryam Fazel, and Lorenzo Orecchia.
\newblock Fair packing and covering on a relative scale.
\newblock {\em {SIAM} J. Optim.}, 30(4), 2020.

\bibitem[DLS18]{durfee2018ell_1}
David Durfee, Kevin~A Lai, and Saurabh Sawlani.
\newblock $ \backslash ell\_1 $ regression using lewis weights preconditioning
  and stochastic gradient descent.
\newblock In {\em Conference On Learning Theory}, 2018.

\bibitem[DMIMW12]{drineas2012fast}
Petros Drineas, Malik Magdon-Ismail, Michael~W Mahoney, and David~P Woodruff.
\newblock Fast approximation of matrix coherence and statistical leverage.
\newblock {\em The Journal of Machine Learning Research}, 13(1), 2012.

\bibitem[DMM06]{drineas2006sampling}
Petros Drineas, Michael~W Mahoney, and Shan Muthukrishnan.
\newblock Sampling algorithms for l 2 regression and applications.
\newblock In {\em Proceedings of the seventeenth annual ACM-SIAM symposium on
  Discrete algorithm}, 2006.

\bibitem[GLS81]{grotschel1981ellipsoid}
Martin Gr{\"o}tschel, L{\'a}szl{\'o} Lov{\'a}sz, and Alexander Schrijver.
\newblock The ellipsoid method and its consequences in combinatorial
  optimization.
\newblock {\em Combinatorica}, 1(2), 1981.

\bibitem[Joh48]{john1948extremum}
Fritz John.
\newblock Extremum problems with inequalities as subsidiary conditions, studies
  and essays presented to r. courant on his 60th birthday, january 8, 1948,
  1948.

\bibitem[Kha96]{khachiyan1996rounding}
Leonid~G Khachiyan.
\newblock Rounding of polytopes in the real number model of computation.
\newblock {\em Mathematics of Operations Research}, 21(2), 1996.

\bibitem[KN09]{10.1145/1536414.1536491}
Ravi Kannan and Hariharan Narayanan.
\newblock Random walks on polytopes and an affine interior point method for
  linear programming.
\newblock In {\em Proceedings of the Forty-First Annual ACM Symposium on Theory
  of Computing}, STOC '09, New York, NY, USA, 2009. Association for Computing
  Machinery.

\bibitem[KY05]{kumar2005minimum}
Piyush Kumar and E~Alper Yildirim.
\newblock Minimum-volume enclosing ellipsoids and core sets.
\newblock {\em Journal of Optimization Theory and applications}, 126(1), 2005.

\bibitem[Lee16]{YinTatThesis}
Yin~Tat Lee.
\newblock {\em Faster Algorithms for Convex and Combinatorial Optimization}.
\newblock PhD thesis, Massachusetts Institute of Technology, 2016.

\bibitem[Lew78]{lewis1978finite}
D~Lewis.
\newblock Finite dimensional subspaces of $l\_ $\{$p$\}$ $.
\newblock {\em Studia Mathematica}, 63(2), 1978.

\bibitem[LLV20]{10.1145/3357713.3384272}
Aditi Laddha, Yin~Tat Lee, and Santosh Vempala.
\newblock Strong self-concordance and sampling.
\newblock In {\em Proceedings of the 52nd Annual ACM SIGACT Symposium on Theory
  of Computing}, STOC 2020, New York, NY, USA, 2020. Association for Computing
  Machinery.

\bibitem[LMP13]{li2013iterative}
Mu~Li, Gary~L Miller, and Richard Peng.
\newblock Iterative row sampling.
\newblock In {\em 2013 IEEE 54th Annual Symposium on Foundations of Computer
  Science}, 2013.

\bibitem[LS14]{LeeSidford2014}
Yin~Tat Lee and Aaron Sidford.
\newblock Path finding methods for linear programming: Solving linear programs
  in {\~{o}}(sqrt(rank)) iterations and faster algorithms for maximum flow.
\newblock In {\em 55th {IEEE} Annual Symposium on Foundations of Computer
  Science, {FOCS} 2014, Philadelphia, PA, USA, October 18-21, 2014}, 2014.

\bibitem[LS19]{DBLP:journals/corr/abs-1910-08033}
Yin~Tat Lee and Aaron Sidford.
\newblock Solving linear programs with sqrt(rank) linear system solves.
\newblock {\em CoRR}, abs/1910.08033, 2019.

\bibitem[LSW15]{lee2015faster}
Yin~Tat Lee, Aaron Sidford, and Sam Chiu-wai Wong.
\newblock A faster cutting plane method and its implications for combinatorial
  and convex optimization.
\newblock In {\em 2015 IEEE 56th Annual Symposium on Foundations of Computer
  Science}, 2015.

\bibitem[LWYZ20]{pmlr-v119-li20o}
Yi~Li, Ruosong Wang, Lin Yang, and Hanrui Zhang.
\newblock Nearly linear row sampling algorithm for quantile regression.
\newblock In {\em Proceedings of the 37th International Conference on Machine
  Learning}, 2020.

\bibitem[LY18]{lee2018universal}
Yin~Tat Lee and Man-Chung Yue.
\newblock Universal barrier is $ n $-self-concordant.
\newblock {\em arXiv preprint arXiv:1809.03011}, 2018.

\bibitem[MSTX19]{pmlr-v99-madan19a}
Vivek Madan, Mohit Singh, Uthaipon Tantipongpipat, and Weijun Xie.
\newblock Combinatorial algorithms for optimal design.
\newblock In {\em Proceedings of the Thirty-Second Conference on Learning
  Theory}, 2019.

\bibitem[MSZ16]{marasevic2016fast}
Jelena Marasevic, Clifford Stein, and Gil Zussman.
\newblock A fast distributed stateless algorithm for $alpha$-fair packing
  problems.
\newblock In {\em 43rd International Colloquium on Automata, Languages, and
  Programming (ICALP 2016)}, volume~55, 2016.

\bibitem[NN94]{nesterov1994interior}
Yurii Nesterov and Arkadii Nemirovskii.
\newblock {\em Interior-point polynomial algorithms in convex programming}.
\newblock SIAM, 1994.

\bibitem[Puk06]{10.5555/1137748}
Friedrich Pukelsheim.
\newblock {\em Optimal Design of Experiments (Classics in Applied Mathematics)
  (Classics in Applied Mathematics, 50)}.
\newblock Society for Industrial and Applied Mathematics, USA, 2006.

\bibitem[SF04]{sun2004computation}
Peng Sun and Robert~M Freund.
\newblock Computation of minimum-volume covering ellipsoids.
\newblock {\em Operations Research}, 52(5), 2004.

\bibitem[SW18]{sohler2018strong}
Christian Sohler and David~P Woodruff.
\newblock Strong coresets for k-median and subspace approximation: Goodbye
  dimension.
\newblock In {\em 2018 IEEE 59th Annual Symposium on Foundations of Computer
  Science (FOCS)}, 2018.

\bibitem[SX20]{singh2020approximation}
Mohit Singh and Weijun Xie.
\newblock Approximation algorithms for d-optimal design.
\newblock {\em Mathematics of Operations Research}, 45(4), 2020.

\bibitem[Tod16]{Todd2016}
Michael~J. Todd.
\newblock {\em Minimum volume ellipsoids - theory and algorithms}, volume~23 of
  {\em {MOS-SIAM} Series on Optimization}.
\newblock {SIAM}, 2016.

\bibitem[Vai89]{vaidya1989new}
Pravin~M Vaidya.
\newblock A new algorithm for minimizing convex functions over convex sets.
\newblock In {\em 30th Annual Symposium on Foundations of Computer Science},
  1989.

\bibitem[Woj96]{wojtaszczyk1996banach}
Przemyslaw Wojtaszczyk.
\newblock {\em Banach spaces for analysts}.
\newblock Number~25. Cambridge University Press, 1996.

\bibitem[ZF20]{zhao2020analysis}
Renbo Zhao and Robert~M Freund.
\newblock Analysis of the frank-wolfe method for logarithmically-homogeneous
  barriers, with an extension.
\newblock {\em arXiv preprint arXiv:2010.08999}, 2020.

\end{thebibliography}
\newpage
\begin{appendices}
We start with a piece of notation we frequently use in the appendix. For a given vector $x\in \R^m$, we use $\Diag(x)$ to describe the diagonal matrix with $x$ on its diagonal. For a matrix $\mathbf{X}$, we use $\vDiag(\mathbf{X})$ to denote the vector made up of the diagonal entries of $\mathbf{X}$.  Further, recall as stated in Section~\ref{sec:notation}, that given any vector $x$, we use its uppercase boldface name $\mathbf{X} \defeq \Diag(x)$. 
\section{Technical Proofs: Gradient, Hessian, Initial Error, Minimum Progress}\label{sec:technicalproofsapp}
\LEMgradAndHess*
\begin{proof}
The proof essentially follows by combining Lemmas 48 and 49 of \cite{DBLP:journals/corr/abs-1910-08033}. For completeness, we provide the full proof here.  Applying chain rule to the $\log \det$ function and then the definition of $\rho(w)$ from \eqref{def-rho-notation} gives the claim that \[\nabla_{i}\fobj(w)=-(\mA (\mA^{\top}\mW \mA)^{-1}\mA^{\top})_{ii}+w_{i}^{\alpha}=-a_{i}^{\top}(\mA^{\top}\mW \mA)^{-1}a_{i}+w_{i}^{\alpha}=\frac{-\sigma_{i}(w)}{w_{i}}+w_{i}^{\alpha}.\] We now set some notation to compute the Hessian:  let $\mM\defeq \mA(\mA^{\top}\mW \mA)^{-1}\mA^{\top}$, let $h\in \R^m$ be any arbitrary vector, and let $\mH\defeq \Diag(h)$. For $f:\R^n \rightarrow \R$ and for $x, h\in \R^n$ we let $\dir_x f(x)[h]$ denote the directional derivative of $f$ at $x$ in the direction $h$, i.e., $\dir_x f(x)[h] = \lim_{t\rightarrow 0} (f(x+th) - f(x))/t$.  Then we have, 
\begin{align*}
	\dir_w\inprod{h}{-\Diag(\mA(\mA^{\top}\mW\mA)^{-1}\mA^{\top})}[h]	&=\inprod {h}{-\Diag(\mA \dir_w (\mA^{\top}\mW \mA)^{-1}[h]\mA^{\top})}\\
	&= \inprod{h}{\Diag(\mA (\mA^\top \mW \mA)^{-1} \dir_w (\mA^\top \mW \mA)[h] \mA^\top \mW \mA)^{-1} \mA^\top  )} \\ 
	&=\inprod {h}{\Diag(\mM\mH\mM)}\\
	&=\sum_{i,j}h_{i}h_{j}\mM_{ij}\mM_{ji}=\sum_{i,j}h_{i}h_{j}\mM_{ij}^{2},
\end{align*} where the last step follows by symmetry of $\mM$. This implies \begin{align*} \nabla_{ij}^{2}\fobj(w)&=\twopartdef{(a_{i}^{\top}(\mA^{\top}\mW \mA)^{-1}a_{j})^{2}}{i\neq j}{(a_{i}^{\top}(\mA^{\top}\mW \mA)^{-1}a_{j})^{2}+\alpha w_{i}^{\alpha-1}}{otherwise},\end{align*} which, in shorthand, is 
$\nabla^{2}\fobj(w)= \mM\circ \mM+\alpha \mW^{\alpha-1}$. We may express this Hessian as in the statement of the lemma by writing $\mM$ in terms of $\mP(w)$. 
\end{proof}
\LEMinitialError*
\begin{proof} 
We study the two terms constituting the objective in \eqref{def_objective}. First, by choice of $w^{(0)} = \tfrac{n}{m}\1$, we have \[-\log\det  (\mA^\top \mathbf{W}^{(0)} \mA) = -\log \det ((n/m) \mA^\top \mA).\numberthis\label{app-initerr-1}\] Next, since leverage scores always lie between zero and one, the optimality condition for \eqref{def_objective}, $\sigma(\wopt) = ({\wopt})^{1+\alpha}$, implies  $\wopt \leq 1$, which in turn gives $\mwopt \preceq I$. This implies $\mA^\top \mwopt \mA \preceq \mA^\top \mA$. Therefore,  \[ -\log \det (\mA^\top \mA) \leq -\log \det (\mA^\top \mwopt \mA).\numberthis\label{app-initerr-2}\] Combining \eqref{app-initerr-1} and \eqref{app-initerr-2} gives \[-\log \det(\mA^\top \mathbf{W}^{(0)} \mA) \leq -\log \det (\mA^\top \mwopt \mA) + n \log (m/n).\numberthis\label{logdet_initError}\]  Next, observe that $\1^\top (w^{(0)})^{1+\alpha} = m \cdot (n/m)^{1+\alpha}$, and $\1^\top ({\wopt})^{1+\alpha} = \sum_{i=1}^m \sigma_i(\wopt) = n$, where we invoked Fact~\ref{fact_projmatrices}. By now applying $m \geq n$, we get \[ \1^\top (w^{(0)})^{1+\alpha} \leq \1^\top ({\wopt})^{1+\alpha}.\numberthis\label{regularizer_initError}\] Combining \eqref{logdet_initError}, \eqref{regularizer_initError}, and the definition of the objective \eqref{def_objective} finishes the claim. 
\end{proof}
\LEMminfndecrease*
\begin{proof}\label{proof:lemfndecrease}
By the remainder form of Taylor's theorem, for some $t \in [0,1]$ and $\wmid = t w + (1-t) w^+$ 
\begin{equation}
\label{eq:upper_taylor1}
\fobj(w^+) = \fobj(w) + \inprod{\nabla \fobj(w)}{w^+-w} + \frac{1}{2}(w^+-w)^\top \nabla^2 \fobj(\wmid) (w^+ - w). 
\end{equation}
We prove the result by bounding the quadratic form of $\nabla^2 \fobj(\wmid)$ from above and leveraging the structure of $w^+$ and $\nabla \fobj(w)$. Lemma~\ref{lem_gradAndHess} and Fact~\ref{fact_projmatrices} imply that
\[
\nabla^2 \fobj(\wmid) = \mwmid^{-1} \mP(\wmid)^{(2)} \mwmid^{-1}  +\alpha \mwmid^{\alpha-1} 
\preceq \mwmid^{-1} \Sigma(\wmid) \mwmid^{-1} + \alpha \mwmid^{\alpha-1} 
~. \numberthis\label{eq:hess-temp-upperbound}
\]
Further, the positivity of $w_i$ and $\sigma_i(w)$ and the non-negativity of $\eta$ and $\rho$ imply that $(1- \norm{\eta}_\infty) w_i \leq w_i^+ \leq (1 + \norm{\eta}_\infty) w_i$  for all  $i \in [m]$. Since $\norm{\eta}_\infty \leq \frac{1}{3\alphamax}$, this implies that 
$$
(1- \tfrac{1}{3\alphamax}) w_i \leq \wmid_i \leq (1+ \tfrac{1}{3\alphamax}) w_i ~ \text{ for all } ~ i \in [m] ~.
$$
Consequently, for all $i \in [m]$, we bound the first term of \eqref{eq:hess-temp-upperbound} as 
\begin{align*}
\left[ \mwmid^{-1} \Sigma(\wmid) \mwmid^{-1} \right]_{ii}
&=
e_i^\top \mwmid^{-1/2} \mA (\mA^\top \mwmid \mA)^{-1} \mA^\top \mwmid^{-1/2} e_i 
    = \frac{1}{\wmid_i} a_i^\top (\mA^\top \mwmid \mA)^{-1} a_i \\ 
&\leq (1-\tfrac{1}{3\alphamax})^{-1} \frac{1}{w_i} a_i^\top (\mA^\top \mwmid \mA)^{-1} a_i 
    \leq (1-\tfrac{1}{3\alphamax})^{-2} \frac{1}{w_i} a_i^\top (\mA^\top \mW \mA)^{-1} a_i \\
    &= (1-\tfrac{1}{3\alphamax})^{-2} \left[ \mW^{-1} \Sigma(w) \mW^{-1} \right]_{ii} 
    \preceq 3 \left[ \mW^{-1} \Sigma(w) \mW^{-1} \right]_{ii} \numberthis\label{app-hessbound-1}
\end{align*} 
Further, when $\alpha \in (0, 1]$, we bound the second term of \eqref{eq:hess-temp-upperbound} as \[\mwmid^{\alpha - 1} 
\preceq 
(1 - \tfrac{1}{3\alphamax})^{\alpha - 1} \mW^{\alpha - 1} \preceq (1 - \tfrac{1}{3\alphamax})^{- 1} \mW^{\alpha - 1}
\preceq 3 \mW^{\alpha - 1},\numberthis\label{app-hessbound-2}\]  
and when $\alpha \geq 1$, we have \[\mwmid^{\alpha - 1} 
\preceq (1 + \tfrac{1}{3\alphamax})^{\alpha - 1} \mW^{\alpha - 1}
\preceq 
\exp(\frac{\alpha - 1}{3\alphamax}) \mW^{\alpha - 1}
= \exp(\frac{\alpha - 1}{3\alpha}) \mW^{\alpha - 1}
\preceq 3 \mW^{\alpha - 1}.\numberthis\label{app-hessbound-3}\] Using \eqref{app-hessbound-1}, \eqref{app-hessbound-2}, and \eqref{app-hessbound-3} in \eqref{eq:hess-temp-upperbound}, we have that in all cases 
\begin{equation*}
\nabla^2 \fobj(\wmid)
\preceq 3\left[ 
\mW^{-1} \Sigma(w) \mW^{-1} + \alpha \mW^{\alpha-1} 
\right]
\preceq 3\alphamax \mW^{-1} \left[ 
 \Sigma(w) + \mW^{1 + \alpha} 
\right]
\mW^{-1} 
~.
\end{equation*}
Applying to the above Loewner inequality the definition of $w^+$ gives
\begin{align}
(w^+-w)^\top \nabla^2 \fobj(\wmid) (w^+ - w)
&\leq 
\sum_{i \in [m]}
	3\alphamax
	\cdot (w_i^{1 + \alpha} + \sigma_i(w)) \cdot \left(\eta_i \cdot \frac{\rho_i(w)-1}{\rho_i(w) + 1} \right)^2 \nonumber \\
&=
\sum_{i \in [m]}
	3\alphamax \cdot \eta_i^2
	\cdot w_{i}^{1 + \alpha} 
	\cdot \frac{(\rho_i(w) - 1)^2}{\rho_i(w) + 1} ~.
	\label{eq:upper_hess}
\end{align} Next, recall that by Lemma~\ref{lem_gradAndHess}, $\left[ \nabla \fobj(w)\right]_i = w_i^{-1} \cdot (w_i^{1+\alpha} - \sigma_i(w))$ for all $i \in [m]$. Consequently, 
\[\inprod{\nabla \fobj(w)}{w^+-w} = 
\sum_{i \in [m]} 
	(w_i^{1 + \alpha} - \sigma_i(w)) 
	\cdot
	\left(\eta_i \cdot \frac{\rho_i(w)-1}{\rho_i(w) + 1} \right)
=
-
\sum_{i \in [m]}
	\eta_i
	\cdot w_{i}^{1 + \alpha} 
	\cdot \frac{(\rho_i(w) - 1)^2}{\rho_i(w) + 1} ~.
	\numberthis \label{eq:upper_grade}
	\]
Combining \eqref{eq:upper_taylor1}, \eqref{eq:upper_hess}, and \eqref{eq:upper_grade} yields that
\begin{align*}
\fobj(w^+) 
&\leq  \fobj(w) + \sum_{i \in [m]} 
	\left(- \eta_i + \frac{3\alphamax \eta_i^2}{2}\right) \cdot w_i^{1 + \alpha} \cdot \frac{(\rho_i(w) - 1)^2}{\rho_i(w) + 1} ~.
\end{align*} 
The result follows by plugging in $\eta_i \in [0, (3\alphamax)^{-1}]$, as assumed. 
\end{proof}

\section{From Optimization Problem to Lewis Weights }\label{sec:opti-to-lw}
The goal of this section is to prove how to obtain $\eps$-approximate Lewis weights from an $\teps$-approximate solution to the problem in \eqref{def_objective}. Our proof strategy is to first utilize the fact that the vector $\wround$ obtained after the rounding step following the \texttt{for} loop of Algorithm~\ref{alg_ourAlg}  satisfies the properties of being $\teps$-suboptimal (additively) and also the rounding condition \eqref{eq_grad_cond_for_prog}. In Lemma~\ref{lem-fn-to-LW}, the $\teps$-suboptimality is used to show a bound on $\|\sigma(\wround) - \wround^{1+\alpha}\|_{\infty}$. Coupled with the rounding condition, we then show in Lemma~\ref{lem_rounding} that $\widehat{\wround}$ constructed as per the last line of Algorithm~\ref{alg_ourAlg} then satisfies approximate optimality, $\sigma(\widehat{w}) \approx_{\delta} \widehat{w}^{1+\alpha}$,  for some small $\delta>0$. In Lemma~\ref{lem:lastlemma}, we finally relate this approximate optimality to coordinate-wise multiplicative closeness between $\widehat{w}$ and the vector of true Lewis weights. Finally, in Lemma~\ref{lem-fn-to-LW}, we pick the appropriate approximation factors for each of the lemmas invoked and prove the desired approximation. Since the vector $w^{\totaliters}$ obtained at the end of the \texttt{for} loop of Algorithm~\ref{alg_oneloop} also satisfies the aforementioned properties of $\wround$, the same set of lemmas apply to Algorithm~\ref{alg_oneloop} as well. We begin with  some technical lemmas. 
\subsection{From Approximate Closeness to Approximate Optimality}
\begin{lemma}\label{lem_rounding}
Let $w \in \R^m_{>0}$ such that  $\|\sigma(w) - w^{1+\alpha}\|_\infty \leq \beps$ for some parameter $0<\beps \leq \frac{1}{100m^2 (\alpha + \alpha^{-1})^2}$ and also let $\rho_{\max}(w) \leq 1+\alpha$. Define $\hw_i = (a_i^\top (\mA^\top \mW \mA)^{-1} a_i)^{1/\alpha}$. Then, for the parameter $\delta = 20 \sqrt{\beps} m (\alpha + \alpha^{-1})$, we have that $\sigma(\hw) {\approx_\delta} \hw^{1+\alpha}$. 
\end{lemma}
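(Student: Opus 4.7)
I plan to proceed in two stages: first, reduce the coordinate-wise approximation $\sigma(\hw) \approx_\delta \hw^{1+\alpha}$ to a spectral closeness bound between $\mA^\top \widehat{\mW} \mA$ and $\mA^\top \mW \mA$, where $\widehat{\mW}$ denotes the diagonal matrix of $\hw$; second, establish that spectral bound by carefully splitting coordinates according to the size of $w_i$.

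For the reduction, the definition of $\hw_i$ yields $\hw_i^{\alpha} = a_i^\top (\mA^\top \mW \mA)^{-1} a_i = \sigma_i(w)/w_i$, so $\hw_i^{1+\alpha} = \hw_i \cdot (\sigma_i(w)/w_i)$; meanwhile $\sigma_i(\hw) = \hw_i \cdot a_i^\top (\mA^\top \widehat{\mW} \mA)^{-1} a_i$. Hence
\[\frac{\sigma_i(\hw)}{\hw_i^{1+\alpha}} = \frac{a_i^\top (\mA^\top \widehat{\mW} \mA)^{-1} a_i}{a_i^\top (\mA^\top \mW \mA)^{-1} a_i},\]
so it suffices to show $\mA^\top \widehat{\mW} \mA \approx_{\gamma} \mA^\top \mW \mA$ for some $\gamma \leq \delta/2$. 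Operator-antitonicity of inversion on positive definite matrices then gives $\sigma_i(\hw)/\hw_i^{1+\alpha} \in [1/(1+\gamma), 1/(1-\gamma)] \subseteq [1-\delta, 1+\delta]$ provided $\gamma \leq 1/2$, which will hold under our hypothesis on $\beps$.

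To control $\pm (\widehat{\mW} - \mW)$ in the $\mA^\top \cdot \mA$ sense, I will bound $\sum_{i \in [m]} |\hw_i - w_i|\, a_i a_i^\top$ in Loewner order by splitting on $S = \{i : w_i^{1+\alpha} \geq \sqrt{\beps}\}$ versus its complement $T$. Using the identity $\hw_i/w_i = \rho_i(w)^{1/\alpha}$: for $i \in S$, the hypothesis gives $|\rho_i(w) - 1| \leq \beps/w_i^{1+\alpha} \leq \sqrt{\beps}$, and the constraint on $\beps$ keeps $\sqrt{\beps}/\alpha \leq 1$, so the identity $x^{1/\alpha} = \exp((\ln x)/\alpha)$ with a standard linearization of the exponential yields $|\rho_i(w)^{1/\alpha} - 1| = O(\sqrt{\beps}/\alpha)$. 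Summing gives $\sum_{i \in S} |\hw_i - w_i|\, a_i a_i^\top \preceq O(\sqrt{\beps}/\alpha) \cdot \mA^\top \mW \mA$.

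The regime $i \in T$ is the main obstacle, since $w_i$ can be arbitrarily small and a uniform ratio bound on $|\hw_i - w_i|/w_i$ would have to be multiplied by an uncontrolled factor. The key tool is the Cauchy--Schwarz Loewner inequality $a_i a_i^\top \preceq (\sigma_i(w)/w_i)\cdot \mA^\top \mW \mA$, which follows from $\sigma_i(w)/w_i = a_i^\top (\mA^\top \mW \mA)^{-1} a_i$ via Cauchy--Schwarz in the $\mA^\top \mW \mA$ inner product; it replaces the troublesome $1/w_i$ with the well-behaved $\sigma_i(w)$. For $i \in T$, the rounding condition $\rho_i(w) \leq 1+\alpha$ gives $\sigma_i(w) \leq (1+\alpha) w_i^{1+\alpha} \leq (1+\alpha) \sqrt{\beps}$, while $\rho_i(w) \in [0, 1+\alpha]$ forces $|\rho_i(w)^{1/\alpha} - 1| \leq (1+\alpha)^{1/\alpha} + 1 \leq e + 1$ (using $\log(1+\alpha) \leq \alpha$). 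Summing over $|T|\leq m$ coordinates yields $\sum_{i \in T} |\hw_i - w_i|\, a_i a_i^\top \preceq O(m(1+\alpha) \sqrt{\beps}) \cdot \mA^\top \mW \mA$. Combining the two regimes gives $\gamma = O(m \sqrt{\beps}(\alpha + \alpha^{-1}))$; the hypothesis $\beps \leq 1/(100 m^2 (\alpha + \alpha^{-1})^2)$ provides enough slack to keep $\gamma \leq 1/2$ and $2\gamma \leq 20 \sqrt{\beps}\, m (\alpha + \alpha^{-1}) = \delta$, delivering the claim.
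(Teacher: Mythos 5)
Your proposal is correct and follows essentially the same route as the paper's proof: the same reduction of $\sigma(\hw)\approx_\delta \hw^{1+\alpha}$ to spectral closeness of $\mA^\top \widehat{\mW}\mA$ and $\mA^\top \mW\mA$, and the same coordinate split at the threshold $w_i^{1+\alpha} = \sqrt{\beps}$, with the large coordinates controlled via $\|\sigma(w)-w^{1+\alpha}\|_\infty\leq\beps$ and the small ones via the rounding condition. The only (cosmetic) difference is that you bound the small-coordinate contribution by the per-row Cauchy--Schwarz inequality $a_ia_i^\top \preceq (\sigma_i(w)/w_i)\,\mA^\top\mW\mA$ summed over rows, whereas the paper packages the identical estimate as a trace bound $\Tr(X(\mA^\top\mW\mA)^{-1})\leq c \Rightarrow X\preceq c\,\mA^\top\mW\mA$; your constants land within a small factor of the stated $\delta$, which is the same level of precision the paper itself works at.
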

\begin{proof} Our strategy to prove $\sigma(\hw) \approx_\delta \hw^{1+\alpha}$ involves first noting that this is the same as proving $\hw^{-1} \cdot \sigma(\hw) \approx_\delta \hw^\alpha$ and, from the definition of $\hw$ in the statement of the lemma, to instead prove $\mA^\top \mWW \mA \approx_\delta \mA^\top \mW \mA$. 

To this end, we split $\mW$ into two matrices based on the size of its coordinates, setting the following notation. Define $\mWl$ to be the diagonal matrix $\mW$ with zeroes at indices corresponding to $w> \eta$, and $\mWWl$ to be the diagonal matrix $\mWW$ with zeroes at indices corresponding to $w> \eta$. We first show that $\mA^\top \mWWl \mA$ and $\mA^\top \mWl \mA$ are small compared to $\mA^\top \mW \mA$ and can therefore be ignored in the preceding desired approximation. We then prove that for $w>\eta$, we have $w \approx_\delta \hw$. This proof technique is inspired by Lemma 4 of \cite{vaidya1989new}.  

First, we prove that $\mA^\top \mWWl \mA$ is small as compared to $\mA^\top \mWg \mA$. Since \eqref{eq_grad_cond_for_prog} is satisfied, it means   $$a_i^\top (\mA^\top \mW \mA)^{-1} a_i =\sigma_i(w) \cdot w_i^{-1} \leq (1+\alpha) w_i^\alpha.$$ Combining this with the definition of $\hw_i$ as in the statement of the lemma, we may use non-negativity of $\alpha$ to derive \[\hw_i \leq (1+\alpha)^{1/\alpha} w_i \leq 3w_i. \numberthis\label{eq_wbar_bound}\] We apply this inequality in the following expression to obtain 
\begin{align*}
    \Tr ((\mA^\top \mWWl \mA) (\mA^\top \mW \mA)^{-1}) &= \sum_{w_i \leq \eta} \hw_i (a_i^\top (\mA^\top \mW \mA)^{-1} a_i)\\
                    &= \sum_{w_i \leq \eta} (a_i^\top (\mA^\top \mW \mA)^{-1} a_i)^{1+ 1/\alpha} \\
                    &\leq (1+\alpha)^{1 + 1/\alpha} \sum_{w_i \leq \eta} w_i^{1+\alpha} \\
                    &\leq 3(1+\alpha) m \eta^{1+\alpha}. \numberthis\label{eq_atwba_small}
\end{align*} This implies that\footnote{Given $X, Y\succeq 0$, we have $Y^{1/2} X Y^{1/2} \succeq 0$. Then, if $\Tr(XY) \leq 1$, we have $\Tr(Y^{1/2} X Y^{1/2}) \leq 1$, and combining these with the previous matrix inequality, we  conclude that $Y^{1/2} X Y^{1/2} \preceq I$, which implies that $X\preceq Y^{-1}$. } \[ \mA^\top \mWWl \mA \preceq 3(1+\alpha)m \eta^{1+\alpha} \mA^\top \mW \mA. \numberthis\label{eq_eq_atwba_bound}\]

Our next goal is to bound $\mA^\top \mWWg \mA$ in terms of $\mA^\top \mW \mA$, which we do by first bounding it in terms of $\mA^\top \mWg \mA$ and then bounding $\mA^\top \mWg \mA$ in terms of $\mA^\top \mW \mA$. By definition, $\hw_i^\alpha = \sigma_i(w) \cdot w_i^{-1}$. Further, by assumption, $\|\sigma(w) - w^{1+\alpha}\|_{\infty} \leq \beps$. Therefore, for any $w_i \geq \eta$  \[\hw_i^\alpha \leq (w_i^{1+\alpha} + \beps) \cdot w_i^{-1} \leq  (1+\beps/\eta^{1+\alpha})w_i^{1+\alpha} \cdot w_i^{-1} = (1+\beps/\eta^{1+\alpha}) w_i^\alpha, \] and \[\hw_i^\alpha \geq (w_i^{1+\alpha} - \beps) \cdot w_i^{-1} \geq  (1-\beps/\eta^{1+\alpha})w_i^{1+\alpha} \cdot w_i^{-1} = (1-\beps/\eta^{1+\alpha})w_i^\alpha.\] By our choice of $\beps$, for $w_i \geq \eta$, we have\[\left( 1- \frac{2\beps}{\alpha \eta^{1+\alpha}} \right) w_i \leq \hw_i \leq \left( 1+ \frac{2\beps}{\alpha \eta^{1+\alpha}} \right) w_i.\numberthis\label{eq_wbg_bound}\]

Further, we have the following inequality: 
 \[\mA^\top \mWg \mA \preceq \mA^\top \mW \mA. \numberthis\label{eq_atwa_bound}\]  Hence, we can combine (\ref{eq_atwa_bound}), (\ref{eq_wbg_bound}), and (\ref{eq_eq_atwba_bound}) to see that 
\begin{align*}
    \mA^\top \mWW \mA &= \mA^\top \mWWg \mA + \mA^\top \mWWl \mA\\
                    &\preceq \left( 1+ \frac{2\beps}{\alpha \eta^{1+\alpha}}\right) \mA^\top \mWg \mA + 3(1+\alpha) m \eta^{1+\alpha} \mA^\top \mW \mA\\
                    &\preceq \mA^\top \mW \mA \left(   1+ \frac{2\beps}{\alpha \eta^{1+\alpha}} +  3(1+\alpha)m \eta^{1+\alpha} \right). 
\end{align*}
Set $\eta^{1+\alpha} = \sqrt{\beps}$ for the  upper bound. 

For the lower bound, we bound $\mA^\top \mWl \mA$ and, therefore, also $\mA^\top \mWg \mA$. Observe that 
\begin{align*}
    \Tr ((\mA^\top \mWl \mA)(\mA^\top \mW \mA)^{-1}) &= \sum_{w_i \leq \eta}  w_i a_i^\top (\mA^\top \mW \mA)^{-1} a_i = \sum_{w_i \leq \eta} \sigma_i(w) \\ 
                                                  &\leq \sum_{w_i \leq \eta} (w_i^{1+\alpha} + \beps) \leq  m (\eta^{1+\alpha} + \beps), 
\end{align*} where the second step is by $\|\sigma(w) - w^{1+\alpha} \|_\infty \leq \beps$, as assumed in the lemma. This implies that\[ \mA^\top \mWl \mA \preceq m (\eta^{1+\alpha} + \beps) \mA^\top \mW \mA,\] and therefore that \[ \mA^\top \mWg \mA \succeq (1-m(\eta^{1+\alpha} + \beps)) \mA^\top \mW \mA.\]  Repeating the method for the upper bound then finishes the proof. 
\end{proof}
\subsection{From Approximate Optimality to Approximate Lewis Weights}
In this section, we go from the previous notion of approximation to the one we finally seek in \eqref{def-lewiswts-obj-approxLW}. Specifically,  we show that if $\sigma(w)\approx_{\beta}w^{1+\alpha}$, then
$w\approx_{O((\beta/\alpha)\sqrt{n})}\lw$. To prove this,
we first give a technical result. We recall notation stated in Section~\ref{sec:notation}: for any projection matrix $\mP(w) \in \R^{m \times m}$, we have the vector of leverage scores $\sigma(w) = \vDiag(\mP(w))$.
\begin{claim}
\label{lem:proj_fact} For any projection matrix $\mP(w)\in\R^{m\times m}$,  $\alpha\geq0$,
and vector $x\in\R^{m}$, we have that 
\[
\norm{\left[\mP(w)^{(2)}+\alpha\mSigma(w)\right]^{-1}\mSigma(w) x}_{\infty}
\leq\frac{1}{\alpha}\norm{x}_{\infty}+\frac{1}{\alpha^2}\norm{ x}_{\mSigma(w)}
\leq\left(\frac{1+\sqrt{n}/\alpha}{\alpha}\right)\norm{x}_{\infty}
\]
\end{claim}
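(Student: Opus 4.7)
Let $y\defeq[\mP(w)^{(2)}+\alpha\mSigma(w)]^{-1}\mSigma(w)\,x$; I will prove the first (substantive) inequality, after which the second follows immediately from $\norm{x}_{\mSigma(w)}\le\sqrt{n}\,\norm{x}_\infty$, which in turn uses $\sum_i\sigma_i(w)\le n$ from Fact~\ref{fact_projmatrices}. For brevity write $\sigma_i\defeq\sigma_i(w)$, $\mathbf{P}\defeq\mP(w)$, and $\mathbf{M}\defeq\mathbf{P}^{(2)}+\alpha\mSigma(w)$.

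The plan begins by reading off the defining identity $\mathbf{M}y=\mSigma(w)\,x$ row by row: when $\sigma_i>0$ this rearranges to
\[
y_i=\frac{x_i}{\alpha}-\frac{[\mathbf{P}^{(2)}y]_i}{\alpha\sigma_i}.
\]
(The $\sigma_i=0$ case is degenerate: $\mathbf{P}\succeq 0$ with $\mathbf{P}_{ii}=0$ forces the entire $i$-th row of $\mathbf{P}$ to vanish, so both $[\mSigma(w)x]_i$ and the $i$-th row of $\mathbf{M}$ are zero, and coordinate $i$ can be handled by a trivial limiting argument.) The goal is then to bound the second term by $\alpha^{-1}\norm{y}_{\mSigma(w)}$. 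For this I would use two structural facts: $\mathbf{P}\succeq 0$ together with $\mathbf{P}_{ii}=\sigma_i$ makes every $2\times 2$ principal minor nonnegative, so $\mathbf{P}_{ij}^2\le\sigma_i\sigma_j$; and since $\mathbf{P}$ is a projection, $\sum_j\mathbf{P}_{ij}^2=[\mathbf{P}^2]_{ii}=\sigma_i$. Applying Cauchy--Schwarz with the split $\mathbf{P}_{ij}^2 y_j=\mathbf{P}_{ij}\cdot(\mathbf{P}_{ij}y_j)$ and invoking both facts yields
\[
[\mathbf{P}^{(2)}y]_i^{\,2}\;\le\;\Big(\sum_j\mathbf{P}_{ij}^2\Big)\Big(\sum_j\mathbf{P}_{ij}^2 y_j^2\Big)\;\le\;\sigma_i\cdot\sigma_i\,\norm{y}_{\mSigma(w)}^{\,2},
\]
so $|[\mathbf{P}^{(2)}y]_i|\le\sigma_i\norm{y}_{\mSigma(w)}$, and taking the maximum over $i$ gives the bootstrap $\norm{y}_\infty\le\alpha^{-1}\norm{x}_\infty+\alpha^{-1}\norm{y}_{\mSigma(w)}$.

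It then remains to control $\norm{y}_{\mSigma(w)}$ in terms of $\norm{x}_{\mSigma(w)}$. Here I would use $\mathbf{M}\succeq\alpha\mSigma(w)$ (from $\mathbf{P}^{(2)}\succeq 0$), which conjugates to $\mSigma(w)^{1/2}\mathbf{M}^{-1}\mSigma(w)^{1/2}\preceq\alpha^{-1}I$. Writing
\[
\norm{y}_{\mSigma(w)}^{\,2}=(\mSigma(w)^{1/2}x)^{\top}\big(\mSigma(w)^{1/2}\mathbf{M}^{-1}\mSigma(w)^{1/2}\big)^{2}(\mSigma(w)^{1/2}x)
\]
and applying the operator-norm bound on the symmetric PSD matrix in the middle yields $\norm{y}_{\mSigma(w)}\le\alpha^{-1}\norm{x}_{\mSigma(w)}$. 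Substituting this into the bootstrap above gives the first stated inequality, and the second is then immediate from $\norm{x}_{\mSigma(w)}\le\sqrt{n}\,\norm{x}_\infty$. I expect the main conceptual step to be identifying the right Cauchy--Schwarz split: using $\mathbf{P}_{ij}^2$ as both ``factors'' lets the projection identity $\sum_j\mathbf{P}_{ij}^2=\sigma_i$ exactly cancel the $1/\sigma_i$ factor in the coordinate equation. A more naive split leaves a $\sigma_i^{-1/2}$ in the denominator and produces an estimate too weak for the claimed bound.
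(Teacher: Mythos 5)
Your proposal is correct and follows essentially the same route as the paper: derive the coordinate identity $y_i=\alpha^{-1}x_i-(\alpha\sigma_i)^{-1}[\mP(w)^{(2)}y]_i$, bound $|[\mP(w)^{(2)}y]_i|\le\sigma_i(w)\norm{y}_{\mSigma(w)}$, and separately show $\norm{y}_{\mSigma(w)}\le\alpha^{-1}\norm{x}_{\mSigma(w)}$ via $\mP(w)^{(2)}+\alpha\mSigma(w)\succeq\alpha\mSigma(w)$. The only difference is that the paper outsources the bound $\norm{\mSigma(w)^{-1}\mP(w)^{(2)}z}_\infty\le\norm{z}_{\mSigma(w)}$ to a citation (Lemma 47 of \cite{DBLP:journals/corr/abs-1910-08033}), whereas you prove it directly via the Cauchy--Schwarz split $\mathbf{P}_{ij}\cdot(\mathbf{P}_{ij}y_j)$ together with $\sum_j\mathbf{P}_{ij}^2=\sigma_i$ and $\mathbf{P}_{ij}^2\le\sigma_i\sigma_j$ — a correct and self-contained replacement.
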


\begin{proof}
 
Let $y\defeq\left[\mP(w)^{(2)}+\alpha\mSigma(w)\right]^{-1}\mSigma(w) x$.
Since $\mZero\preceq\mP(w)^{(2)}\preceq\mSigma(w)$ (Fact~\ref{fact_projmatrices}), we have that $\mSigma(w)\preceq\frac{1}{\alpha}\left[\mP(w)^{(2)}+\alpha\mSigma(w)\right]$
and $(\mP(w)^{(2)}+\alpha\mSigma(w))^{-1}\preceq\alpha^{-1}\mSigma(w)^{-1}$.
Consequently, taking norms in terms of these matrices gives 
\begin{align}
\norm{y}_{\mSigma(w)}=\norm{\left[\mP(w)^{(2)}+\alpha\mSigma(w)\right]^{-1}\mSigma(w) x}_{\mSigma(w)} & \leq\frac{1}{\sqrt{\alpha}}\norm{\mSigma(w) x}_{\left[\mP(w)^{(2)}+\alpha\mSigma(w)\right]^{-1}}\leq\frac{1}{\alpha}\norm{x}_{\mSigma(w)}\,.\label{eq:proj_bound_1}
\end{align}
Next, since by Lemma 47 of \cite{DBLP:journals/corr/abs-1910-08033}, $\norm{\mSigma(w)^{-1}\mP(w)^{(2)}z}_{\infty}\leq\norm{z}_{\mSigma(w)}$
  for all $z\in\R^{m}$, we see that $\left|[\mP(w)^{(2)}y]_{i}\right|\leq\sigma_{i}(w)\norm{y}_{\mSigma(w)}$ for all $i \in [m]$, 
and since by definition of $y$, we have $[(\mP(w)^{(2)}+\alpha\mSigma(w))y]_{i}=\sigma_{i}(w) x_{i}$ for
all $i\in[m]$, we have that 
\begin{equation}
\norm{y}_{\infty}
=\max_{i\in[m]}|y_{i}|
=\max_{i\in[m]}\left|\frac{1}{\alpha}x_{i}+\frac{1}{\alpha\sigma_{i}(w)}\left[\mP(w)^{(2)}y\right]_{i}\right|\leq\frac{1}{\alpha}\norm{x}_{\infty}+\frac{1}{\alpha}\norm{y}_{\mSigma(w)}\,.\label{eq:proj_bound_2}
\end{equation}
Combining \eqref{eq:proj_bound_1} and \eqref{eq:proj_bound_2} and using
that $\sum_{i\in[m]}\sigma_{i}(w) \leq n$ yields the claim. 
\end{proof}

\begin{lemma}\label{lem:lastlemma}
Let 
$\widehat{w}\in\R_{>0}^{m}$ be a vector that satisfies approximate optimality of \eqref{def_objective} in the following sense: $$\sigma(\widehat{w})=\widehat{\mW}^{1+\alpha}v, \text{ 
for }\exp(-\mu)\ones\leq v\leq\exp(\mu)\ones.$$ Then, $\widehat{w}$ is also coordinate-wise multiplicatively close to $\lw$, the true vector of Lewis weights, as formalized below.
\[
\exp\left(-\frac{1}{\alpha}(1+\sqrt{n}/\alpha)\mu\right)\lw\leq \widehat{w}\leq\exp\left(\frac{1}{\alpha}(1+\sqrt{n}/\alpha)\mu\right)\lw\,.
\]
\end{lemma}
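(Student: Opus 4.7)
The plan is to parametrize a homotopy from $\lw$ to $\widehat{w}$ and bound the velocity in log-coordinates via Claim~\ref{lem:proj_fact}. Introduce the map $\psi(u) \defeq \log \sigma(e^u) - (1+\alpha)\, u$, so that the Lewis-weight optimality condition $\sigma(\lw) = \lw^{1+\alpha}$ reads $\psi(\log \lw) = 0$, while the hypothesis gives $\psi(\log \widehat{w}) = \log v$ with $\|\log v\|_\infty \leq \mu$. The goal is then simply to invert $\psi$ up to the size of the output, passing the control through the operator in Claim~\ref{lem:proj_fact}.

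Next I would compute the Jacobian of $\psi$ in the log-coordinate $u = \log w$. Using $\partial_{w_j}(\mA^\top \mW \mA)^{-1} = -(\mA^\top \mW \mA)^{-1} a_j a_j^\top (\mA^\top \mW \mA)^{-1}$ and the convention $\mP(w)_{ij} = w_i^{1/2} w_j^{1/2}\, a_i^\top (\mA^\top \mW \mA)^{-1} a_j$ from Section~\ref{sec:notation}, a short calculation gives
\[
J_\psi(u) \;=\; -\alpha \mI \,-\, \mSigma(w)^{-1} \mP(w)^{(2)}, \qquad w = e^u,
\]
so that $-J_\psi(u) = \mSigma(w)^{-1}\!\bigl[\mP(w)^{(2)} + \alpha \mSigma(w)\bigr]$ is invertible for every $w \in \R^m_{>0}$ whenever $\alpha > 0$, and is bounded away from singular on any compact subset of the positive orthant (using Fact~\ref{fact_projmatrices}).

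With this in hand, define the path $w(s) \in \R^m_{>0}$ for $s \in [0,1]$ implicitly by
\[
\sigma(w(s)) \;=\; w(s)^{1+\alpha}\, v^{s},
\]
which, by the implicit function theorem and the invertibility of $J_\psi$, yields a smooth curve with $w(0) = \lw$ and $w(1) = \widehat{w}$. Differentiating in $s$ and writing $u(s) = \log w(s)$ gives $J_\psi(u(s))\, \dot u(s) = \log v$, i.e.
\[
\dot u(s) \;=\; \bigl[\mP(w(s))^{(2)} + \alpha \mSigma(w(s))\bigr]^{-1} \mSigma(w(s))\, (-\log v).
\]
Applying Claim~\ref{lem:proj_fact} at each $s$ with $x = -\log v$ yields $\|\dot u(s)\|_\infty \leq \tfrac{1+\sqrt n/\alpha}{\alpha}\, \|\log v\|_\infty \leq \tfrac{1+\sqrt n/\alpha}{\alpha}\, \mu$, and integrating from $0$ to $1$ produces $\|\log \widehat{w} - \log \lw\|_\infty \leq \tfrac{1+\sqrt n/\alpha}{\alpha}\, \mu$, which exponentiates to the multiplicative bound claimed.

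The main obstacle is a technical one: ensuring the path $w(s)$ exists on all of $[0,1]$ and does not escape to the boundary of $\R^m_{>0}$. The velocity bound $\|\dot u(s)\|_\infty \leq \tfrac{1+\sqrt n/\alpha}{\alpha}\, \mu$ is itself an a priori estimate that constrains $\log w(s)$ to a fixed compact set, so standard ODE continuation (combined with $J_\psi$ being invertible throughout) yields global existence and uniqueness of the curve; this also confirms that $\widehat{w}$ is the unique root of the perturbed equation in the relevant neighborhood of $\lw$. All remaining steps are routine chain-rule computations and a direct invocation of Claim~\ref{lem:proj_fact}.
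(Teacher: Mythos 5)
Your proof follows essentially the same homotopy argument as the paper: both define a path from $\lw$ to $\widehat{w}$ along the family of equations $\sigma(w) = w^{1+\alpha}v^t$, differentiate in the path parameter, and invoke Claim~\ref{lem:proj_fact} to bound $\|\mW_t^{-1}\dot w_t\|_\infty$ before integrating. Your Jacobian computation is correct and your ODE for $\dot u(s)$ coincides with the paper's expression $\mW_t^{-1}[\frac{d}{dt}w_t] = -[\mP(w_t)^{(2)}+\alpha\mSigma(w_t)]^{-1}\mSigma(w_t)\ln(v)$ after using $\sigma(w_t)=\mW_t^{1+\alpha}v_t$.

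One point deserves sharpening. You construct the curve $w(s)$ starting from $\lw$ via the implicit function theorem and continuation, and you need $w(1)=\widehat{w}$; but local uniqueness from the IFT only rules out other solutions near the trajectory, and a priori $\widehat{w}$ is an arbitrary positive vector satisfying the perturbed equation, possibly far from $\lw$ — establishing that it isn't far is the very content of the lemma. The paper sidesteps this cleanly by defining $w_t$ as the unique minimizer of the strictly convex program $f_t(w) = -\log\det(\mA^\top\mW\mA) + \frac{1}{1+\alpha}\sum_i [v_t]_i w_i^{1+\alpha}$, whose first-order condition is exactly $\sigma(w)=w^{1+\alpha}v_t$; since that condition has a unique solution in $\R^m_{>0}$ (strict convexity), $w_1=\widehat{w}$ follows immediately. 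You should add this observation: global uniqueness of the root of the perturbed equation comes from convexity, not from the local IFT argument, and that is what pins $w(1)$ to $\widehat{w}$. The remainder of your argument, including the a priori velocity estimate to keep $\log w(s)$ in a compact set, is sound.
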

\begin{proof}
For all $t\in[0,1]$, let $[v_{t}]_i=[v_i^t]$ so that $v_{1}=v$
and $v_{0}=\ones$. Further, for all $t\in[0,1]$, let $w_{t}$ be
the unique solution to 
\[
w_{t}=\argmin_{w\in\R_{> 0}^{m}}f_{t}(w)\defeq-\log\det\left(\mA^{\top}\mW\mA\right)+\frac{1}{1+\alpha}\sum_{i\in[m]}[v_{t}]_{i}w_{i}^{1+\alpha}.\numberthis\label{def-wt-app}
\]
Then we have the following gradients. 
\begin{align*}
\nabla_{w}f_{t}(w) & =-\mW^{-1}\sigma(w)+\mW^{\alpha}v_{t}\,,\\
\nabla_{w} (\frac{d}{dt}f_{t})(w) & =\mW^{\alpha}\frac{d}{dt}v_{t}=\mW^{\alpha}v_{t}\ln(v)\,\numberthis\label{app-eq-grads-hess}\\
\nabla^2_{ww}f_{t}(w) &=\mW^{-1}\left[\mP(w)^{(2)} +\alpha\mW^{1+\alpha}\mV\right]\mW^{-1}\,\numberthis\label{app-eq-grads-hess-2}.
\end{align*}
Consequently, by optimality of $w_t$ as defined in \eqref{def-wt-app}, we have $
\zeros=\nabla_{w}f_{t}(w_t)=-\mW_{t}^{-1}\sigma(w_t)+\mW_{t}^{\alpha}v_{t}.$
Rearranging the terms of this equation yields that \[\sigma(w_t)=\mW_{t}^{1+\alpha}v_{t},\numberthis\label{app-sigmawt-eq}\] 
and therefore $w_{1}=\widehat{w}$ and $w_{0}=\lw$. To prove the
lemma, it therefore suffices to bound
\begin{equation}
\ln(\widehat{w}/\lw)=\ln(w_1/w_0)=\int_{t=0}^{1}\left[\frac{d}{dt}\ln(w_{t})\right]dt=\int_{t=0}^{1}\mW_{t}^{-1}\left[\frac{d}{dt}w_{t}\right]dt\,.\label{eq:int_form}
\end{equation}To bound \eqref{eq:int_form}, it remains to compute $\frac{d}{dt}w_{t}$
and apply Claim~\ref{lem:proj_fact}. To do this, note that 
\[
\mZero=\frac{d}{dt}\grad_{w}\left[f_t(w_{t})\right]=\nabla_{w}(\frac{d}{dt}f_{t})(w_t)+\nabla^2_{ww}f_{t}(w_t)\cdot\frac{d}{dt}w_{t}\,.
\]
Using that $\mP(w_{t})^{(2)}+\mW_{t}^{1+\alpha}\mV_{t}\succ\mZero$, 
we have, by rearranging the above equation and applying \eqref{app-eq-grads-hess} and \eqref{app-eq-grads-hess-2} that 
\begin{align*}
\frac{d}{dt}w_{t} & =-\left[\nabla^2_{ww}f_t(w_{t})\right]^{-1} \cdot \left[\nabla_{w}(\frac{d}{dt}f_{t})(w_t)\right] =-\mW_{t}\left[\mP(w_t)^{(2)}+\alpha\mW_{t}^{1+\alpha}\mV_{t}\right]^{-1}\mW_{t}^{1+\alpha}v_{t}\ln(v)\,.\numberthis\label{app-der-bound}
\end{align*}
Applying \eqref{app-sigmawt-eq} to \eqref{app-der-bound}, we have that 
\begin{align*}
\mW_{t}^{-1}\left[\frac{d}{dt}w_{t}\right] & =-\left[\mP(w_{t})^{(2)}+\alpha\mSigma(w_t)\right]^{-1}\mSigma(w_{t})\ln(v)\,.
\end{align*}
Applying Claim~\ref{lem:proj_fact} to the above equality, substituting in \eqref{eq:int_form} and $\norm{\ln(v)}_{\infty}\leq\mu$ therefore yields  
\begin{align*} \norm{\ln(\widehat{w}/\lw)}_{\infty}=\norm{\ln(w_1/w_0)}_{\infty}
& \leq\int_{t=0}^{1}\norm{\mW_{t}^{-1}\left[\frac{d}{dt}w_{t}\right]}_{\infty}dt\leq\int_{t=0}^{1}\left(\frac{1+\sqrt{n}/\alpha}{\alpha}\right)\mu dt\,.
\end{align*}
\end{proof}
\subsection{From Optimization Problem to Approximate Lewis Weights}

\LEMroundingToLW*
\begin{proof}
We are given a vector $w\in \R^m$ satisfying $\fobj(\wopt)\leq\fobj(w)\leq\fobj(\wopt)+\teps$.  Then by  Lemma \ref{lem_subopt}, we have that $ \frac{(\sigma_i(w) - {w}_i^{1+\alpha})^2}{\sigma_i(w) + {w}_i^{1+\alpha}} \leq \teps$ for each $i\in [m]$. This bound implies that ${w}_i \leq 3$ for all $i$ because, if not, then because of $\sigma_i(w) \in [0, 1]$ and the decreasing nature of $(x-a)^2/(x+a)$ over $x\in [0, 1]$ for a fixed $a\geq 3$, we obtain  $\frac{(\sigma_i(w) - {w}_i^{1+\alpha})^2}{\sigma_i(w)+{w}_i^{1+\alpha}}\geq \frac{(1-{w}_i^{1+\alpha})^2}{1+{w}_i^{1+\alpha}}\geq 1$, a contradiction. Therefore
$\|\sigma(w) - w^{1+\alpha}\|_\infty \leq 2\sqrt{\teps}$. Coupled with the provided guarantee $\rho_{\max}(w) \leq 1+\alpha$, we see that the requirements of Lemma~\ref{lem_rounding} are met with $\beps = 2\sqrt{\teps}$, for  $\teps \defeq \frac{\heps^4}{(25m (\alpha + \alpha^{-1}))^4}$, and Algorithm \ref{alg_ourAlg} therefore guarantees a $\hw$ satisfying $\sigma(\hw) \approx_{\heps} \hw^{1+\alpha}$. Therefore, we can now apply Lemma~\ref{lem:lastlemma} with $\mu = \heps$, and choosing $\heps = \tfrac{\alpha^2}{\alpha + \sqrt{n}}\eps$ lets us conclude that $\hw_i \approx_{\eps} \lw_i$, as claimed. 
\end{proof}

\section{A Geometric View of Rounding}\label{explain-rounding-name}
At the end of Algorithms \ref{alg_roundparallel} and \ref{alg_roundsequential}, the iterate $w$ satisfies the condition  $\rho_{\max}(w)\leq1+\alpha$. 
We now show the geometry implied by the preceding condition, thereby provide the reason behind the terminology ``rounding.'' 

\begin{lemma}
Given $w\in\mathbb{R}_{>0}^{m}$ such that $\rho_{\max}(w)\leq1+\alpha$. Define the ellipsoid  $\mathcal{E}(w):=\{x:x^{\top}\mA^{\top}\mW \mA x\leq1\}$. Then, we have that
\[
\mathcal{E}(w)\subset\{x \in \R^n ~ | ~ \|\mW^{-\alpha/2}\mA x\|_{\infty}\leq\sqrt{1+\alpha}\}.
\]
\end{lemma}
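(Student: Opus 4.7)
The plan is to unpack the rounding condition $\rho_{\max}(w) \le 1+\alpha$ in terms of leverage-score-like quantities, and then apply a generalized Cauchy--Schwarz (Mahalanobis) inequality to each row $a_i$. Recall from \eqref{def-rho-notation} and \eqref{def-levscoreW} that
\[
\rho_i(w) \;=\; \frac{\sigma_i(w)}{w_i^{1+\alpha}} \;=\; \frac{a_i^\top (\mathbf{A}^\top \mathbf{W} \mathbf{A})^{-1} a_i}{w_i^{\alpha}},
\]
so the hypothesis is equivalent to the coordinate-wise bound
\[
a_i^\top (\mathbf{A}^\top \mathbf{W} \mathbf{A})^{-1} a_i \;\le\; (1+\alpha)\, w_i^{\alpha}, \qquad \text{for every } i \in [m].
\]

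Next, I would fix an arbitrary $x \in \mathcal{E}(w)$, so that $x^\top \mathbf{A}^\top \mathbf{W} \mathbf{A} x \le 1$, and estimate $|a_i^\top x|$ for each $i$ using the Cauchy--Schwarz inequality in the inner product induced by $(\mathbf{A}^\top \mathbf{W} \mathbf{A})^{-1}$ and $\mathbf{A}^\top \mathbf{W} \mathbf{A}$. Concretely, writing $a_i^\top x = \bigl( (\mathbf{A}^\top \mathbf{W} \mathbf{A})^{-1/2} a_i\bigr)^\top \bigl((\mathbf{A}^\top \mathbf{W} \mathbf{A})^{1/2} x\bigr)$ and squaring gives
\[
(a_i^\top x)^2 \;\le\; \bigl[a_i^\top (\mathbf{A}^\top \mathbf{W} \mathbf{A})^{-1} a_i\bigr]\cdot \bigl[x^\top \mathbf{A}^\top \mathbf{W} \mathbf{A}\, x\bigr] \;\le\; (1+\alpha)\, w_i^{\alpha}.
\]

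Dividing by $w_i^{\alpha}$ and taking square roots yields $w_i^{-\alpha/2}|a_i^\top x| \le \sqrt{1+\alpha}$ for every $i$, which is precisely $\|\mathbf{W}^{-\alpha/2}\mathbf{A} x\|_\infty \le \sqrt{1+\alpha}$. Since $x \in \mathcal{E}(w)$ was arbitrary, this establishes the claimed inclusion. There is essentially no obstacle here: the entire content of the lemma is that the rounding condition is exactly the statement that each row vector $a_i$ has controlled $(\mathbf{A}^\top \mathbf{W} \mathbf{A})^{-1}$-norm relative to $w_i^{\alpha/2}$, and Cauchy--Schwarz then transfers this control from the ellipsoid $\mathcal{E}(w)$ to the $\ell_\infty$-slab defined by the rows.
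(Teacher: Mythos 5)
Your proof is correct and is essentially the paper's argument: both split via the inner product induced by $\mathbf{A}^\top \mathbf{W} \mathbf{A}$ and apply Cauchy--Schwarz, the only cosmetic difference being that the paper folds the factor $w_i^{-\alpha/2}$ into the vector before applying Cauchy--Schwarz while you apply it afterwards.
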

\begin{proof}
Consider any point $x\in \mathcal{E}(w)$. Then,
 by Cauchy-Schwarz inequality  and $\rho_{\max}(w) \leq 1+ \alpha$,
\begin{align*}
\|\mW^{-\alpha/2}\mA x\|_{\infty} & =\max_{i\in [m] }e_{i}^{\top}\mW^{-\alpha/2}\mA x = \max_{i\in [m] }e_{i}^{\top}\mW^{-\alpha/2}\mA (\mA^{\top}\mW \mA)^{-\frac{1}{2}}(\mA^{\top}\mW \mA)^{\frac{1}{2}}x\\
 & \leq\max_{i\in [m] }\sqrt{e_{i}^{\top}\mW^{-\alpha/2}\mA(\mA^{\top}\mW \mA)^{-1}\mA^{\top}\mW^{-\alpha/2}e_{i}}\sqrt{x^{\top}\mA^{\top}\mW \mA x}\\
 & \leq\max_{i\in [m]}\sqrt{e_{i}^{\top}\mW^{-\alpha/2}\mA(\mA^{\top}\mW\mA)^{-1}\mA^{\top}\mW^{-\alpha/2}e_{i}} = \max_{i\in [m]} \sqrt{\frac{\sigma_{i}(w)}{w_{i}^{1+\alpha}}}\leq\sqrt{1+\alpha}.
\end{align*} 
\end{proof}

\section{Explanations of Runtimes in Prior Work}\label{cohenpeng-determinant}
The convex program \eqref{opt-lewisell} formulated by \cite{CohenPeng2015} has a variable size of $n^2$. Therefore, by \cite{lee2015faster}, the number of iterations to solve it using the cutting plane method is $O(n^2\log (n\eps^{-1})$,  each iteration  computing $a_i^\top \mM a_i$ for $i \in [m]$. This can be computed by multiplying an $n\times n$ matrix with an $n\times m$ matrix, which costs between $O(mn)$ (at least the size of the larger input matrix) and $O(mn^2)$ (each entry of the resulting  $m \times n$ matrix obtained by an inner product of length $n$ vectors). Further, there is  at least a total of $O(n^6)$ additional work done by the cutting plane method. This gives us a cost of at least $n^2 (mn + n^4)$.  The runtime of \cite{YinTatThesis} follows from Theorem $5.3.4$. 

\end{appendices}

\end{document}